\newcounter{draft} % 1 means draft, 0 means publishable
\def\subsection{\@startsection {subsection}{2}{\z@}{.8ex
plus 1ex} {1ex}{\sc}}
\newenvironment{proof}{\vspace*{2ex}\noindent {\em Proof:}
}{\hfill $\Box$ \\[2ex]}
\newcommand{\new}{\newcommand}
\newcounter{letter}
\new{\Nabla}{\bigtriangledown}
\new{\Ricci}{\mathrm{Ricci}}
\new{\Pfaff}{\mathrm{Pfaff}}
\new{\Hom}{\mathrm{Hom}}
\new{\iso}{\cong}
\new{\End}{\mathrm{End}}
\new{\maps}{\rightarrow}
\new{\tr}{\mathrm{tr}}
\new{\str}{\mathrm{str}}
\new{\Str}{\mathrm{Str}}
\new{\scalar}{{\mathfrak{r}}}
\new{\defequals}{\stackrel{\mathrm{def}}{=}}
\new{\NN}{\mathbb{N}}
\new{\RR}{\mathbb{R}}
\new{\ZZ}{\mathbb{Z}}
\new{\CC}{\mathbb{C}}
\renewcommand{\vec}{\mathbf}
\new{\gap}{{\;}}
\new{\triplegap}{{\;\;\;\;\;}}
\new{\doublegap}{{\;\;\;}}
\new{\OO}{{\mathcal{O}}}
\new{\PP}{{\mathcal{P}}}
\new{\spinor}{{\mathcal{S}}}
\new{\tspinor}{{\mathcal{T}}}
\newcommand{\Clifford}{{\mathcal{C}}}
\new{\hol}{\operatorname{Hol}}
\new{\dd}{{\mathfrak{d}}}
\new{\dmeas}{{\mathbf{d}}}
\new{\tensor}{\otimes}
\new{\bracket}[1]{\left\langle #1 \right \rangle}
\new{\parens}[1]{\!\left( #1 \right)}
\new{\sbrace}[1]{\!\left[ #1 \right]}
\new{\cbrace}[1]{\!\left\{ #1 \right\}}
\new{\abs}[1]{\left| #1 \r|}
\new{\dirac}{\textsf{\textbf{D}}}
\new{\Vol}{\operatorname{Vol}}
\new{\tnm}[1]{{(#1)}}
\new{\VV}{{\mathcal{V}}}
\new{\XX}{{\mathcal{X}}}
\new{\WW}{\mathcal{W}}
\new{\EE}{{\mathcal{E}}}
\new{\pt}{\mathfrak{P}}
\new{\bint}{\oint}
\new{\pfend}{\noindent \hfill $\Box$}
\new{\Rlimit}{{\mathsf{R}}}
\new{\Flimit}{{\mathsf{F}}}
\new{\beas}{\begin{eqnarray*}}
\new{\eeas}{\end{eqnarray*}}
\renewcommand{\r}{\right}
\renewcommand{\l}{\left}
\new{\norm}[1]{\l|\!\l| #1 \r|\!\r|}
\new{\para}[2]{\subsection{#1. }
\label{par:#2} \ifnum \thedraft=1 \marginpar{\scriptsize{par:#2}} \fi}
\new{\be}{\begin{equation}  
}
\new{\ee}[1]{%ref
\label{eq:#1}
\end{equation} \ifnum \thedraft=1 \marginpar{\scriptsize{\em{eq:#1}}}  
\fi \noindent%
}
\new{\eqa}[2]{\begin{align}
#2 \label{eq:#1}
\end{align}
\ifnum \thedraft=1 \marginpar{\scriptsize{\em{eq:#1}}}  \fi
}
\new{\eqalabel}[1]{
\label{eq:#1}
%\ifnum \thedraft=1 \marginpar{\scriptsize{\em{eq:#1}}} \fi
}
\new{\dlabel}[1]{\ifnum \thedraft=1 \marginpar{\scriptsize{\em{#1}}}  \fi}
\new{\labitem}[1]{%
\label{item:#1} \ifnum \thedraft=1 \marginpar{\scriptsize{\em{item:#1}}}  
\fi \noindent%
}
\numberwithin{equation}{section}
\newtheorem{remark}{Remark}[section]
\newtheorem{theorem}{Theorem}[section]
\newtheorem{proposition}{Proposition}[section]
\newtheorem{lemma}{Lemma}[section]
\newtheorem{corollary}{Corollary}[section]
\newtheorem{definition}{Definition}
\begin{document}

\nocite{Atiyah85} 
\nocite{AJ90}
\nocite{Blau93}
\nocite{DH82} 
\nocite{MQ86}
\nocite{BP08}
 \nocite{Getzler91}
\nocite{BT93}

\ifnum \thedraft=1 \today  \fi

\begin{center}
{\Large Path integrals, SUSY QM  and the Atiyah-Singer index theorem for twisted Dirac}\\
Dana Fine\\
Stephen Sawin\end{center}

\abstract{Feynman's time-slicing construction approximates the path
  integral by a product, determined by a partition of a finite time
  interval, of approximate propagators. This paper formulates general
  conditions to impose on a short-time approximation to the propagator
  in a general class of imaginary-time quantum mechanics on a
  Riemannian manifold which ensure these products converge. The limit
  defines a path integral which agrees pointwise with the heat kernel
  for a generalized Laplacian. The result is a rigorous construction
  of the propagator for supersymmetric quantum mechanics, with
  potential, as a path integral.  Further, the class of Laplacians
  includes the square of the twisted Dirac operator, which corresponds
  to an extension of N=1/2 supersymmetric quantum mechanics. General
  results on the rate of convergence of the approximate path integrals
  suffice in this case to derive the local version of the
  Atiyah-Singer index theorem.}

%%%%%%%%%%%%%%%%%%%%%%%%%%%%%%
%%%%%%%%%%%%%%%%%%%%%%%%%%%%%%
\section*{Introduction}

This paper's primary goal is to construct imaginary-time path
integrals for a class of theories which includes ordinary quantum
mechanics and what might be called 
twisted $N=1/2$ supersymmetric quantum mechanics (whose precise
definition appears in Sect.~\ref{sc:dirac}) 
on a Riemannian
manifold. The heuristic formulation of such path integrals suggests
they should represent the propagator, i.e.\ the kernel of the
time-evolution operator, which in the imaginary-time formulation is
the heat operator of the given
Laplacian. Further, the
steepest-descent approximation should give asymptotics
for
the heat kernel. Indeed, these two heuristic properties form
the basis for path integral
``proofs'' of index theorems. Therefore this paper constructs the path
integral and then goes on to prove it agrees pointwise  with the kernel
of the heat operator and  to give
an asymptotic approximation in appropriate
circumstances.
This  ensures the construction yields as a
by-product 
new proofs of index
theorems, including the local version of the Atiyah-Singer index
theorem for the twisted Dirac operator. The resulting proof of the latter is
arguably  the closest to the heuristic path-integral argument 
which  Witten \cite{Witten82a,Witten82b} suggests and 
Alvarez-Gaum\'e and Friedan and Windey
\cite{Alvarez83,FW84} implement.
 
The present approach, which is a rigorous realization of Feynman's
 time-slicing interpretation of the path integral, gives approximate
 propagators indexed by partitions of a fixed time interval. These
 approximate propagators can
 be interpreted as defined by an integral over a finite-dimensional
 approximation to the space of paths of a discretized version of the
 action, which in imaginary time is the energy.
 That is, the approximate propagator is a time-slicing
 approximation to the path integral.  The main work of the first two sections is to prove that
 these approximations converge as the partitions get finer.  More
 importantly, the convergence must be sufficiently uniform in the
 parameters defining the theory that steepest descent computes the
  asymptotics of the propagator. The final section checks the earlier
 convergence results suffice to obtain the asymptotics of the
 component of twisted  $N=1/2$ supersymmetric quantum mechanics that
imply the local 
 index theorem.

This paper  is closely related to the authors' previous work
\cite{FS08} constructing the path integral form of the propagator for
imaginary-time $N=1$ supersymmetric quantum mechanics and using it to
prove the Gauss-Bonnet-Chern theorem.  The current work generalizes
the earlier work in two ways.  First, it constructs propagators for a
much larger class of theories (including the $N=1$ version as a
special case, but also including bosonic quantum nechanics and indeed
all theories with elliptic Hamiltonians). Second, to prove
Gauss-Bonnet-Chern  we needed only the lowest order term in the
asymptotics, whereas  the current work  needs the next order in
the asymptotics anf therefore requires a delicate interchange of the
small-parameter and fine-partition limits.  Despite this, the
construction below is somewhat simpler and we hope more natural than
in the earlier work. 

 The authors' above-cited paper discusses the distinctions and
relationships between this approach and other work inspired by the
path integral heuristics including that of
Bismut~\cite{Bismut84a,Bismut84b},
Getzler~\cite{Getzler86a,Getzler86b},
Rogers~\cite{Rogers87,Rogers92a,Rogers92b} and Andersson and
Driver~\cite{AD99}.

%%%%%%%%%%%%%%%%%%%%%%%%%%%%%%
  \subsection*{Technical Introduction}

\label{ss:tech}
Secs.~\ref{sc:kernel} and~\ref{sc:limit} of this paper  use Feynman's
time-slicing approach to construct the path integral representing  the
propagator for imaginary-time quantum mechanics on a vector bundle
$\VV$ over a compact 
(or merely ``tame'')
manifold 
$M$ with elliptic quantum Hamiltonian $\Delta$. 
Time slicing starts with an approximate
propagator coming from a discretization of the
action and associates a
product of these kernels to each partition of a given interval of time $[0,t]$.
This leads to a path integral expression for
the exact propagator as a fine-partition  limit. The reference~\cite{FS08}
gives a detailed account of the relation between the time-slicing
approach to defining the path
integral and the refinement limit of a product of approximate
kernels,  with particular attention to the case of $N=1$
supersymmetric quantum mechanics on a Riemannian manifold $M$. For a
look at how this works in a simple case, consider a (bosonic) Lagrangian
\newcommand{\sigmadot}{\dot{\sigma}}
$L(\sigma, \sigmadot, s)$ depending on a  path $\sigma : [0,t]
\rightarrow M$, with parameter $s$ and tangent
$\sigmadot$. Heuristically, the kernel of the
time-evolution operator $e^{-t \Delta/2}$ may be written as the path
integral (with imaginary time and $\hbar=1$ units)
\[
\int e^{-\int_0^t L \, ds} \dd \sigma,
\] where the integral is over paths with $\sigma(0)=y$ and
$\sigma(t)=x$.  Note the endpoint conditions and the explicit
$t$-dependence mean the path integral is a function on $M \times
M \times R$, as is the propagator. In the imaginary-time 
formulation, the time-evolution operator is in fact the heat operator
associated with the Hamiltonian $\Delta$, which is a generalized
Laplacian.

The idea 
of time-slicing is to partition $[0,t]$ into subintervals of length
$t_i$ for $i = 1,2 \ldots n$, and to write the path integral as a product of $n$
 such
integrals. Then, in each of these path integrals,  replace the integral of $L$ over the subinterval of length
$t_i$ with an approximation $\widehat{L}(y_i, y_{i-1}; t_i) t_i$, where
the $y$'s are the endpoint values of $\sigma$ on that
subinterval. Heuristically, a requirement on this approximation  is
that $ \sum\widehat{L}(y_i, y_{i-1}; t_i) t_i$ be a Riemann
sum converging under refinement to $\int_0^t L \, ds$. This leads to
an approximate heat kernel $K(x,y;t)= (2 \pi t)^{-m/2}e^{-\widehat{L}(x,y;t)t}$, and a well-defined
approximate path integral which is the kernel product of $n$ copies of
$K$. The Riemann sum requirement suggests that if $t$ itself is small enough, the
trivial partition should suffice; hence, $K$ must be close
to the actual heat kernel when $t$ is small. If $K$ has the semigroup
property, then in fact the approximation is independent of the choice
of partition, and the convergence of the approximate path integral is
trivial. 

One obvious choice for $\widehat{L}$ is to ask that
$\widehat{L}(x,y;t) t = \int_0^t
L(\sigma_{\mathrm{cl}},\sigmadot_{\mathrm{cl}};s) \, ds$ where
$\sigma_{\mathrm{cl}}$ is the path obeying the classical equations of
motion subject to $\sigma_{\mathrm{cl}}(0)= y$ and
$\sigma_{\mathrm{cl}}(t)= x$.  Suprisingly, this does not lead to a
limit with the desired Hamiltonian;  correction terms, which may be
thought of as resolving  operator-ordering ambiguity, must be
added. (In physical units, these corrections enter at higher powers of
$\hbar$.)

Sec.~\ref{sc:kernel} spells out, in a general setting, an appropriate
sense of $K$ being almost the heat kernel for a given choice of
$\Delta,$ and provides the needed estimates.  In particular
Def.~\ref{def:approx-semigroup} 
spells out how close $K$ must be to satisfying the semigroup property
to ensure the kernel products defining the path integrals converge,
and Def~\ref{def:ahk} says how close $K$ must be to the true heat
kernel of a given $\Delta$ to ensure the limit is the heat kernel.
Sec.~\ref{sc:limit} proves the existence of the 
fine-partition limit for such $K$, as well as properties of the
limiting kernel, and precise results on the convergence.
Sec.~\ref{sc:dirac} associates a quantum mechanical system to each
generalized Laplacian, by relating a given action to a path integral
construction for the corresponding propagator. In particular, it gives
the propagator for twisted $N=1/2$ supersymmetric quantum mechanics.
Sec.~\ref{sc:rescale} treats the asymptotic behavior of this
propagator, which requires using results from Sec.~\ref{sc:limit} to
interchange the asymptotic and fine-partition limits. The result agrees
with the heuristic steepest descent treatment of the path integral.

%%%%%%%%%%%%%%%%%%%%%%%%%%%%%%
%%%%%%%%%%%%%%%%%%%%%%%%%%%%%%

%%%%%%%%%%%%%%%%%%%%%%%%%%%%%%
%%%%%%%%%%%%%%%%%%%%%%%%%%%%%%
%%%%%%%%%%%%%%%%%%%%%%%%%%%%%%
%%%%%%%%%%%%%%%%%%%%%%%%%%%%%%
%%%%%%%%%%%%%%%%%%%%%%%%%%%%%%
%%%%%%%%%%%%%%%%%%%%%%%%%%%%%%
\section{Approximate heat kernels} \label{sc:kernel}

%%%%%%%%%%%%%%%%%%%%%%%%%%%%%%
%%%%%%%%%%%%%%%%%%%%%%%%%%%%%%
%%%%%%%%%%%%%%%%%%%%%%%%%%%%%%
%%%%%%%%%%%%%%%%%%%%%%%%%%%%%%
\subsection{Kernels, $*$-products,  and local coordinate bounds} \label{ss:coords}

The heuristic time-slicing interpretation of the path integral
suggests, as above, the approximation to the heat kernel need only get
the short-time and near-diagonal (on $M \times M$) behavior
right. This suggests formulating the
requirements on an approximation locally. 

Accordingly, let $O$ be 
 an open  contractible
subset of
$\RR^m,$ and let $g_{ij}(x)$ be a smooth Riemannian metric on
$O$. Require  that  all derivatives
of order $k$ of $g$ and of $g^{-1}$ are bounded in supremum norm for
$0 \leq k \leq 5.$ This will ultimately ensure given approximations to
the short-time behavior have the desired convergence properties.

Let $d(x,y)$ be the distance between $x,y\in O$ in this metric.  For
$\vec{v} \in \RR^m,$ $x \in O$ and $t \in \RR$ the 
geodesic  through
$x$ with tangent $\vec{v}$ at $x$ with parameter $t$ proportional to
arc length defines the exponential map $\exp_x t \vec{v}$. If $y\in O$ is close enough to $x$ that
there is a unique minimal geodesic connecting them, define
$\vec{y}_x=\exp^{-1}_x y.$ Let $\parens{\,\cdot,\cdot\,}_x$
denote the
inner product with respect to $g$ at $x \in O,$ and let
$\abs{\,\cdot\,}_x$ denote the corresponding norm. If the vectors
inside are of the form $\vec{y}_x$ or the point at which the norm or
inner product is computed is otherwise understood from context, drop
the subscript.  Write $\dmeas_g y= \det^{1/2}_y(g)\dmeas y,$ where
$\dmeas y$ is standard Lebesgue measure on $\RR^m$ restricted to $O$, and write $\dmeas
\vec{y}_x$ for Lebesgue measure on $O$ with respect to the inner
product given by $g$ at $x;$ that is, the metric measure at $x$ pulled
back to $y$ by $\exp^{-1}_x$.

Henceforth to say that a quantity, such as $D$ in the following lemma,
``depends on the metric bounds'' will mean that quantity is a
function of the assumed bounds on the supremum norm of
 $g$, $g^{-1}$
and their first five derivatives (as well as on the dimension
$m$)
 The concern is that, in later
arguments which require rescaling the metric, preserving these bounds
should be
sufficient to preserve the estimates which follow here.

%%%%%%%%%
%%%%%%%%%
\begin{lemma}
	\label{lm:coords}
	There is a $D>0$ depending on the metric bounds  such that .
        for $x,y,z\in O$ with $d(x,y),d(y,z),d(x,z)<D$  there is a unique minimal
        geodesic connecting $x$ and $y$,  $\vec{y}_x$
	depends smoothly on $x$ and $y$,  
        and $y-x$ depends smoothly on $x$ and on $\vec{y}_x$.
Moreover, 
 %%%%% 
	\begin{align}
          y-x&= \vec{y}_x + \OO\parens{\abs{\vec{y}_x}^2} \label{eq:yminusx-est}\\
          \abs{\vec{z}_x}^2 &= \abs{\vec{z}_y}^2 +\abs{\vec{x}_y}^2 -
          2 \parens{\vec{x}_y,\vec{z}_y} +
          \OO\parens{\abs{\vec{x}_y}^2
            \abs{\vec{z}_y}^2}\label{eq:lengthsquared}\\ 
         \frac{\dmeas_gy}{\dmeas \vec{y}_x}&=1+\OO\parens{\abs{\vec{y}_x}^2} \label{eq:determinant}
	\end{align}
 %%%%% 
	where for example $\OO\parens{\abs{\vec{x}_y}^2
          \abs{\vec{z}_y}^2}$ 
        indicates the difference between the left-hand side and the
        truncated Taylor series is bounded by a constant (depending on
        the metric bounds) times 
        $\abs{\vec{x}_y}^2 \abs{\vec{z}_y}^2$
        (as each of these tends towards zero).
\end{lemma}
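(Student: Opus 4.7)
The plan is to reduce each assertion to a Taylor-expansion computation, performed either in the ambient coordinates on $O$ or in normal coordinates centered at an appropriate point, while reading off the uniform constants from the assumed bounds on $g$, $g^{-1}$ and their first five derivatives. For the existence of $D$ and the smoothness assertions, I would observe that the geodesic equation has coefficients built from the Christoffel symbols (one derivative of $g$ and $g^{-1}$), so standard ODE theory gives uniform control on the lifetime and smooth dependence of geodesics on initial data. Since $d\exp_x|_0 = \mathrm{id}$, a quantitative inverse function theorem then produces $D>0$ independent of $x \in O$ so that $\exp_x$ restricts to a diffeomorphism from the $D$-ball in $T_xO$ onto its image, giving both uniqueness of minimal geodesics of length at most $D$ and the joint smoothness of $\vec{y}_x$ in $(x,y)$ and of $y = \exp_x(\vec{y}_x)$ in $(x, \vec{y}_x)$.

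For \eqref{eq:yminusx-est} I would Taylor expand the smooth map $\vec{v} \mapsto \exp_x \vec{v}$ at $\vec{v}=0$: the zeroth term is $x$, the first is $\vec{v}$, and the quadratic remainder is a contraction of the Christoffel symbols at $x$, hence bounded uniformly by the metric bounds. For \eqref{eq:determinant} I would pass to normal coordinates at $x$, in which $\dmeas\vec{y}_x$ becomes Lebesgue measure $\dmeas\vec{v}$ while $\dmeas_g y = \sqrt{\det g(\vec{v})}\,\dmeas\vec{v}$. The standard normal-coordinate expansion $g_{ij}(\vec{v}) = \delta_{ij} + O(|\vec{v}|^2)$, with second-order term a contraction of the curvature at $x$, then gives $\sqrt{\det g(\vec{v})} = 1 + O(|\vec{v}|^2)$, the constant controlled by two derivatives of $g$ and the remainder by further derivatives.

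The main work is \eqref{eq:lengthsquared}. Setting $\xi = \vec{x}_y$ and $\zeta = \vec{z}_y$, I would study
\[
F(\xi, \zeta) \defequals d(\exp_y \xi, \exp_y \zeta)^2 - \l(|\xi|^2 + |\zeta|^2 - 2(\xi, \zeta)_y\r)
\]
and exploit two observations. First, $F$ vanishes identically along $\{\xi = 0\}$ and along $\{\zeta = 0\}$: within the injectivity radius a geodesic realizes the distance, so $d(y, \exp_y \zeta)^2 = |\zeta|^2$, and symmetrically in $\xi$. Second, the standard Hessian-at-the-diagonal computation for the squared-distance function on $M \times M$ shows that the mixed second derivative of $d(\exp_y\xi, \exp_y\zeta)^2$ at the origin is $-2 g_{ij}(y)$, matching precisely the mixed Hessian of $|\xi|^2 + |\zeta|^2 - 2(\xi, \zeta)_y$; hence the full quadratic Taylor polynomial of $F$ at the origin vanishes. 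A double Hadamard argument now finishes the job: differentiating $F(\xi,0)=0$ and $F(0,\zeta)=0$ shows that $\partial_{\xi_i}\partial_{\zeta_j} F$ itself vanishes on both hyperplanes, hence is $O(|\xi|\,|\zeta|)$ by one application of Hadamard's lemma, and then integrating
\[
F(\xi, \zeta) = \int_0^1\!\!\int_0^1 \xi_i \zeta_j \, \partial_{\xi_i}\partial_{\zeta_j} F(s\xi, t\zeta)\,ds\,dt
\]
yields $F = O(|\xi|^2 |\zeta|^2)$ as required.

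The main obstacle is not any one of these estimates but the bookkeeping: every $O$-constant must be traceable to the sup-norm bounds on $g$, $g^{-1}$ and their derivatives up to order five, since the paper later rescales the metric and must preserve these bounds. One must therefore verify that the quantitative inverse function theorem, the normal-coordinate expansion, and the Hadamard remainders all depend on $g$ only to the orders claimed — five derivatives is sufficient because Christoffel symbols use one, curvature uses two, and the double-Hadamard remainder eats a few more.
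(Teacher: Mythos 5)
Your high-level plan is sound and your treatment of \eqref{eq:yminusx-est}, \eqref{eq:determinant}, and the existence of a uniform $D$ matches the paper's in spirit. The issue is in your argument for \eqref{eq:lengthsquared}.

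You claim that ``differentiating $F(\xi,0)=0$ and $F(0,\zeta)=0$ shows that $\partial_{\xi_i}\partial_{\zeta_j} F$ itself vanishes on both hyperplanes.'' This inference does not hold: from $F(\xi,0)\equiv 0$ one may differentiate only in $\xi$, obtaining $\partial_\xi^\alpha F(\xi,0)=0$ for every multi-index $\alpha$, but this says nothing about $\partial_{\zeta_j} F(\xi,0)$. To conclude $\partial_{\xi_i}\partial_{\zeta_j}F(\xi,0)=0$ you first need the identity $\partial_{\zeta_j}F(\xi,0)\equiv 0$ in $\xi$, and that is \emph{not} a consequence of $F(\xi,0)=0$. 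It is instead a consequence of the first variation of arc length (equivalently, Gauss' Lemma): $\nabla_z\, d(x,z)^2\big|_{z=y} = -2\exp_y^{-1}(x)$ gives $\partial_{\zeta_j} d(\exp_y\xi,\exp_y\zeta)^2\big|_{\zeta=0} = -2 g_{jk}(y)\,\xi^k$, which exactly cancels $\partial_{\zeta_j}\bigl(\,|\xi|^2+|\zeta|^2-2(\xi,\zeta)\,\bigr)\big|_{\zeta=0}$. Symmetrically for $\partial_{\xi_i}F(0,\zeta)$. Your ``Hessian at the origin'' observation ($\partial_\xi\partial_\zeta F(0,0)=0$) is at a single point and cannot do the work of the identity on the whole hyperplane. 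Once the first-variation formula is in hand, your double-FTC/Hadamard chain does give $F=\OO\parens{|\xi|^2|\zeta|^2}$ and is in my view a cleaner packaging than the paper's two nested Lagrange-remainder Taylor expansions; but as written the key vanishing-on-hyperplanes step rests on an incorrect justification, and the crucial geometric input — Gauss' Lemma, which the paper invokes by name — is absent from your argument.

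One further small point: a quantitative inverse function theorem bound on $\exp_x$ gives global injectivity of $\exp_x$ on a uniform ball, but to conclude that the resulting radial geodesics are \emph{minimizing} among all competing curves (and hence that $d(x,y)<D$ implies a unique minimal geodesic) you still need the Gauss-lemma argument that radial geodesics are shortest within a normal ball, plus a remark about curves leaving the ball. The paper instead bounds the conjugate radius via Rauch and uses contractibility to rule out short loops. Either route works, but you should say which facts you are using since the uniformity in $x$ is the whole point.
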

%%%%%%%%%
%%%%%%%%%

%%%%%%%%%
%%%%%%%%%
\begin{proof}
As expressed in local coordinates, the components of the Riemann
curvature  are continuous 
functions of the first
two derivatives of the metric. By assumption, then, the 
Riemannian and hence sectional
curvatures are bounded above, so, by Rauch's comparison
theorem~\cite{doCarmo92}, the injectivity radius is bounded
below (contractibility means the injectivity radius is the minimum
distance of a point from its nearest conjugate point).
 Within the injectivity radius the exponential map $\exp_x$ at each
 $x$ is defined 
by the differential equation in local coordinates, writing
$\sigma^\mu(t)$ for the $\mu$th component of $\exp_x(t\vec{v}),$
\[
\frac{d^2 \sigma^\mu}{dt^2} + \Gamma^{\mu}_{\nu \rho} \frac{d
  \sigma^\nu}{dt} \frac{d \sigma^\rho}{dt} = 0.
\]
Since the Christoffel symbols $\Gamma^{\mu}_{\nu \rho}$
are continuous  in the first
derivatives of the metric, the coefficients of the differential
equation have bounded derivatives up to degree four.  Standard
existence and uniqueness results \cite{Arnold98} ensure the solution
with the given initial conditions is $C^5$, but a careful reading of
the argument shows that the first four derivatives are in fact bounded
in terms of the metric bounds.  Further, with $t=1$, $\exp_x \vec{v}$
has its first four derivatives with respect to both $x$ and $\vec{v}$
bounded in terms of the metric bounds. (One normally thinks of $\exp$
as a map from the tangent space to the manifold, but in this case each
of these is a subset of $\RR^m$, so $\exp$ refers to the endomorphism
on $\RR^m$). Since the injectivity radius is bounded below by the
metric bounds, there is a radius $D$ bounded below by the metric
bounds such that $\exp_x^{-1}$ has its first four derivatives bounded
in terms of the metric bounds on a circle of radius $D$ around $x$.

This means that if $d(x,y)<D$ then $\vec{y}_x=\exp^{-1}_xy$ as a
function of $x$ and $y$ has its first four derivatives bounded in
terms of the metric bounds  and  $y= \exp_x \vec{y}_x$
and hence $y-x$ as functions of $x$ and $\vec{y}_x$ have their first
four derivatives bounded in terms of the metric bounds.

For Eq.~\eqref{eq:yminusx-est}  the zeroth and first order terms of
the Taylor series for $y=\exp_x \vec{y}_x$ as a
function of $x$ are set by the initial conditions of $\exp,$ and the
second order error term is bounded by the supremum of the second
derivative of $\exp,$ which is bounded in terms of the metric bounds.

For Eq.~\eqref{eq:lengthsquared}, fixing $y,$ notice that
$\abs{\vec{z}_x}= d(x,z)= d(\exp_y \vec{x}_y , \exp_y \vec{z}_y)$ and all its first
four derivatives in $\vec{x}_y$ and $\vec{z}_y$ are bounded in terms
of the metric bounds.  With Gauss' Lemma,  the Taylor series of
$d(x,z)^2$ as a function of $\vec{x}_y$ is 
\[d(x,z)^2 = \abs{\vec{z}_y}^2 - 2 \parens{\vec{x}_y, \vec{z}_y} +
\vec{x}_y^2 \frac{\partial^2}{\partial \vec{x}_y^2}d(x',z)^2\]
the last term on the
right-hand side is an abbreviation for a linear combination of
quadratic functions of $\vec{x}_y$ involving  second partial derivatives with respect to the components
of $\vec{x}_y$, each evaluated at some point
  $x'$ on the geodesic from $y$ to
$x.$ (The point $x'$ will in general be different for each of the
derivatives appearing in the linear combination.) 

  Expanding this last term
term as a Taylor series in $\vec{z}_y$ yields
\[ d(x,z)^2 = \abs{\vec{x}_y}^2 + \abs{\vec{z}_y}^2 - 2 \parens{\vec{x}_y, \vec{z}_y} +
\vec{x}_y^2\vec{z}_y^2\frac{\partial^2 }{\partial \vec{z}_y^2}\frac{\partial^2}{\partial
\vec{x}_y^2}d(x',z')^2\]
where $z'$ is on the geodesic between $y$ and $z.$  The last term on
the right-hand side here, being a fourth derivative of the exponential
map, is bounded in terms of the
metric bounds.

For Eq.~\eqref{eq:determinant}, note that the Taylor series centered
at $x$ for the
components of the metric at $y,$ expressed in the 
coordinates mapping $y$ to $\vec{y}_x$, will have no term linear in
$\vec{y}_x$; the quadratic term has
coefficients given by second derivatives of the metric at
$x$  \cite{BGV04}. Eq.~\eqref{eq:determinant} follows by direct calculation, with
the implied constants depending on the bounds of the
metric.
\end{proof}

%%%%%%%%%
%%%%%%%%%

Define
\[
\chi_{<D}(x,y) = \begin{cases}
	1& \quad \text{if} \quad d(x,y)<D\\
	0 &	\quad \text{else,}
\end{cases}
\]
and  $\chi_{>D}(x,y)=1-\chi_{<D}(x,y).$  

For $x,y\in O,$ $t>0,$ and $D>0$ small enough that
Lemma~\ref{lm:coords} holds, define
 %%%%% 
\be H_D(x,y;t)= \chi_{<D}(x,y)
  (2\pi t)^{-m/2} e^{-\abs{\vec{y}_x}^2/(2t)}. 
\ee{h-def}
 %%%%% 

Given $n \in \NN,$ let $f\colon O \to \RR^n,$ $f^* \colon O
\to \parens{\RR^n}^*$ and $K \colon O \times O \to \operatorname{Matrix}_{n,n}.$ $K$
represents kernels of left or right operators on the space of
functions from $O$ to $\RR^n$ or $\parens{\RR^n}^*$
whose actions are
given by
 %%%%% 
\eqa{*-def}{
	K*f(x)&= \int_O K(x,y) \cdot f(y) \dmeas_gy \nonumber\\
	f^**K(y)&= \int_O f^*(x) \cdot K(x,y) \dmeas_gx 
}
 %%%%% 
where $\cdot$
represents the matrix product. The kernel of the operator product of
the operators represented by $K$ and $J$ is the *-product 
\eqa{*-product}{
J*K(x,z)&= \int_O J(x,y) \cdot K(y,z) \dmeas_gy.
}

The matrix norm 
sends
 $K$ to 
a nonnegative function  $\abs{K}$ on $O \times O.$   
Use this to define 
\[ \norm{K}_{\mathrm{op}}= \max\parens{\sup_x \int \abs{K(x,y)} \dmeas_g y, \sup_y \int  \abs{K(x,y)}\dmeas_g x},\]
 which is the max of the operator norms of $K$ acting on the left and
 the right. 
 Define the kernel norm by \[
 \norm{K}_\mathrm{ker}=\max(\norm{K}_\mathrm{op},\norm{K}_\infty). \]
 Notice $\norm{J*K}_\mathrm{ker} \leq \norm{J}_\mathrm{ker}\norm{K}_\mathrm{ker}$ and $\norm{J*K}_\mathrm{ker} \leq
 \norm{J}_\mathrm{op}\norm{K}_\mathrm{ker}$.

Notice $H_D$ of Eq.~\eqref{eq:h-def} agrees for $d(x,y) < D$ with the
flat-space heat kernel when the metric is flat. 
The next two lemmas explore classes of kernels
whose relation to $H_D$ are increasingly tenuous, to delineate the
extent to which they retain key properties of the heat kernel under
kernel products. The purpose of this exploration, which culminates in
Prop.~\ref{pr:t-def}, is to determine the key properties of a
time-slicing approximation that ensure the approximate path integrals
converge with sufficient rapidity to the heat kernel of a given
Laplace-like operator.

 %%%%%%%%%
 %%%%%%%%%
 \begin{lemma}
 	\label{lm:h*h-bd}
 	If $B$ is large enough,  $D$ is small enough, and $t$ is small
        enough (each depending 
        on the bounds of the metric and the previous quantities); and if
        \be
        K_{B,D}(x,y;t)=e^{B\abs{\vec{y}_x}^2/(5m)}H_{D}(x,y;t),
        \ee{kbd-def}
        $0<t_1,t_2,$ and $t=t_1+t_2;$ then 
  %%%%% 
 	\eqa{h*h-bd}{
 	  \chi_{<D}\sbrace{K_{B,D}(t_1)*K_{B,D}(t_2)} &\leq  e^{Bt_1t_2/t}K_{B,D}(t),\nonumber \\
 	\norm{\chi_{>D}\sbrace{K_{B,D}(t_1)*K_{B,D}(t_2)}}_{\mathrm{ker}} & \leq t^2e^{-D^2/9t},
 	}
  %%%%% 
 	and
  %%%%% 
 	\be
 	\norm{K_{B,D}(t)}_{\mathrm{op}}\leq e^{Bt}.
 	\ee{hopnorm-bd}
  %%%%% 
 \end{lemma}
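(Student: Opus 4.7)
The plan is to treat~\eqref{eq:hopnorm-bd} first, then the main bound---the first inequality of~\eqref{eq:h*h-bd}---and derive the second inequality of~\eqref{eq:h*h-bd} as a consequence.

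For~\eqref{eq:hopnorm-bd}, change variables to normal coordinates at $x$ (the column version is symmetric). By Eq.~\eqref{eq:determinant}, $\dmeas_g y = (1 + \OO(\abs{\vec{y}_x}^2))\dmeas \vec{y}_x$, and the integrand is a Gaussian in $\vec{y}_x$ with inverse variance $1/t - 2B/(5m)$. Direct evaluation yields $(1 - 2Bt/(5m))^{-m/2}(1 + \OO(t))$; for $B$ large enough to dominate the metric-dependent $\OO$-constant and $Bt$ small, this is at most $e^{Bt}$.

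For the first inequality of~\eqref{eq:h*h-bd}, work in normal coordinates at $x$ and apply Eq.~\eqref{eq:lengthsquared} with $x$ and $y$ interchanged to expand
\[
\abs{\vec{z}_y}^2 = \abs{\vec{z}_x}^2 + \abs{\vec{y}_x}^2 - 2\parens{\vec{y}_x,\vec{z}_x}_x + \OO(\abs{\vec{y}_x}^2\abs{\vec{z}_x}^2),
\]
together with Eq.~\eqref{eq:determinant} for the measure. Setting $\alpha_i = 1/(2t_i) - B/(5m)$, the leading exponent becomes $-\alpha_1\abs{\vec{y}_x}^2 - \alpha_2(\abs{\vec{z}_x}^2 + \abs{\vec{y}_x}^2 - 2\parens{\vec{y}_x,\vec{z}_x})$. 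Completing the square in $\vec{y}_x$ recenters the Gaussian at $(\alpha_2/(\alpha_1+\alpha_2))\vec{z}_x$ and leaves a residual factor $\exp(-(\alpha_1\alpha_2/(\alpha_1+\alpha_2))\abs{\vec{z}_x}^2)$; a short algebraic check shows $\alpha_1\alpha_2/(\alpha_1+\alpha_2) \geq 1/(2t) - B/(5m)$ for $Bt \leq 5m$, so this residual is dominated by the $\vec{z}_x$-exponent of $K_{B,D}(x,z;t)$. The Gaussian normalization contributes a prefactor $(1 - 4Bt_1 t_2/(5mt))^{-m/2} \leq \exp(4Bt_1 t_2/(5t))$, using $t_1 t_2 \leq t^2/4$ and $Bt$ small. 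This falls short of the target $\exp(Bt_1t_2/t)$ by a slack of $Bt_1 t_2/(5t)$, which is precisely what absorbs the curvature corrections from Eqs.~\eqref{eq:lengthsquared} and~\eqref{eq:determinant}: using $\abs{\vec{y}_x}, \abs{\vec{z}_x} < D$, each contributes at most a constant times $D^2$ to the exponent, and shrinks with $D$.

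For the second inequality of~\eqref{eq:h*h-bd}, the $\chi_{<D}$ factors inside $K_{B,D}$ force $d(x,y), d(y,z) < D$, so combined with $\chi_{>D}(x,z)$ the support is the annulus $D < d(x,z) < 2D$. The pointwise bound from the first inequality, valid without the $\chi_{<D}(x,z)$ cutoff, reads
\[
[K_{B,D}(t_1)*K_{B,D}(t_2)](x,z) \leq C(2\pi t)^{-m/2} e^{Bt}\exp\!\left(-\!\left(\tfrac{1}{2t}-\tfrac{B}{5m}\right)\!d(x,z)^2\right).
\]
Splitting $(1/(2t) - B/(5m))d(x,z)^2 \geq d(x,z)^2/(9t) + D^2(7/(18t) - B/(5m))$ for $d(x,z) \geq D$, the $D^2/t$ piece swamps the $(2\pi t)^{-m/2}e^{Bt}$ prefactor for $t$ small, giving the sup-norm bound $\leq t^2 e^{-D^2/(9t)}$. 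The op-norm bound follows by integrating this Gaussian tail over the annulus; the $m$-dimensional volume contributes only polynomial-in-$1/t$ factors, which again fit inside $t^2 e^{-D^2/(9t)}$.

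The hard part is the bookkeeping in the first inequality of~\eqref{eq:h*h-bd}: the curvature-driven positive contributions to the exponent must be strictly dominated by the slack carved out by the $1/(5m)$ coefficient of $K_{B,D}$, and the nested hierarchy---$B$ large in terms of the metric bounds, then $D$ small in terms of $B$, then $t$ small in terms of $D$---is exactly what closes the estimate.
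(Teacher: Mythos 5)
Your treatment of Eq.~\eqref{eq:hopnorm-bd} matches the paper's and is fine. The gap is in the first line of Eq.~\eqref{eq:h*h-bd}, where your choice of base point does not work. You expand $\abs{\vec{z}_y}^2$ about $x$ and complete the square in $\vec{y}_x$, leaving the Gaussian centered at roughly $\tfrac{t_1}{t}\vec{z}_x$. The curvature error from Eq.~\eqref{eq:lengthsquared} enters the exponent as $\OO\parens{\abs{\vec{y}_x}^2\abs{\vec{z}_x}^2}/(2t_2)$, and your claim that this is ``at most a constant times $D^2$'' is wrong: at the Gaussian center it is of order $\abs{\vec{z}_x}^4 t_1^2/(t^2t_2)$, i.e.\ up to $D^4/t_2$, which diverges as $t_2 \to 0$ with $t_1$ fixed. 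Equivalently, the effective quadratic coefficient in $\vec{y}_x$ is perturbed by $\alpha_2 c\abs{\vec{z}_x}^2$, which dominates $\alpha_1$ once $t_2/t_1 \lesssim c\abs{\vec{z}_x}^2$, so the completed square's residual can lose both the sign and the size you need. The Jacobian factor of Eq.~\eqref{eq:determinant} similarly gives the normalization a factor $\parens{1-\OO(D^2 t_1/t)}^{-m/2}$, an order-one quantity. Meanwhile the available budget, $e^{Bt_1t_2/t}$ together with the extra $e^{b\abs{\vec{z}_x}^2}$ on the right side, is $\OO(t_1t_2/t)$ in the exponent and vanishes for asymmetric partitions, so no hierarchy ``$B$ large, then $D$ small, then $t$ small'' can close the estimate: the failure is uniform in $t$ as $t_1t_2/t \to 0$.

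The paper avoids this by centering at the weighted midpoint $u=\exp_x\parens{\tfrac{t_1}{t}\vec{z}_x}$, for which $\abs{\vec{x}_u}=\tfrac{t_1}{t}\abs{\vec{z}_x}$, $\abs{\vec{z}_u}=\tfrac{t_2}{t}\abs{\vec{z}_x}$, and crucially $t_2\vec{x}_u+t_1\vec{z}_u=0$. Expanding both $\abs{\vec{y}_x}^2$ and $\abs{\vec{y}_z}^2$ about $u$ makes the linear terms in $\vec{y}_u$ cancel exactly, so the Gaussian is centered at the origin, and the error terms $\OO\parens{\abs{\vec{y}_u}^2\abs{\vec{x}_u}^2}/(2t_1)$, $\OO\parens{\abs{\vec{y}_u}^2\abs{\vec{z}_u}^2}/(2t_2)$ are both $\OO\parens{t_1t_2\abs{\vec{z}_x}^2/t^2}$ relative to the main quadratic coefficient $t/(2t_1t_2)$. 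Every correction then appears as $t_1t_2/t$ times a bounded quantity, which the budget from $e^{Bt_1t_2/t}$ and the $e^{b\abs{\vec{z}_x}^2}$ factor of $K_{B,D}$ exactly absorbs. The choice of center is the key idea, not a convenience. Finally, for the second line of Eq.~\eqref{eq:h*h-bd}, you propose deriving it from a cutoff-free version of the first line, which inherits the above problem and is in any case not a consequence of it as stated; the paper instead argues directly and more simply that if $d(x,z)>D$ then every $y$ has $d(x,y)>D/2$ or $d(y,z)>D/2$, uses the pointwise bound $K_{B,D}(x,y;t_1)\leq \tfrac12 t_1^2 e^{-D^2/(9t_1)}$ on the far factor, and Eq.~\eqref{eq:hopnorm-bd} on the other.
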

 %%%%%%%%%
 %%%%%%%%%

 %%%%%%%%%
 %%%%%%%%%
 \begin{proof}
   For the first line of Eq.~\eqref{eq:h*h-bd}, let
   $u=\exp_x \parens{\frac{t_1}{t}\vec{z}_x},$ so $\vec{u}_x = t_1
   \vec{z}_x/t$ and $\vec{u}_z = t_2 \vec{x}_z/t.$
   These imply
   $t_2\vec{x_u} + t_1 \vec{z_u}=0.$ Let $b=B/(5m)$ and let $c>0$ be
   such that Eqs.~\eqref{eq:lengthsquared} and~\eqref{eq:determinant}
   become
	\begin{align*}
		\abs{\abs{\vec{y}_x}^2 - \parens{\abs{\vec{y}_u}^2 +
                    \abs{\vec{x}_u}^2  - 2
                    \parens{\vec{y}_u,\vec{x}_u}} }&\leq c\abs{\vec{y}_u}^2\abs{\vec{x}_u}^2, \\
		\abs{\abs{\vec{y}_z}^2 -\parens{\abs{\vec{y}_u}^2 +
                    \abs{\vec{z}_u}^2  - 2
                    \parens{\vec{y}_u,\vec{z}_u}} }&\leq
                c\abs{\vec{y}_u}^2\abs{\vec{z}_u}^2, \mbox{ and } \\
		\dmeas_gy &\leq \sbrace{1+ c\abs{\vec{y}_u}^2}\dmeas\vec{y}_u.
	\end{align*}
	Then 
	\begin{align*}
		&\chi_{<D}(x,z)\sbrace{K_{B,D}(t_1)*K_{B,D}(t_2)}(x,z) = \int
                \chi_{<D}(x,z)K_{B,D}(x,y;t_1)K_{B,D}(y,z;t_2) \dmeas_gy\\ 
		& \qquad \leq H_{D}(x,z;t) e^{b(t_1^2+t_2^2)\abs{\vec{z}_x}^2/t^2} (2 \pi t_1t_2/t)^{-m/2}\int \chi_{<D}(x,y)\chi_{<D}(y,z)\\
		& \qquad \qquad 
                \cdot e^{-\frac{t\abs{\vec{y}_u}^2}{2t_1t_2}\sbrace{1-ct_1t_2\abs{\vec{z}_x}^2/t^2
                    - 2ct_1t_2/t - 4b t_1t_2/t -2bc t_1t_2
                    \abs{\vec{z}_x}^2/t} -
                  2b\parens{\vec{y}_u,\vec{x}_u+\vec{z_u}}} \dmeas
                \vec{y}_u 
	\end{align*}
	where the last integral is over all vectors $\vec{y}_u$ which
        are 
taken by
 $\exp_u$ to some $y$ within a
        distance $D$ of $x$ and $z.$ Because the integrand is positive
        the inequality still holds if the integral is extended over
        all of $\RR^m.$ Noting that $e^x \leq x + e^{x^2}$ and that
        the integral of a Gaussian times a linear function is $0,$
	\[
\chi_{<D} \sbrace{K_{B,D}(t_1)*K_{B,D}(t_2)}(x,z)  \leq H_{D}(x,z;t)
e^{b(t_1^2+t_2^2)\abs{\vec{z}_x}^2/t^2} \int_{\RR^m} (2\pi
t_1t_2/t)^{-m/2} e^{-\frac{t\abs{\vec{y}_u}^2}{2t_1t_2}\sbrace{1-a}}
\dmeas\vec{y}_u\] 
	where
	\begin{align*} a&= ct_1t_2\abs{\vec{z}_x}^2/t^2 + 2ct_1t_2/t +
          4b t_1t_2/t + 2b(c + 4b) t_1t_2 \abs{\vec{z}_x}^2/t\\ 
		&= (2c+4b)t_1t_2/t + (c+
                2b(c+4b)t)t_1t_2\abs{\vec{z}_x}^2/t^2 \leq
                Bt_1t_2/(mt) + 2bt_1t_2\abs{\vec{z}_x}^2/(mt^2) 
	\end{align*}
	if $B$ is chosen large enough and $T$ is chosen small enough.  
	If $t$ and $D$ are small enough (depending on $B$)
 then $a\leq 1/2,$ so 
$(1-a)^{-m/2} < e^{ma}$ and hence the Gaussian integral yields 

	\[\chi_{<D}\sbrace{K_{B,D}(t_1)*K_{B,D}(t_2)}(x,z) \leq H_{D}(x,z)
        e^{b\abs{\vec{z}_x}^2+ Bt_1t_2/t}.\]

        Deferring the proof of the second line of Eq.~\eqref{eq:h*h-bd} for a moment,
        consider first Eq.~\eqref{eq:hopnorm-bd}: Defining $b$ and $c$ as above
	\begin{align*}
		\abs{K_{B,D}(t)*f(x)} & \leq  \int H_D(x,y;t) e^{b \abs{\vec{y}_x}^2 }  \abs{f(y)}\dmeas_g y\\
		&\leq \norm{f}_\infty\int \chi_{<D}(x,y)(2\pi
                t)^{-m/2}
                e^{-\frac{\abs{\vec{y}_x}^2}{2t}\sbrace{1-2bt
                    -2ct}}\dmeas \vec{y}_x. 
	\end{align*}
	Again extending the integral, choosing $t$ small enough to
        bound the quantity in braces, and completing the Gaussian
        integral yields
	\[\abs{K_{B,D}(t)*f(x)} \leq \norm{f}_\infty e^{2m(b+c)t}\leq e^{Bt}\norm{f}_\infty\]
	if $B$ is chosen large enough.
	
	Finally, for the second line of Eq.~\eqref{eq:h*h-bd}, if $d(x,z)>D$
        then any $y \in O$ satisfies either $d(x,y)>D/2$ or $d(y,z)>D/2,$ so
	\begin{align*}
		\chi_{>D}(x,z)\sbrace{K_{B,D}(t_1)*K_{B,D}(t_2)}(x,z) &\leq \int \chi_{>D/2}(x,y)K_{B,D}(x,y;t_1)K_{B,D}(y,z;t_2) \dmeas_gy \\
		& \qquad + \int\chi_{>D/2}(y,z)K_{B,D}(x,y;t_1)K_{B,D}(y,z;t_2) \dmeas_gy.
		\end{align*}
	But if $d(x,y)>D/2$, then $K_{B,D}(x,y;t_1) \leq
       (1/2) t_1^2e^{-D^2/9t_1}$ 
        if $t$ is  small enough, and, by
        Eq.~\eqref{eq:hopnorm-bd}, $\norm{K_{B,D}(t_2)}_\mathrm{op} \leq 2$
        for $T$ small enough. Thus 
	\[\int\chi_{>D/2}(x,y) K_{B,D}(x,y;t_1)K_{B,D}(y,z;t_2) \dmeas_g y \leq t_1^2e^{-D^2/(9t_1)},\]
	and therefore 
	\[\norm{\chi_{>D}\sbrace{K_{B,D}(t_1)*K_{B,D}(t_2)}(x,z) }_\infty \leq
        t_1^2e^{-D^2/(9t_1)}+ t_2^2e^{-D^2/(9t_2)} \leq t^2
        e^{-D^2/(9t)}\] 
	 by the convexity of $e^{-D^2/(9t)},$ all for small enough $t$
         (depending on $D$).
	
         If $D$ is chosen small enough, the volume of the ball of
         radius $2D$ around any point is less than $1$ (based on the
         bound on the second derivative of $g$) so
	\[\norm{\chi_{>D}\sbrace{K_{B,D}(t_1)*K_{B,D}(t_2)}(x,z) }_\mathrm{op}=
        \norm{\chi_{>D}\sbrace{K_{B,D}(t_1)*K_{B,D}(t_2)}(x,z) }_\mathrm{ker} \leq
        t^2 e^{-D^2/(9t)}.\] 
 \end{proof}
 %%%%%%%%%
 %%%%%%%%%

\subsection{Two families of kernels and the $t$-norm}
\begin{definition} \label{def:EE}
For $B,D,t>0$ define $\EE_{B,D}(t)$  to
be the set of all kernels $K$ 
for which
there exists a probability measure $\dmeas\mu$ on the interval $[1,2]$
such that 
 %%%%% 
\be
         \abs{K(x,y)}  \leq e^{B\sqrt{t}}\int K_{B,D}(x,y;\alpha t)
        \dmeas \mu_\alpha,
\ee{ebd-def}
where $K_{B,D}$ is the particular one-parameter family of kernels defined in
Eq.~\eqref{eq:kbd-def}.
 %%%%% 
\end{definition}
Note that $K_{B,D}(t)$ 
itself is in
$\EE_{B,D}(t)$. The following lemma extends the previous one to say
that $\EE_{B,D}(t)$ is almost closed under the $*$ product, and made up
of almost  contraction maps.
The ``almost'' here refers in both cases
to the exponential $\sqrt{t}$ factor, and in the first  to an
exponentially damped term far from the diagonal. Precisely, 
%%%%%%%%%
%%%%%%%%%
\begin{lemma}
	\label{lm:k*k-bd}
	If $B$ is large enough, $D$ is small enough, and $T$ is small
        enough (each depending  on
        the bounds of the metric and the previous quantities) and if
        $K_1$ and $K_2$ are one-parameter families of kernels with
        $K_1(t),K_2 (t) \in \EE_{B,D}(t)$  for $t < T,$ then, for
        $0<t_1,t_2$ and $t=t_1+t_2 < T$ 
 %%%%% 
	\be
	\norm{K_i(t)}_{\mathrm{op}}\leq e^{1.1 B\sqrt{t}}
	\ee{kopnorm-bd}
 %%%%% 
	and
         %%%%% 
	\eqa{k*k-bd}{
	  \chi_{<D}K_1(t_1)*K_2(t_2) &\in e^{B\sqrt{t_1t_2/t}}\EE_{B,D}(t)\nonumber \\
	\norm{\chi_{>D}K_1(t_1)*K_2(t_2)}_{\mathrm{ker}} & \leq t^2e^{-D^2/(20t)}.
	}
 %%%%% 

\end{lemma}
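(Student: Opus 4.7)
The plan is to reduce each of the three estimates to the corresponding bound of Lemma~\ref{lm:h*h-bd} by unpacking Definition~\ref{def:EE}: for each $K_i(t_i) \in \EE_{B,D}(t_i)$ replace $|K_i(t_i)|$ by $e^{B\sqrt{t_i}}\int K_{B,D}(\cdot,\cdot;\alpha_i t_i)\,d\mu_{i,\alpha_i}$, and then carry the probability measures through the kernel product via Fubini, exploiting the fact that $\alpha_i\in[1,2]$ to bound the convolution exponents.

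For the operator norm \eqref{eq:kopnorm-bd}, integrate $|K_i(t_i)(x,y)|$ in either variable against $\dmeas_g$ and apply \eqref{eq:hopnorm-bd} inside the $\mu_i$-integral to obtain $\norm{K_i(t_i)}_\mathrm{op} \leq e^{B\sqrt{t_i} + 2Bt_i}$; for $T$ small, $2Bt_i \leq 0.1\, B\sqrt{t_i}$ and the claim follows. For the near-diagonal bound in \eqref{eq:k*k-bd}, Fubini and the first line of \eqref{eq:h*h-bd} yield
\[
\chi_{<D}\sbrace{K_1(t_1) * K_2(t_2)} \leq e^{B(\sqrt{t_1}+\sqrt{t_2})} \iint e^{B\alpha_1\alpha_2 t_1 t_2/(\alpha_1 t_1+\alpha_2 t_2)}\, K_{B,D}(\alpha_1 t_1 + \alpha_2 t_2)\,d\mu_1\,d\mu_2.
\]
Since $\alpha_i\in[1,2]$, the scalar exponent is at most $2Bt_1t_2/t$. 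Pushing $\mu_1\tensor\mu_2$ forward under $(\alpha_1,\alpha_2)\mapsto\alpha := (\alpha_1 t_1 + \alpha_2 t_2)/t \in[1,2]$ produces a probability measure $\nu$ on $[1,2]$, and the right-hand side reads $e^{B(\sqrt{t_1}+\sqrt{t_2})+2Bt_1t_2/t}\int K_{B,D}(\cdot,\cdot;\alpha t)\,d\nu_\alpha$. The right-hand $\int K_{B,D}(\alpha t)\,d\nu_\alpha$ is already of the form demanded by Definition~\ref{def:EE} for membership in $\EE_{B,D}(t)$, so matching the claimed $e^{B\sqrt{t_1t_2/t}}\EE_{B,D}(t)$ factor reduces to the elementary inequality
\[
\sqrt{t_1}+\sqrt{t_2}+2t_1 t_2/t \leq \sqrt{t}+\sqrt{t_1 t_2/t}.
\]
Writing $a=\sqrt{t_1}$, $b=\sqrt{t_2}$, the gap $\sqrt{a^2+b^2}+ab/\sqrt{a^2+b^2}-(a+b)$ is non-negative (squaring reduces to $a^2 b^2\geq 0$); a direct expansion then shows it dominates the $O(t)$ cross-term $2t_1t_2/t$ once $T$ is small enough, both in the balanced regime $t_1\asymp t_2$ and in the degenerate regime where one $t_i$ is much smaller than the other.

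For the far-diagonal bound, the same Fubini decomposition together with the second line of \eqref{eq:h*h-bd} and $\alpha_1 t_1 + \alpha_2 t_2 \leq 2t$ gives
\[
\norm{\chi_{>D}\sbrace{K_1(t_1) * K_2(t_2)}}_\mathrm{ker} \leq 4\, e^{B(\sqrt{t_1}+\sqrt{t_2})}\, t^2\, e^{-D^2/(18 t)},
\]
and, for $T$ small enough relative to $B$ and $D$, the scalar prefactor is absorbed into the slack $e^{D^2/(18t)-D^2/(20t)}=e^{D^2/(180t)}$, yielding $t^2 e^{-D^2/(20t)}$. The main obstacle is the elementary inequality appearing in the second step: the non-negative gap $\sqrt{t}+\sqrt{t_1 t_2/t}-\sqrt{t_1}-\sqrt{t_2}$ can degenerate when one of $t_1,t_2$ shrinks, and showing that it still dominates $2t_1t_2/t$ for every splitting is what determines exactly how small $T$ must be chosen.
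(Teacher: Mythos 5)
Your proof is correct and follows essentially the same route as the paper's: unpack the defining probability measures from Definition~\ref{def:EE}, apply Lemma~\ref{lm:h*h-bd} inside the double $\mu_1\otimes\mu_2$ integral, push forward to a measure on $[1,2]$, and then absorb the accumulated exponential factors into the $e^{B\sqrt{t}}$ slack. The one place you go beyond the paper is in actually verifying the elementary inequality $\sqrt{t_1}+\sqrt{t_2}+2t_1t_2/t \leq \sqrt{t}+\sqrt{t_1t_2/t}$ for small $t$, which the paper dispatches with ``if $t$ is small enough''; your observation that the gap $\sqrt{a^2+b^2}+ab/\sqrt{a^2+b^2}-(a+b)$ equals $a^2b^2/\bigl[\sqrt{a^2+b^2}\,\bigl((a^2+b^2+ab)+(a+b)\sqrt{a^2+b^2}\bigr)\bigr]$, hence scales like $\sqrt{t_1t_2/t}\cdot\sqrt{t_1t_2}/t$ while the cross term $2t_1t_2/t$ is smaller by a factor of order $\sqrt{t}$ uniformly in the splitting, is exactly the content that the paper's ``for small $t$'' is quietly invoking.
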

%%%%%%%%%
%%%%%%%%%

%%%%%%%%%
%%%%%%%%%
\begin{proof}For $i = 1,2$, choose measures $\dmeas \mu_{i,\alpha}$ on
  $[1,2]$ such that $\abs{K_i(t)}\leq
  e^{B\sqrt{t}}\int_1^2K_{B,D}(\alpha t) \dmeas \mu_{i,\alpha}$.  For
  Eq.~\eqref{eq:kopnorm-bd}, 
	\begin{align*}
		\norm{K_1(t)}_{\mathrm{op}} & \leq
                e^{B\sqrt{t}}\norm{\int_1^2 K_{B,D}(\alpha
                  t)\dmeas\mu_{1,\alpha}}_{\mathrm{op}}\\ 
		& \leq e^{B\sqrt{t}}\int_1^2 e^{B \alpha t}
                \dmeas\mu_{1,\alpha} \leq e^{B \sqrt{t} + 2Bt} \leq
                e^{1.1B\sqrt{t}}  ,
	\end{align*}
using	Eq.~\eqref{eq:hopnorm-bd}, and assuming  $t$ is  small enough
for the final inequality to hold.
 For the first line of
  Eq.~\eqref{eq:k*k-bd}, use the first line of Eq.~\eqref{eq:h*h-bd}
  to get
	\begin{align*}
		\abs{\chi_{<D}\sbrace{K_1(t_1)*K_2(t_2)}} &\leq
                e^{B\sqrt{t_1}+B\sqrt{t_2}}\int_1^2\int_1^2\chi_{<D}K_{B,D}(\alpha
                t_1)*K_{B,D}(\beta t_2) \dmeas\mu_{1,\alpha}
                \dmeas\mu_{2,\beta}\\ 
		&\leq
                e^{B\sqrt{t_1}+B\sqrt{t_2}}\int_1^2\int_1^2e^{B\alpha
                  \beta t_1 t_2/(\alpha t_1+ \beta t_2)} K_{B,D}(\alpha
                t_1+\beta t_2) \dmeas\mu_{1,\alpha}
                \dmeas\mu_{2,\beta}\\ 
		&\leq e^{B\sqrt{t_1}+B\sqrt{t_2} + 2Bt_1t_2/t}\int_1^2 K_{B,D}(\gamma t) \dmeas\nu_{\gamma} \\
		&\leq e^{B\sqrt{t}}e^{B\sqrt{t_1t_2/t}}\int_1^2 K_{B,D}(\gamma t) \dmeas\nu_{\gamma}  \in e^{B\sqrt{t_1t_2/t}}\EE_{B,D}(t),
	\end{align*}
	if $t$ is  small enough. Here $\gamma t= \alpha t_1 +
        \beta t_2$ and $\dmeas \nu$ is the pushforward of the product
        measure $\dmeas\mu_1 \dmeas\mu_2$ to this
        subspace.
	
	For the second line of Eq.~\eqref{eq:k*k-bd}, 
	\begin{align*}
		\norm{\chi_{>D}\sbrace{K_1(t_1)*K_2(t_2)}}_{\mathrm{ker}}
                &\leq \norm{e^{B\sqrt{t_1}+ B
                    \sqrt{t_2}}\int_1^2\int_1^2 \chi_{>D}
                  \sbrace{K_{B,D}(\alpha t_1)*K_{B,D}(\beta t_2)}
                  \dmeas\mu_{1,\alpha}
                  \dmeas\mu_{2,\beta}}_{\mathrm{ker}} \\ 
		& \leq e^{2B\sqrt{t}}\int_1^2\int_1^2\norm{\chi_{>D}
                  \sbrace{K_{B,D}(\alpha t_1)*K_{B,D}(\beta t_2)}}_{\mathrm{ker}}
                \dmeas\mu_{1,\alpha} \dmeas\mu_{2,\beta} \\ 
		&\leq e^{2B\sqrt{t}}\int_1^2\int_1^2 4 t^2e^{-D^2/(18t)} \dmeas\mu_{1,\alpha} \dmeas\mu_{2,\beta}\\
		&\leq t^2 e^{-D^2/(20t)}
	\end{align*}
	if $t$ is  small enough. The third line here follows
        from  Eq.~\eqref{eq:h*h-bd}.
\end{proof}
%%%%%%%%%
%%%%%%%%%

Continue to enlarge the class of kernels which behave  well
under kernel products to
\begin{definition} \label{def:EE'}
For, $B,D,t>0$ define $\EE'_{B,D}(t)$ to be the set
of all kernels which can be written as $K  + J$ where $K \in
\EE_{B,D}(t)$ and $\norm{J}_\mathrm{ker}\leq
te^{-D^2/(20t)}$. 
\end{definition}
This
class is also almost closed under kernel products, in a sense which
the following proposition makes precise. 

%%%%%%%%%
%%%%%%%%%
\begin{proposition}
	\label{pr:t-def}
	If $B$ is large enough,  $D$ is small enough and $T$ is small
        enough (each depending only
        on the bounds of the metric and the previous quantities)
        and if 
        $K_1$ and $K_2$ are one-parameter families of kernels with
        $K_1(t),K_2 (t) \in \EE'_{B,D}(t)$  for all $t < T,$ then, for
        $0<t_1,t_2$ and $t=t_1+t_2 < T$ 
 %%%%%
 	\be
	\norm{K_i(t)}_{\mathrm{op}}\leq e^{2B\sqrt{t}},
	\ee{kop-bd} 
 %%%%% 
 
	\be
	\abs{K_i(x,y;t)} \leq 2 (2 \pi t)^{-m/2} e^{-d(x,y)^2/(4t)} + t e^{-D^2/(20t)},
	\ee{inf-bd}
 %%%%%
    and    	
 %%%%%
        	\be
	  K_1(t_1)*K_2(t_2) \in e^{B\sqrt{t}}\EE'_{B,D}(t).
	\ee{*-bd}
        %%%%%
\end{proposition}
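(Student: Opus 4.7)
The plan is to use the decomposition $K_i = K_i^E + J_i$ guaranteed by Definition \ref{def:EE'}, with $K_i^E \in \EE_{B,D}(t)$ and $\norm{J_i}_\mathrm{ker} \leq t e^{-D^2/(20t)}$, and then verify the three conclusions in turn by combining this decomposition with the estimates of Lemma \ref{lm:k*k-bd}.

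For \eqref{eq:kop-bd}, the triangle inequality together with the bound $\norm{K_i^E}_\mathrm{op} \leq e^{1.1 B\sqrt{t}}$ from Lemma \ref{lm:k*k-bd} and the trivial $\norm{J_i}_\mathrm{op} \leq \norm{J_i}_\mathrm{ker} \leq t e^{-D^2/(20t)}$ gives a sum absorbed into $e^{2 B\sqrt{t}}$ once $T$ is small. For \eqref{eq:inf-bd}, unpack the definition of $\EE_{B,D}(t)$ to see that $\abs{K_i^E(x,y;t)} \leq e^{B\sqrt{t}}\int_1^2 K_{B,D}(x,y;\alpha t)\,\dmeas\mu_\alpha$; on the support of $\chi_{<D}$ each integrand equals $(2\pi\alpha t)^{-m/2}e^{B\abs{\vec{y}_x}^2/(5m) - \abs{\vec{y}_x}^2/(2\alpha t)}$, and using $\abs{\vec{y}_x}=d(x,y)\leq D$ together with $\alpha\leq 2$ bounds this by $(2\pi t)^{-m/2}e^{BD^2/(5m)}e^{-d(x,y)^2/(4t)}$. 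For $D$ and $T$ small the prefactor $e^{B\sqrt{t}+BD^2/(5m)}$ is at most $2$, and adding $\abs{J_i}\leq\norm{J_i}_\mathrm{ker}\leq te^{-D^2/(20t)}$ yields \eqref{eq:inf-bd}.

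The substantive step is \eqref{eq:*-bd}. Bilinearly expand
\[ K_1 * K_2 \;=\; \chi_{<D}\sbrace{K_1^E * K_2^E} \;+\; \chi_{>D}\sbrace{K_1^E * K_2^E} \;+\; K_1^E * J_2 \;+\; J_1 * K_2^E \;+\; J_1 * J_2. \]
The first summand lies in $e^{B\sqrt{t_1t_2/t}}\EE_{B,D}(t)\subseteq e^{B\sqrt{t}}\EE_{B,D}(t)$ by the first line of \eqref{eq:k*k-bd}, furnishing the $\EE_{B,D}(t)$-component required by membership in $e^{B\sqrt{t}}\EE'_{B,D}(t)$. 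The four remaining summands must combine to a kernel norm of at most $e^{B\sqrt{t}}\cdot t e^{-D^2/(20t)}$: the $\chi_{>D}$ term is handled by the second line of \eqref{eq:k*k-bd}; each cross term by $\norm{K_i^E * J_j}_\mathrm{ker}\leq\norm{K_i^E}_\mathrm{op}\norm{J_j}_\mathrm{ker}$, inserting Lemma \ref{lm:k*k-bd}'s operator bound and the kernel bound from Definition \ref{def:EE'}; and the $J_1 * J_2$ term by the product of kernel norms, which enjoys doubled exponential damping.

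The main obstacle is the cross-term estimate, since the naive bound $e^{1.1B\sqrt{t_1}}t_2 e^{-D^2/(20t_2)}$ has an operator-norm prefactor marginally larger than the intended $e^{B\sqrt{t}}$. The plan is to play $t_1$ and $t_2$ against each other: when $t_2/t$ is bounded away from $1$, the factor $e^{-D^2/(20t_2)}$ beats $e^{-D^2/(20t)}$ by an unbounded margin that absorbs the excess; when $t_2/t$ is near $1$ the complementary $t_1$ is small, forcing $e^{1.1B\sqrt{t_1}}$ close to $1$. The monotonicity of $s\mapsto s e^{-D^2/(20s)}$ packages the intermediate regime into $t_j e^{-D^2/(20t_j)}\leq t e^{-D^2/(20t)}$, after which shrinking $T$ reduces all residual factors into the $e^{B\sqrt{t}}$ slack, with ample room for the exponentially smaller $\chi_{>D}$ and $J_1 * J_2$ contributions.
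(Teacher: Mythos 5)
Your proof is correct and follows essentially the same route as the paper's: decompose each $K_i$ into its $\EE_{B,D}$-part plus a small-kernel-norm remainder, treat the $*$-product by bilinearity, split the product of the $\EE$-parts with the cutoff $\chi_{<D}/\chi_{>D}$ and invoke Lemma~\ref{lm:k*k-bd}, and absorb the cross-term overshoot (the $e^{1.1B\sqrt{t_j}}$ factor) into the $e^{B\sqrt{t}}$ slack for small $T$. Your treatment of~\eqref{eq:inf-bd} is more explicit than the paper's one-line dismissal, and for the cross-term bound you argue by a case split in $t_1/t$ rather than the paper's single symmetrized inequality $t_2e^{1.1B\sqrt{t_1}}+t_1e^{1.1B\sqrt{t_2}}\le te^{1.1B\sqrt{t/2}}$, but both observations are equivalent and both require shrinking $T$ to close the last gap.
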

%%%%%%%%%
%%%%%%%%%

%%%%%%%%%
%%%%%%%%%
\begin{remark}
	There is a minimum $B$ and a maximum $D$ and $T$  to make
        Prop.~\ref{pr:t-def}
        hold, and these numbers depend only
        on the supremum of the first few derivatives of the metric and
        its inverse (and $m$), a fact that will be crucial in
        Sect.~\ref{sc:rescale}.
        If  one chose a larger $B,$ the
        maximum $D$ and $T$ would be smaller but would still exist.
        If one chose an even smaller $D,$ the maximum $T$ would be
        smaller still.   In the definition of approximate
        semigroup and approximate heat kernel below, the choice of
        constants will also depend on the family of kernels being
        considered.
	\end{remark}
%%%%%%%%%
%%%%%%%%%

%%%%%%%%%
%%%%%%%%%
\begin{proof}
  	For Eq.~\eqref{eq:kop-bd}, write $K_i(t)= \widetilde{K}_i(t) + J_i(t)$  where $\widetilde{K}_i(t)\in \EE_{B,D}(t)$
  and $\norm{J_i(t)}_{\mathrm{ker}}\leq t e^{-D^2/(20t)}.$ Then, using Eq.~\eqref{eq:kopnorm-bd},
\[
	\norm{K_i(t)}_{\mathrm{op}}\leq
        \norm{\widetilde{K}_i(t)}_{\mathrm{op}} +
        \norm{J_i(t)}_{\mathrm{op}}\leq e^{1.1B\sqrt{t}} + e^{-D^2/t}
        \leq e^{2B\sqrt{t}}
\]
for small enough $t.$ 	Eq.~\eqref{eq:inf-bd} follows from the
definition of $\EE'_{B,D}$ for small enough $D$ and $t.$

 For Eq.~\eqref{eq:*-bd}, use Eq.~\eqref{eq:k*k-bd} to write
  $\widetilde{K}_1(t_1)*\widetilde{K}_2(t_2)=\widetilde{K}_3(t_1,t_2) + J_3(t_1,t_2),$ where
  $\widetilde{K}_3(t_1,t_2) \in e^{B\sqrt{t}}\EE_{B,D}(t)$ and
  $\norm{J_3(t_1,t_2)}_{\mathrm{ker}} \leq
  t^2e^{-D^2/(20t)}.$
Then
	\begin{align*}
          &\norm{K_1(t_1)*K_2(t_2) - \widetilde{K}_3(t_1,t_2)}_{\mathrm{ker}} \\
          &\qquad \leq  \norm{J_3(t_1,t_2)}_{\mathrm{ker}} +
          \norm{\widetilde{K}_1(t_1)*J_2(t_2)}_{\mathrm{ker}} +
          \norm{J_1(t_1)*\widetilde{K}_2(t_2)}_{\mathrm{ker}} +
          \norm{J_1(t_1)*J_2(t_2)}_{\mathrm{ker}}\\ 
          &\qquad \leq  t^2e^{-D^2/(20t)} +
          \norm{\widetilde{K}_1(t_1)}_{\mathrm{op}}\norm{J_2(t_2)}_{\mathrm{ker}}
          +
          \norm{J_1(t)1)}_{\mathrm{ker}}\norm{\widetilde{K}_2(t_2)}_{\mathrm{op}}
          +
          \norm{J_1(t_1)}_{\mathrm{ker}}\norm{J_2(t_2)}_{\mathrm{ker}}\\ 
          &\qquad \leq  t^2e^{-D^2/(20t)} +
          e^{1.1Bt_1^{1/2}}t_2e^{-D^2/(20t_2)} +
          t_1e^{-D^2/(20t_1)}e^{1.1Bt_2^{1/2}} +
          t_1e^{-D^2/(20t_1)}t_2e^{-D^2/(20t_2)}\\ 
          &\qquad \leq 
         \frac{5}{4}t^2 e^{-D^2/(20t)} +  te^{1.1B(t/2)^{1/2}}e^{-D^2/(20t)} \leq e^{Bt^{1/2}}t e^{-D^2/(20t)}
	\end{align*} 
	where the third inequality comes from
        Eq.~\eqref{eq:kopnorm-bd}. 
        In the fourth, one straightforward estimate gives
        $t_1e^{-D^2/(20t_1)}t_2e^{-D^2/(20t_2)} \leq \frac{1}{4}
        t^2e^{-D^2/(20t)}$. Further, $t_2e^{1.1Bt_1^{1/2}} +
        t_1e^{1.1Bt_2^{1/2}} \leq te^{1.1B(t/2)^{1/2}}$, for small
        enough $t$.
        The fifth is a straightforward estimate for
        small enough $t.$

\end{proof}
%%%%%%%%%
%%%%%%%%% 
This proposition provides the basis on which to
define a norm:
\begin{definition} \label{def:t-norm}
For given $B, D, t > 0$  define the $t$-norm $\norm{K}_{(t)}$ to be the
smallest positive real number such that $K/\norm{K}_\tnm{t} \in
\EE'_{B,D}(t)$ if it exists. (Otherwise set $\norm{K}_\tnm{t} =
\infty$.) 
\end{definition}
%%%%%%%%%
%%%%%%%%%
\begin{corollary}
  \label{cr:t-norm}
  If $B$ is large enough,  $D$ is small enough and $t$ is small enough
  (each depending only on the bounds of the metric and the previous
  constants), then  for the associated
        $t$-norm and for families of kernels $K_1,$ and $K_2$,   
 %%%%% 
	\be
	\norm{K_i}_\mathrm{op} \leq e^{2B\sqrt{t}} \norm{K_i}_{\tnm{t}},
	\ee{top-bd}
 %%%%% 
 %%%%% 
	\be
	\abs{K_i(x,y;t)} \leq  \norm{K_i}_\tnm{t}\sbrace{2 (2 \pi t)^{-m/2} e^{-d(x,y)^2/(4t)} + t e^{-D^2/(20t)}};
	\ee{tpntws-bd}
 %%%%% 
 in particular, there is an $A_2>0$ such that 
 %%%%% 
	\be
	\norm{K_i(t)}_\infty \leq  A_2t^{-m/2}\norm{K_i}_\tnm{t}.
	\ee{tinf-bd}
        %%%%%
        Finally, 
 %%%%% 
	\be \norm{K_1(t_1)*K_2(t_2)}_\tnm{t}\leq
        e^{B\sqrt{t}}\norm{K_1}_\tnm{t_1}\norm{K_2}_\tnm{t_2}.
        \ee{t*-bd}
 %%%%% 
\end{corollary}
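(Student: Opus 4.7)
The four inequalities are essentially a repackaging of Proposition~\ref{pr:t-def} once one observes that membership in $\EE_{B,D}(t)$ and in $\EE'_{B,D}(t)$ scales homogeneously: multiplying a kernel $K$ by a positive constant $c$ scales every bound in Def.~\ref{def:EE} and Def.~\ref{def:EE'} by $c$, so $K/c \in \EE'_{B,D}(t)$ iff the bounds for $K$ hold with an extra factor of $c$. The plan is therefore to apply each of the bounds of Prop.~\ref{pr:t-def} to the normalized kernel $K_i/\norm{K_i}_\tnm{t}$ and read off the conclusion.

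For Eq.~\eqref{eq:top-bd} and Eq.~\eqref{eq:tpntws-bd}, the argument is direct: by definition $K_i/\norm{K_i}_\tnm{t} \in \EE'_{B,D}(t)$, so Eqs.~\eqref{eq:kop-bd} and~\eqref{eq:inf-bd} from Prop.~\ref{pr:t-def} apply to $K_i/\norm{K_i}_\tnm{t}$ and, multiplying through by $\norm{K_i}_\tnm{t}$, give the stated inequalities. (One should note that the infimum defining the $t$-norm is attained, or else apply the argument to $\norm{K_i}_\tnm{t}+\epsilon$ and let $\epsilon\to 0^+$; this is the only real care needed.)

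For Eq.~\eqref{eq:tinf-bd}, take the pointwise supremum over $x,y$ in Eq.~\eqref{eq:tpntws-bd}: the Gaussian term is maximized at $d(x,y)=0$ and contributes $2(2\pi t)^{-m/2}$, while the tail $te^{-D^2/(20t)}$ decays faster than any polynomial in $t$ as $t\to 0$ and is bounded on the relevant range of $t$, so it is dominated by $Ct^{-m/2}$ for some constant $C$ depending only on $m$ and $D$. Combining and absorbing the numerical constants into $A_2$ gives the claim.

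For the submultiplicativity Eq.~\eqref{eq:t*-bd}, apply Eq.~\eqref{eq:*-bd} of Prop.~\ref{pr:t-def} to the two normalized kernels $K_1/\norm{K_1}_\tnm{t_1}$ and $K_2/\norm{K_2}_\tnm{t_2}$, both of which lie in $\EE'_{B,D}$ of their respective times. The conclusion is that their $*$-product lies in $e^{B\sqrt{t}}\EE'_{B,D}(t)$, i.e.\ its $t$-norm is at most $e^{B\sqrt{t}}$; multiplying back by $\norm{K_1}_\tnm{t_1}\norm{K_2}_\tnm{t_2}$ yields the stated bound. The main obstacle, if there is one, is simply bookkeeping around the attainment of the infimum in Def.~\ref{def:t-norm} and choosing $B$, $D$, $t$ small (or large) enough that the constants from Prop.~\ref{pr:t-def} fit; no new analytic input is required beyond that proposition.
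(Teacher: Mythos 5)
Your proposal is correct and takes essentially the same approach as the paper, which simply observes that Eqs.~\eqref{eq:top-bd}, \eqref{eq:tpntws-bd}, and \eqref{eq:t*-bd} are restatements of Eqs.~\eqref{eq:kop-bd}, \eqref{eq:inf-bd}, and~\eqref{eq:*-bd} applied to the normalized kernels $K_i/\norm{K_i}_{\tnm{t}} \in \EE'_{B,D}(t)$, with Eq.~\eqref{eq:tinf-bd} following by taking the supremum in Eq.~\eqref{eq:tpntws-bd}. You spell out the scaling homogeneity and the attainment-of-infimum point more explicitly than the paper, but the underlying argument is identical.
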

%%%%%%%%%
%%%%%%%%%
\begin{proof}
  Eqs.~\eqref{eq:top-bd},
\eqref{eq:tpntws-bd}  and \eqref{eq:t*-bd} of the corollary are
simply restatements of Eqs.~\eqref{eq:kop-bd}, \eqref{eq:inf-bd} and~\eqref{eq:*-bd} of
the proposition. Eq.~\eqref{eq:tinf-bd} is a separately-useful
immediate consequence of Eq.~\eqref{eq:tpntws-bd}.
\end{proof}
%%%%%%%%%
%%%%%%%%%

%%%%%%%%%%%%%%%%%%%%%%%%%%%%%%
%%%%%%%%%%%%%%%%%%%%%%%%%%%%%%
%%%%%%%%%%%%%%%%%%%%%%%%%%%%%%
%%%%%%%%%%%%%%%%%%%%%%%%%%%%%%
\subsection{Approximate semigroups and approximate kernels}
As noted in the introduction, the definition of approximate semigroup
below will ensure that the fine-partition limit of kernel
products of approximate semigroups converge.  The definition of
approximate heat kernel will ensure that it is an
approximate semigroup and that the fine-partition limit of its kernel
products is in fact the
heat kernel of the associated operator. 
%%%%%%%%%
%%%%%%%%% 
\begin{definition} \label{def:approx-semigroup}
A family of kernels $K(t)$ is an \emph{approximate semigroup} with
constants $(B,C,D,T)$ if  for $0<t_1,t_2$ and $t=t_1+t_2<T$ 
with the $t$-norm of Def.~\ref{def:t-norm} 

 %%%%% 
\be
	 \norm{K(t)}_{(t)}\leq 1
\ee{k-bd}
 %%%%% 
and
 %%%%% 
\be
	\norm{K(t_1)*K(t_2)-K(t)}_{(t)}\leq Ct^{3/2}.
\ee{k*k-est}
 %%%%%    
\end{definition}
%%%%%%%%%
%%%%%%%%% 

%%%%%%%%%
%%%%%%%%% 
\begin{remark}
Note that Eq.~\eqref{eq:k-bd} implies an approximate semigroup $K(t)$ must be
in $\EE'(t)$ for all $t<T$. Moreover, accordingly writing $K(t)=\widetilde{K}(t) + J(t)$ for $\widetilde{K}(t) \in \EE(t),$
the following lemma says 
it suffices to check Eq.~\eqref{eq:k*k-est} only on
$\widetilde{K}(t)$. 
\end{remark}
%%%%%%%%%
%%%%%%%%% 
%%%%%%%%%
%%%%%%%%% 
\begin{lemma} \label{lm:k*k-est}
If $K(t) = \widetilde{K}(t) + J(t)\in \EE'(t)$ with  $\widetilde{K}(t) \in \EE(t)$  satisfying
Eq.~\eqref{eq:k*k-est}, then $K(t)$ satisfies Eq.~\eqref{eq:k*k-est},
albeit with potentially smaller $D,$ larger $C,$ and smaller $T$. 
\end{lemma}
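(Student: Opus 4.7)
The plan is to decompose
\[
K(t_1)*K(t_2) - K(t) = \bigl[\widetilde{K}(t_1)*\widetilde{K}(t_2) - \widetilde{K}(t)\bigr] + \widetilde{K}(t_1)*J(t_2) + J(t_1)*\widetilde{K}(t_2) + J(t_1)*J(t_2) - J(t),
\]
and bound each piece in the $t$-norm with a slightly shrunken cutoff $D' < D$.  The first bracket already has $(t)$-norm at most $C t^{3/2}$ by hypothesis on $\widetilde{K}$.  For the four remaining terms I would combine $\norm{J(t_i)}_{\mathrm{ker}} \le t_i e^{-D^2/(20 t_i)}$ (from Def.~\ref{def:EE'}), the operator-norm bound $\norm{\widetilde{K}(t_i)}_{\mathrm{op}} \le e^{1.1 B \sqrt{t_i}}$ of Lemma~\ref{lm:k*k-bd}, and the submultiplicativity $\norm{A*B}_{\mathrm{ker}} \le \norm{A}_{\mathrm{op}} \norm{B}_{\mathrm{ker}}$ (and its $A$--$B$ symmetric counterpart).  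Each of the four terms' kernel norms is thereby bounded by a polynomial in $t_1,t_2,t$ times $e^{-D^2/(20 t_*)}$ for some $t_* \in \{t_1, t_2, t\}$.

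The central analytic step is to trade the $D$ in these exponentials for a slightly smaller $D'$ in return for arbitrary polynomial powers of $t$.  Specifically, for any $D' < D$ and any $N \ge 0$, and for $t$ small enough depending on $N$, $D$, $D'$,
\[
s\, e^{-D^2/(20 s)} \;\le\; C^* t^N e^{-D'^2/(20 t)} \quad \text{for all } 0 < s \le t.
\]
This follows from the elementary inequality $-D^2/(20 s) + D'^2/(20 t) \le -(D^2 - D'^2)/(20 t)$ (valid since $s \le t$) together with the super-polynomial decay of $e^{-c/t}$ as $t \to 0$.  Applying it with $N = 5/2$ converts each of the four kernel-norm bounds into a $t$-norm bound, now with cutoff $D'$, of order $t^{3/2}$.

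One subtlety remains: the first bracket's bound is in the $t$-norm with the original $D$, while the other terms live in the $t$-norm with the new $D'$.  Any element of $\EE_{B,D}(t)$ splits as its restriction to $d(x,y) < D'$, which lies in $\EE_{B,D'}(t)$, plus its restriction to $D' \le d(x,y) < D$, on which $e^{-\abs{\vec{y}_x}^2/(2t)} \le e^{-D'^2/(2 t)}$; the second piece has kernel norm far smaller than $t e^{-D'^2/(20 t)}$ for small $t$ and is thus absorbed into the $J$ part of $\EE'_{B,D'}(t)$.  Hence $\norm{\cdot}_{(t)}$ with cutoff $D'$ is bounded by a fixed multiple of $\norm{\cdot}_{(t)}$ with cutoff $D$ on the kernels in question, so the bracket contributes $O(t^{3/2})$ in the new norm as well.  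I expect the main obstacle to be not any individual estimate but the joint tuning of $(B, C', D', T')$ so that all these super-polynomial comparisons, together with the hypotheses of Lemma~\ref{lm:k*k-bd}, hold simultaneously.
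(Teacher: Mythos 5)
Your proof is correct and follows essentially the same route as the paper: the same five-term decomposition, the same use of $\norm{\widetilde{K}(t_i)}_{\mathrm{op}} \leq e^{1.1B\sqrt{t_i}}$ and $\norm{J(t_i)}_{\mathrm{ker}} \leq t_i e^{-D^2/(20t_i)}$ with submultiplicativity, and the same central move of shrinking $D$ so that the super-polynomial decay $e^{-(D^2-D'^2)/(20t)}$ absorbs the extra polynomial and $e^{1.1B\sqrt{t}}$ factors. Your explicit discussion of why the bracket term, bounded in $\EE'_{B,D}(t)$, also sits (after restriction and absorption of the annular piece) in $\EE'_{B,D'}(t)$ is a point the paper's proof leaves implicit, and it is worth spelling out.
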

\begin{proof}
  Consider
  \begin{align}
    & (\widetilde{K}  + J)(t_1)*(\widetilde{K}  + J)(t_2)  - (\widetilde{K}  + J)(t) =
    \nonumber \\
    & \qquad \qquad \widetilde{K} (t_1)*\widetilde{K} (t_2) - \widetilde{K} (t) +
    \widetilde{K} (t_1)*J(t_2)\nonumber  \\
    & \qquad \qquad + J(t_1)*\widetilde{K} (t_2) +  J(t_1)*J(t_2) - J(t) .\nonumber
  \end{align}
  By hypothesis, the  first two terms on the right-hand side combine to give $Ct^{3/2}$
  times an element of $\EE'(t)$. Applying Eq~\eqref{eq:kopnorm-bd}
  bounds  each of the next two terms by 
  \[
  e^{1.1 B\sqrt{t}} t
  e^{-D^2/(20t)}.
  \] 
  Replacing $D$ with $ D/2$,
  these terms are thus each
  bounded by $t^{5/2}  e^{-D^2/(20t)}$, for small $T$. Easy
  estimates give the same bound for the remaining two
  terms, so the sum on the right-hand side, after division by $(C + 3) t^{3/2}$ lies  in $\EE'$.
\end{proof} 
%%%%%%%%%
%%%%%%%%% 

%%%%%%%%%
%%%%%%%%% 
\begin{definition} \label{def:ahk}
  Let $\Delta$ denote a second order elliptic differential operator
  defined on $O \in \RR^m$ acting on functions with  values in
  $\RR^n.$
  Suppose the second-order coefficients of $\Delta$ are  the inverse of
  the metric $g$ 
  (i.e., $\Delta$ is a generalized Laplacian)
  and the lower-order   coefficients  are bounded in sup
  norm. A family of  kernels $K(t)$ is an \emph{approximate heat kernel}
  for $\Delta$ with constants $(B,C,D,T)$, all positive, if  it is
  differentiable to first order in $t \in (0,T)$ and to second order in the
  spatial variables, and if, for $t<T$ and using the $t$-norm with constants $(B,D),$
  %%%%% 
  \be
  \norm{K(t)}_{(t)}\leq 1,
  \ee{k2-bd}
  %%%%% 
  for all $f\colon O \to \RR^n$
  %%%%% 
  \be
  \lim_{t \to 0} K(t)*f =f, 	
  \ee{tto0}
  %%%%% 
  %%%%% 
  \be 
  \lim_{t \to 0} \frac{K(t)*f -f}{t}  = \frac{\Delta}{2} f 
  \ee{kf}
  %%%%% 
  (both pointwise),
  %%%%% 
  \be
  \norm{\frac{\partial}{\partial x}K(x,y;t)}_{(t)},\norm{\frac{\partial}{\partial y}K(x,y;t)}_{(t)}\leq B/t,
  \ee{kprime-bd}
  %%%%% 
  and
  %%%%% 
  \eqa{ahk}{
    \norm{\parens{\frac{1}{2}\Delta_x  - \frac{\partial}{\partial t}}
      K(x,y;t)}_\tnm{t} & \leq  Ct^{1/2}\nonumber \\
    \norm{\parens{\frac{1}{2}\Delta^*_y  - \frac{\partial}{\partial
          t}} K(x,y;t)}_\tnm{t} & \leq Ct^{1/2},
  }
  %%%%% 
  where
  $\Delta_x$ acts from the left on  $\End(\RR^n)$ and  $\Delta_y^*$ acts
  from
  the right via $\int_{O} \Delta_y^*[h^*(y)] \cdot f(y)
  \dmeas_g y = \int_{O} h^*(y) \cdot \Delta_y[f(y)]
  \dmeas_gy.$
 \end{definition}

%%%%%%%%%
%%%%%%%%% 

 %%%%%%%%%
 %%%%%%%%%
\begin{proposition}\label{pr:approx-heat}
  Suppose $K(t)$ is an approximate heat kernel for the elliptic
  operator $\Delta$ and metric $g$ with constants $(B,C,D,T).$  
  Then there exist positive constants $B_1,C_1,D_1,T_1$ (each depending
  on the bounds of the metric and $\Delta$, on $B,C,D,T$ and on the previous
  constants) such that $K$ is an approximate semigroup with constants
  $(B_1,C_1,D_1,T_1)$. 
\end{proposition}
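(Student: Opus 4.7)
The first condition for an approximate semigroup, $\norm{K(t)}_\tnm{t}\le 1$, is exactly Eq.~\eqref{eq:k2-bd} from the hypothesis, so the constants $B_1,D_1$ are inherited directly from the $t$-norm constants $B,D$ of the approximate heat kernel. The content therefore lies entirely in establishing the semigroup estimate Eq.~\eqref{eq:k*k-est}.

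The approach is a Duhamel-type interpolation in the split parameter. Fix $t=t_1+t_2<T_1$ and, for $s\in(0,t_2]$, define
\[ \phi(s)=K(t-s)*K(s). \]
Then $\phi(t_2)=K(t_1)*K(t_2)$, while Eq.~\eqref{eq:tto0} together with the pointwise bounds on $K(\cdot;t-s)$ should give $\phi(s)\to K(t)$ pointwise as $s\to 0^+$. Differentiating under the integral using the two adjoint forms of the approximate heat equation Eq.~\eqref{eq:ahk},
\[ \phi'(s)=\Bigl(-\tfrac{1}{2}\Delta^*_y K(t-s)*K(s)+\tfrac{1}{2}K(t-s)*\Delta_y K(s)\Bigr)+R(s), \]
where $R(s):=E_2(t-s)*K(s)-K(t-s)*E_1(s)$ collects the error kernels $E_i$ from Eq.~\eqref{eq:ahk}, of $t$-norms at most $C(t-s)^{1/2}$ and $Cs^{1/2}$ respectively. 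The parenthesized expression vanishes by the defining adjoint property of $\Delta^*_y$, so $\phi'(s)=R(s)$. The product bound Eq.~\eqref{eq:t*-bd} and the hypothesis $\norm{K(r)}_\tnm{r}\le 1$ then yield
\[ \norm{R(s)}_\tnm{t}\le e^{B\sqrt{t}}\,C\bigl[(t-s)^{1/2}+s^{1/2}\bigr], \]
and integrating from $0$ to $t_2$ gives the required $C_1t^{3/2}$ bound after choosing $T_1$ small enough to absorb the exponential prefactor and suitably enlarging $C$.

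The main obstacle is the endpoint behavior as $s\to 0$: $\phi(0)$ cannot be evaluated directly, since the $t$-norm on $K(s)$ is only controlled uniformly in $s$ and not across the passage to a delta-like object. The plan is to integrate $\phi'$ on $[s_0,t_2]$ and then let $s_0\to 0$, verifying that the pointwise limit from Eq.~\eqref{eq:tto0} is compatible with the $t$-norm bound on $\phi(t_2)-\phi(s_0)$. This compatibility should follow from lower semicontinuity of membership in $\EE'_{B,D}(t)$ under pointwise convergence on compact sets, combined with the pointwise near-diagonal decay Eq.~\eqref{eq:tpntws-bd} that controls the $\norm{\cdot}_\mathrm{ker}$ contribution. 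A secondary point is justifying the formal integration by parts hidden in the cancellation of the $\Delta$ terms: Lemma~\ref{lm:k*k-est} already handles the $\EE'$-versus-$\EE$ decomposition in such arguments, and for $(x,z)$ at distance at least $2D$ from $\partial O$ the $y$-support of the integrand lies in the interior of $O$, so the adjoint relation of Def.~\ref{def:ahk} applies verbatim, with an exhaustion argument covering the remaining case.
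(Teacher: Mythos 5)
Your overall strategy — write $K(t_1)*K(t_2)-K(t)$ as an integral in the splitting parameter, differentiate, and apply the approximate heat equation from Eq.~\eqref{eq:ahk} — is exactly the approach the paper takes. But there is a genuine gap in the central step. You assert that the parenthesized expression
\[
-\tfrac12\Delta^*_y K(t-s)*K(s)+\tfrac12 K(t-s)*\Delta_y K(s)
\]
``vanishes by the defining adjoint property of $\Delta^*_y$.'' It does not. The adjoint relation in Def.~\ref{def:ahk} is Green's identity, and integrating by parts over the region $R=B(x,D)\cap B(z,D)$ (the set where both factors can be nonzero) produces a boundary integral over $\partial R$. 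Kernels in $\EE_{B,D}(t)$ are merely bounded by a cutoff kernel supported in $d(x,y)<D$; they need not vanish smoothly at $d(x,y)=D$, and the canonical example $K_\Delta$ of Eq.~\eqref{eq:kgen-def} has a genuine jump discontinuity there because of the $\chi_{<D}$ factor. So the boundary integral is nonzero and must be estimated, not discarded. You flag a concern about integration by parts but misdiagnose it: the issue is not whether the $y$-support lies in the interior of $O$ (that is easy to arrange for $D$ small), it is the cutoff boundary at distance $D$ from $x$ and $z$.

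Estimating that boundary term is what forces the case split you omit. When $d(x,z)\leq D/2$, every boundary point $y$ satisfies both $d(x,y)\geq D/2$ and $d(y,z)\geq D/2$, so Eqs.~\eqref{eq:tpntws-bd} and~\eqref{eq:kprime-bd} make the boundary integrand exponentially small; the paper shows the resulting term has $t$-norm $O(t^{3/2})$. When $d(x,z)>D/2$ the Duhamel argument is not used at all — there one simply bounds $K(t_1)*K(t_2)$ pointwise via Eqs.~\eqref{eq:t*-bd} and~\eqref{eq:tpntws-bd} and observes that both it and $K(t)$ are already exponentially small. Without the case split your boundary term has no useful bound. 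One further point: the reduction to $\widetilde K\in\EE_{B,D}$ via Lemma~\ref{lm:k*k-est} belongs at the start of the argument (so that the cutoff at distance $D$ is in force), not as a remedial remark at the end. Your secondary concern about the $s\to 0^+$ endpoint is legitimate and in fact is glossed over in the paper's own write-up, so raising it is a small improvement, but it does not compensate for the missing boundary-term analysis.
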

 %%%%%%%%%
 %%%%%%%%%

 %%%%%%%%%
 %%%%%%%%%
 \begin{proof} 
   Make $B$ large enough, and $D$ and $T$ small enough that
   Prop.~\ref{pr:t-def}, Cor.~\ref{cr:t-norm} and Lemma~\ref{lm:k*k-est} hold.
   According to Lemma~\ref{lm:k*k-est}, it suffices to prove
   Eq.~\eqref{eq:k*k-est} for $K(x,y;t)
        \in \EE_{B,D}(t).$   For $d(x,z) \geq D/2$ 
        Eqs.~\eqref{eq:t*-bd} and~\eqref{eq:tpntws-bd} imply
	\[\abs{\sbrace{K(t_1) *  K(t_2)}(x,z)} \leq 2e^{Bt^{1/2}}(2 \pi t)^{-m/2}e^{-D^2/(16t)}+ t^2e^{-D^2/(20t)} \leq C_1t^{5/2}e^{-D_1^2/(20t)}\]
	for large enough $C_1,$ and small enough  $D_1$ and $t$,
        giving Eq.~\eqref{eq:k*k-est}. 

	For $d(x,z)
	\leq  D/2,$ the left hand side of Eq.~\eqref{eq:k*k-est} is
	  \begin{align*}
	    & \norm{K(t_1) *  K(t_2)- K(t)}_{\tnm{t}}  \leq   \int_{0}^{t_1} \norm{\frac{\partial}{\partial
	      \tau}\sbrace{K (\tau) *   K (t-\tau)}}_{\tnm{t}} \dmeas \tau\\
	    &= \int_{0}^{t_1}   \norm{\dot{K}(\tau) *   K (t-\tau)- K (\tau)  *   \dot{K}(t-\tau)}_{\tnm{t}} \dmeas \tau\\
	    & \leq \int_{0}^{t_1}
	    \norm{\frac{1}{2}\cbrace{\Delta_{y}^* \sbrace{K(\tau)}
	       *   K(t-\tau)  -  K(\tau) *   \Delta_{y}\sbrace{K(t-\tau)} }}_{\tnm{t}} \dmeas \tau \\
	    &\qquad +\int_{0}^{t_1}
	    C\tau^{1/2} e^{Bt^{1/2}}  +
	      C(t-\tau)^{1/2} e^{Bt^{1/2}} \dmeas \tau \\
            &  \leq 
	    \norm{\int_{0}^{t_1}\frac{1}{2}\int_{\partial_y}
	    \delta_y \sbrace{K(\tau)} \cdot_y K(t-\tau) \,\dmeas_g y
            \, \dmeas \tau }_{\tnm{t}} +
	    \frac{4}{3} C e^{B t^{1/2}} t^{3/2}
	  \end{align*}
          where  the third  inequality uses
          Eqs.~\eqref{eq:ahk} and~\eqref{eq:k2-bd} of the definition of
	  an approximate heat kernel and Eq.~\eqref{eq:t*-bd}
	  of Prop.~\ref{pr:t-def}, and  the subscript $y$ indicates
	  the operators act on the fiber over the middle copy of
	  $\RR^n$ (the one that $*$ contracts over). 
          The first term of the last
	  equation, call it $\norm{J(x,z;t)}_{(t)},$ is the 
	  boundary term obtained using the formal adjoint of
	  $\Delta.$ That is,  $\delta$ is the first order
          operator for which
	  \[
          \int_{R} \parens{f \cdot \Delta h - \Delta^*f \cdot h}
          \dmeas_g y = \int_{\partial R} \delta f \cdot h \, \dmeas_g y ,
          \]
	   and the subscript means the integral is over $y$ such
           that one of
	 $d(x,y)$ and $d(y,z)$ is equal to $D$ and the other less.  So, since
	  $d(x,z) \leq D/2,$  both $d(x,y) \geq D/2$ and
	  $d(y,z) \geq D/2.$ 
          The boundary integral (notice it has finite
          volume with a bound depending on $D$ and bounds on the first
          two derivatives of the metric)  can thus be bounded by a
          multiple of
          $e^{-D^2/(20t_1)}e^{-D^2/(20t_2)}/\sbrace{P_1(t_1)P_2(t_1)}$
          where $P_i$ are polynomials (uses Eq.~\eqref{eq:kprime-bd}).
          Therefore $\norm{J(t)}_\infty \leq ct^{5/2}e^{-d^2/t}$ for some
          $c,d.$ 

For fixed $x$ the set of $z$ for which $J(x,z;t)$ is
          nonzero is a ball of radius $D/2$ (and likewise for $x$ and
          $z$ reversed) which has bounded volume (depending on $D$ and
          the bounds of the first two derivatives of the metric), so
          also $\norm{J(t)}_{\mathrm{op}}= c't^{5/2}e^{-d^2/t}$ and
          therefore $\norm{J(t)}_{\tnm{t}} \leq C_1t^{3/2}$ with the
          appropriate constants in the definition of the $t$-norm.
          Eq.~\eqref{eq:k*k-est} 
	  follows. 

\end{proof}
 %%%%%%%%%
 %%%%%%%%%
 \subsection{Manifolds}
 
 \begin{definition}  \label{def:tame}
   Suppose $\VV$ is an $n$-dimensional vector bundle over an
   $m$-dimensional
manifold $M$ with Riemannian metric $g.$
   An atlas of charts
   for $\VV$ over $M$ is 
   \emph{tame} if  
   \begin{itemize}
   \item All derivatives of $g$ and $g^{-1}$ expressed in coordinates of order $0 \leq k \leq 6$  are uniformly bounded in sup norm on all charts.
   \item There is a $D_0>0$ such that the ball of radius $D_0$ around any point is contained in a single chart.
   \end{itemize}
   The tuple $(M,g,\VV)$ is \emph{tame} if it admits a tame atlas.  
   If  $\Delta$ is a 
   generalized Laplacian, i.e. 
   a second-order elliptic
   operator on sections of $\VV$ 
   which in local coordinates is
 of the form
   \[\Delta = g^{ij} \frac{\partial^2}{\partial x_i \partial x_j} + A^i
   \frac{\partial}{\partial x_i} + B\]
   (with $A^i$ and $B$ 
   valued in   $\operatorname{Matrix}_{n,n}$),
   and if there is a tame atlas so that the 
   derivatives of order $0 \leq k \leq 2$ of  $A^j$ and $B$ 
   in all charts  are uniformly bounded in sup norm, then
   $(M,g,\VV,\Delta)$ is 
   \emph{tame.} 
\end{definition}

Of course any such data is tame if  $M$ is compact and everything is
smooth.

Let $\pi_i : M \times M \to M$ be the projection onto the $i$th copy
of $M$, and consider the bundle $\Hom_{yx}$ of homomorphisms from
$\pi_2^* \VV$ to $\pi_1^* \VV$. Its fiber over $(x,y)$ is
$\Hom(\VV_y,\VV_x)$. Call a section $K(x,y)$ of $\Hom_{yx}$
 a \emph{kernel on $\VV$.} 
$K$ is then a kernel in the
sense of the Subsection~\ref{ss:coords} on any chart for
$\VV$ (where $m$ is the dimension of $M$ and $n$ the dimension of
$\VV$). On any tame atlas, for sufficiently large $B$ and sufficiently
small $D$, there is a sufficiently small $t$ such that
the $t$-norm with constants $(B,D)$ can be defined on each chart,
and thus it makes sense to define $\norm{K}_{\tnm{t}}$ to be the
supremum of the $t$-norms of its image in each chart.

\begin{corollary} 
  If $(M,g,\VV)$ is tame the $t$-norm defined in terms of any tame
  atlas will satisfy Eqs.~\eqref{eq:top-bd}-\eqref{eq:t*-bd}
  for
  sufficiently large $B$ and sufficiently small $D$.
\end{corollary}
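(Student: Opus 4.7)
The plan is to verify each of \eqref{eq:top-bd}--\eqref{eq:t*-bd} chart by chart, exploiting the fact that the global $t$-norm is defined as the supremum of the chart-level $t$-norms of Def.~\ref{def:t-norm}. The pointwise estimates \eqref{eq:tpntws-bd} and \eqref{eq:tinf-bd} will follow immediately from Cor.~\ref{cr:t-norm} applied in any chart containing the point $(x,y)$, since both sides are pointwise in $(x,y)$ and the right-hand side only grows when we pass to the supremum. The operator-norm bound \eqref{eq:top-bd} is global, so I would write $K = \widetilde{K} + J$ with $\widetilde{K} \in \EE_{B,D}$ confined to distance $D$ of the diagonal in its chart; choosing $D \leq D_0$ so that $\widetilde{K}(x,\cdot)$ is supported in a single chart about each $x$, I bound the near-diagonal contribution by the chart-level operator norm from Cor.~\ref{cr:t-norm} and absorb the far-diagonal part using $\norm{J}_\mathrm{ker} \leq t e^{-D^2/(20t)} \norm{K}_\tnm{t}$.

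The main task is the multiplicative bound \eqref{eq:t*-bd}. Fix a chart $U$ and $(x,z) \in U \times U$, and decompose each $K_i = \widetilde{K}_i + J_i$. If $d(x,z) > D$, the far-diagonal estimates of Lemma~\ref{lm:k*k-bd} and Prop.~\ref{pr:t-def} bound $|K_1 * K_2(x,z)|$ by an exponentially damped remainder absorbed into the $\EE'$ part. For $d(x,z) \leq D$, the key geometric input is the uniform tameness radius $D_0$: choosing $D \leq D_0/2$ guarantees that whenever $y$ satisfies $d(x,y) < D$, the triple $(x,y,z)$ sits inside the $D_0$-ball about $x$ and hence inside a single chart $V$. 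Therefore the integral defining $\widetilde{K}_1 * \widetilde{K}_2(x,z)$ is computed entirely within $V$, and the local Cor.~\ref{cr:t-norm} applied in $V$ produces $e^{B\sqrt{t}} \norm{K_1}_{\tnm{t_1}}^{(V)} \norm{K_2}_{\tnm{t_2}}^{(V)}$, which is dominated by the global $t$-norms.

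The three cross terms $\widetilde{K}_1 * J_2$, $J_1 * \widetilde{K}_2$ and $J_1 * J_2$ each contain at least one factor with $\norm{J_i}_\mathrm{ker}$ exponentially small; pairing this with \eqref{eq:top-bd} (already proved) applied to the other factor yields kernel-norm contributions on the order of $t e^{-D^2/(20t)}$, which comfortably fit into the $J$-allotment of the $\EE'_{B,D}(t)$ membership test. Taking the supremum over charts $U$ then delivers \eqref{eq:t*-bd} with the stated constant.

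The main obstacle I anticipate is the chart-coordination step: to apply the local Cor.~\ref{cr:t-norm} to a product, one needs $x$, $y$, and $z$ to live in a common chart, and the convolution ranges $y$ over all of $M$. Tameness is exactly what resolves this, since the uniform lower bound $D_0$ on chart radii permits choosing $D$ small enough that the near-diagonal supports of the $\widetilde{K}_i$ force $y$ into a chart shared with $x$ and $z$, at which point all estimates reduce to reapplications of Cor.~\ref{cr:t-norm} plus the far-diagonal exponential decay already in hand.
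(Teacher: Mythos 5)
The paper states this corollary without proof, treating it as an immediate consequence of the chart-wise definition of the $t$-norm and the tameness hypothesis; the remark that closes the subsection even characterizes verifications of this kind as ``straightforward if laborious.'' Your chart-by-chart reduction, hinging on the uniform chart radius $D_0$ to localize the near-diagonal part of each kernel, is exactly the argument the paper implicitly has in mind, and your treatment of \eqref{eq:t*-bd} correctly isolates the genuine geometric content: choosing $D$ small relative to $D_0$ forces the $y$-integration for the near-diagonal part of $K_1 * K_2(x,z)$ into a single chart, after which Cor.~\ref{cr:t-norm} applies verbatim.

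One point you pass over, and which the paper passes over as well, deserves flagging. The manifold $t$-norm is defined as $\sup_U \norm{K|_{U \times U}}_{\tnm{t},U}$, so the chart-level decompositions $K|_U = \widetilde{K}_U + J_U$ need not agree on chart overlaps, and the condition controls $K(x,y)$ only when $x$ and $y$ lie in a common chart. When you write ``write $K = \widetilde{K} + J$'' globally, this is not literally supplied by the definition. To make the argument airtight one should instead fix, for each $x$, a preferred chart $U_x$ containing the $D_0$-ball around $x$, define $\widetilde{K}(x,\cdot) = \widetilde{K}_{U_x}(x,\cdot)$ (which vanishes for $d(x,y) \geq D$ because $H_D$ does), and set $J = K - \widetilde{K}$; this produces a global decomposition whose near-diagonal piece is bounded by the chart-level $\EE_{B,D}$ condition. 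The off-chart values $K(x,y)$ for $d(x,y) \geq D_0$ are then absorbed into $J$ and need to be controlled by a separate argument (for the approximate heat kernels $K_\Delta$ of Sect.~3 this is automatic since they vanish identically for $d(x,y) > D$). With this adjustment your three estimates, the cross-term bookkeeping via $\norm{J_i}_\mathrm{ker}$, and the final supremum over charts go through as you describe.
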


 \begin{definition} \label{def:approx-ker} 
A family of kernels $K(t)$ on $\VV$ for $t>0$ is an
\emph{approximate semigroup} with constants $(B,C,D,T)$ if $(M,g,\VV)$
admits a tame atlas on each chart of which $K$ is  represented as an approximate semigroup with constants $(B,C,D,T),$
with $D\leq D_0$ above.  A family of kernels $K(t)$ on  $\VV$ is
an \emph{approximate heat kernel} with constants $(B,C,D,T)$ if $(M,g,\VV)$
admits a tame atlas on each chart of which $K$ is represented as an approximate heat kernel with
constants $(B,C,D,T)$ with  $D \leq D_0.$  
  
\end{definition} 
\begin{corollary} \label{cr:manifolds}
An approximate semigroup on a vector bundle satisfies
Eqs.~\eqref{eq:t*-bd}-\eqref{eq:k*k-est}.  An approximate heat kernel 
for some $\Delta$ on $\VV$ is an approximate semigroup, with 
constants $(B,C,D,T)$ for the approximate semigroup whose constants
 can be made to depend
only on the corresponding constants for the approximate heat kernel
and the bounds on the defining atlas.
\end{corollary}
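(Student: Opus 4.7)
The plan is to reduce both assertions to the per-chart results already proved and then patch. Fix a tame atlas on which $K$ (resp.\ $K_1,K_2$) is by hypothesis a local approximate semigroup (resp.\ approximate heat kernel) with constants $(B,C,D,T)$, and note that the tameness constant $D_0$ together with the requirement $D\le D_0$ ensures every ball of radius $D$ in $M$ is contained in a single chart of the atlas.

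For the first assertion, the pointwise and operator $t$-norm bounds \eqref{eq:top-bd}--\eqref{eq:tinf-bd} follow directly from Cor.~\ref{cr:t-norm} in each chart, since the global $t$-norm is the supremum of chart-wise $t$-norms. The only real work is the $*$-product estimate \eqref{eq:t*-bd} (and analogously \eqref{eq:k*k-est}) in the global setting, where the integration runs over all of $M$. Fix $x,z\in M$ and choose the chart $\mathcal{C}$ whose image contains the ball of radius $D_0$ around $x$. Split
\[
(K_1(t_1)*K_2(t_2))(x,z)=\int_{d(x,y)<D}+\int_{d(x,y)\ge D}.
\]
The first integral takes place inside $\mathcal{C}$ and is bounded by Prop.~\ref{pr:t-def} applied in $\mathcal{C}$. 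For the complementary piece, use the local $\EE'_{B,D}$ decomposition $K_1=\widetilde{K}_1+J_1$ in any chart containing $x$ (and also any chart containing $y$): since $\widetilde{K}_1$ is supported in $\{d(x,y)<D\}$, we get the pointwise bound $|K_1(x,y;t_1)|\le t_1 e^{-D^2/(20t_1)}$ whenever $d(x,y)\ge D$, and similarly for $K_2$ when $d(y,z)\ge D$. Combined with the operator-norm bound \eqref{eq:top-bd} for the other factor and the uniform bound on volumes of $D$-balls provided by the tameness of the metric, the tail contribution is of order $t^{5/2}e^{-D^2/(20t)}$ and is absorbed into the slack factor $e^{B\sqrt{t}}$ in \eqref{eq:t*-bd}, resp.\ the $Ct^{3/2}$ in \eqref{eq:k*k-est}.

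For the second assertion, apply Prop.~\ref{pr:approx-heat} in each chart of the tame atlas. The proposition produces constants $(B_1,C_1,D_1,T_1)$ that depend only on $(B,C,D,T)$ and on sup-norm bounds of finitely many derivatives of $g$, $g^{-1}$, $A^i$ and $B$. Def.~\ref{def:tame} for $(M,g,\VV,\Delta)$ guarantees precisely that these bounds are uniform across charts, so a single choice of $(B_1,C_1,D_1,T_1)$ works on every chart simultaneously, yielding an approximate semigroup on $\VV$ in the sense of Def.~\ref{def:approx-ker}. The main technical obstacle is confirming that the off-chart tail in the $*$-product is truly exponentially small; this rests on two features already built into the framework, namely that the $\widetilde{K}$ part of any $\EE'_{B,D}$ decomposition is supported within $\{d(x,y)<D\}\subset\{d(x,y)<D_0\}$ and that the $J$ part is globally sup-controlled by $te^{-D^2/(20t)}$, together with the uniform volume growth of $D$-balls guaranteed by tameness.
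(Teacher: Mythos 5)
The paper offers no written proof of Cor.~\ref{cr:manifolds}: it is stated as an immediate consequence of the per-chart definitions (Defs.~\ref{def:approx-semigroup},~\ref{def:ahk},~\ref{def:tame},~\ref{def:approx-ker}) and the local results (Prop.~\ref{pr:t-def}, Cor.~\ref{cr:t-norm}, Prop.~\ref{pr:approx-heat}), so your job is really to supply the missing argument. Your overall strategy is the right one: the global $t$-norm is the supremum of the chart $t$-norms, so reduce the $*$-product estimates to the chart-level ones, using the $\EE'_{B,D}$ decomposition (support of $\widetilde{K}$ in $\{d<D\}$, exponential sup-control on $J$) and the condition $D\le D_0$ to keep the dominant contribution inside a single chart. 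For the second assertion (heat kernel $\Rightarrow$ semigroup) you correctly observe that Prop.~\ref{pr:approx-heat} applied chart-by-chart, together with the uniformity of the tameness bounds across the atlas, gives a single set of constants $(B_1,C_1,D_1,T_1)$ valid on every chart.

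There is, however, one concrete flaw in the first part. You split the integral defining $(K_1(t_1)*K_2(t_2))(x,z)$ according to whether $d(x,y)<D$ or $d(x,y)\ge D$, i.e.\ by the integration variable. The branch $d(x,y)<D$ is then declared to be "inside $\mathcal{C}$" and bounded by Prop.~\ref{pr:t-def} applied in $\mathcal{C}$; but $z$ need not lie in $\mathcal{C}$ at all, and if $d(x,z)$ is large the chart-level estimate says nothing about that branch. Symmetrically, your second branch only exploits smallness of $K_1(x,y)$, so the region $d(x,y)<D$, $d(y,z)\ge D$, $d(x,z)\ge D$ — where the relevant smallness comes from the factor $K_2(y,z)$, not $K_1(x,y)$ — falls through the cracks. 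The correct split is by $d(x,z)$, following the pattern of Lemma~\ref{lm:h*h-bd}: for $d(x,z)<D$ all three of $x,z$, and the support of the $\widetilde{K}$-integrand lie in the single chart containing $B(x,D_0)$, and Prop.~\ref{pr:t-def} applies directly; for $d(x,z)\ge D$, every $y$ satisfies $d(x,y)\ge D/2$ or $d(y,z)\ge D/2$, so at least one of $K_1(x,y;t_1)$, $K_2(y,z;t_2)$ is exponentially damped, and the tail is controlled exactly as in the second line of Eq.~\eqref{eq:h*h-bd}. With the split variable corrected, the rest of your argument goes through and the claimed uniformity of the constants follows from tameness as you say.
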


\begin{remark}
  While it suffices for the rest of the work, the dependence of the
  structures defined on the choice of tame atlas is mathematically
  distressing.  In fact there is a natural notion of the comparability
  of tame structures, which simply involves requiring that the
  diffeomorphisms between charts induced by the identity on $\VV$ have
  all derivatives up to the appropriate order uniformly bounded.  It
  is then straightforward if laborious to check that the $t$-norms
  associated to compatible tame atlases are comparable (each bounded
  by a multiple of the other), that families of kernels that are
  approximate semigroups or heat kernels with respect to one atlas
  are the same with respect to the other, and therefore that the limit
  results of the following section depend only on the ``tame
  equivalence class'' of the vector bundle, Riemannian manifold and
  operator.
\end{remark}

 %%%%%%%%%%%%%%%%%%%%%%%%%%%%%%
 %%%%%%%%%%%%%%%%%%%%%%%%%%%%%%
 %%%%%%%%%%%%%%%%%%%%%%%%%%%%%%
 %%%%%%%%%%%%%%%%%%%%%%%%%%%%%%
 %%%%%%%%%%%%%%%%%%%%%%%%%%%%%%
 %%%%%%%%%%%%%%%%%%%%%%%%%%%%%%
 \section{The fine-partition limit} \label{sc:limit}

 If $P=(t_1, t_2,\ldots, t_k)$
  is a
 \emph{partition} of a positive real number
 $t$ (that is, $t_i>0$ and $\sum_i t_i=t$) define $\abs{P}=\max_i
 t_i,$ $\#P=k,$ and for any kernel $K$ 
 %%%%% 
 \be
 K^{*P}(t)=K(t_1) *  K(t_2) *  \cdots  *  K(t_k).
 \ee{k*p-def}
 %%%%% 

  If $P$ is a  partition of $t$ and
 $P'$ is a partition  of $t',$ then the concatenation $PP'$ is a
 partition of $t+t';$  if $P_i$ is a partition of $t_i$ for $1 \leq
 i \leq k,$ then the partition $P_1P_2 \cdots P_k$ is a \emph{refinement} of $P=(t_1,
 \ldots, t_k).$

   In the language of the introduction, 
   $K^{*P}$ is the approximate path
   integral corresponding to the approximate heat kernel $K$ and a
   choice of partition $P$. Thm.~\ref{th:kinf} 
   below asserts the
   convergence of these approximations and provides a key estimate on
   the rate of convergence in terms of $t$ and $\abs{P}$, valid
   provided $K$ is an approximate semigroup in the precise sense of
    Defs.~\ref{def:approx-ker}
   and~\ref{def:approx-semigroup}.

 %%%%%%%%%%%%%%%%%%%%%%%%%%%%%%
 %%%%%%%%%%%%%%%%%%%%%%%%%%%%%%
 %%%%%%%%%%%%%%%%%%%%%%%%%%%%%%
 %%%%%%%%%%%%%%%%%%%%%%%%%%%%%%
              \subsection{Partitions and the refinement limit}\label{ss:partition}

 %%%%%%%%%
 %%%%%%%%%
 \begin{lemma} \label{lm:part-est}
   Suppose  $K(t)$ is a family of kernels and
   $\norm{\,\cdot\,}_{\tnm{t}}$ is a family of norms for which
   Eqs.~\eqref{eq:t*-bd},~\eqref{eq:k-bd}, and~\eqref{eq:k*k-est} hold
   for some constants $B,C,$ and $T$. Then there is an $A>0$ depending
   on $B,C$ such that, if  $T$ is chosen small enough, 
 %%%%% 
   \be 
   \norm{K^{* Q}(t)-K^{* P}(t)}_{\tnm{t}}< At^{5/4} \abs{P}^{1/4} 
   \ee{refinement}
 %%%%% 
 for all  refinements $Q$ of all
 partitions $P$ of $t<T.$
 \end{lemma}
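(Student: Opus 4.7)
The plan is to reduce the refinement estimate to two uniform bounds. First, I establish a \emph{uniform no-refinement estimate}:
\[
\norm{K^{*P}(t) - K(t)}_{\tnm{t}} \leq A_0 t^{3/2} \quad \text{for all partitions } P \text{ of } t<T,
\]
with $A_0$ depending only on $B,C$. As an immediate corollary $\norm{K^{*P}(t)}_{\tnm{t}} \leq 1 + A_0 T^{3/2} \leq 2$ for $T$ small. I prove this uniform estimate by induction on $\#P$ via a balanced split. When $|P| \leq t/4$, pick the break of $P$ nearest the midpoint; write $P = P_L P_R$ with $P_L$ partitioning $t_L$ and $P_R$ partitioning $t_R = t - t_L$, where $t_L, t_R \in [t/4, 3t/4]$. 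Decompose
\[
K^{*P}(t) - K(t) = (K^{*P_L} - K(t_L))*K^{*P_R} + K(t_L)*(K^{*P_R} - K(t_R)) + (K(t_L)*K(t_R) - K(t)),
\]
and apply \eqref{eq:t*-bd}, \eqref{eq:k-bd}, and \eqref{eq:k*k-est} with the inductive hypothesis. The recurrence for $D(P) := \norm{K^{*P}(t) - K(t)}_{\tnm{t}}$ reduces to $D(P) \leq e^{B\sqrt{t}}(1 + A_0 T^{3/2}) \alpha A_0 t^{3/2} + Ct^{3/2}$ with $\alpha := (t_L^{3/2} + t_R^{3/2})/t^{3/2} \leq (1/4)^{3/2}+(3/4)^{3/2} < 0.78$; this closes to $D(P) \leq A_0 t^{3/2}$ with $A_0$ large in $B,C$ and $T$ small. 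The case $|P| > t/4$ forces $\#P \leq 4$, which is handled directly by at most three applications of \eqref{eq:k*k-est}.

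With the uniform estimate in hand, I telescope the refinement difference. Writing $Q = P_1 P_2 \cdots P_k$ with $P_i$ partitioning the $i$-th interval $t_i$ of $P$:
\[
K^{*Q}(t) - K^{*P}(t) = \sum_{i=1}^{k} L_i * (K^{*P_i}(t_i) - K(t_i)) * R_i,
\]
where $L_i := K^{*(P_1 \cdots P_{i-1})}(t_1+\cdots+t_{i-1})$ and $R_i := K^{*(t_{i+1},\ldots,t_k)}(t_{i+1}+\cdots+t_k)$. The crucial observation is that each of $L_i$, $R_i$ is a \emph{single} $K^{*P'}$, so the uniform estimate gives $\norm{L_i}_{\tnm{\cdot}}, \norm{R_i}_{\tnm{\cdot}} \leq 2$. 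Each summand is a triple $*$-product, so two applications of \eqref{eq:t*-bd} suffice:
\[
\norm{L_i * (K^{*P_i}(t_i) - K(t_i)) * R_i}_{\tnm{t}} \leq e^{2B\sqrt{t}} \cdot 2 \cdot A_0 t_i^{3/2} \cdot 2 = 4 A_0 e^{2B\sqrt{t}}\, t_i^{3/2}.
\]
Summing over $i$ and using $\sum_i t_i^{3/2} \leq |P|^{1/2} \sum_i t_i = |P|^{1/2} t$ yields $\norm{K^{*Q}(t) - K^{*P}(t)}_{\tnm{t}} \leq 4 A_0 e^{2B\sqrt{T}}\, t\, |P|^{1/2}$. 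Since $|P| \leq t$ gives $t|P|^{1/2} \leq t^{5/4}|P|^{1/4}$, the conclusion follows with $A := 4 A_0 e^{2B\sqrt{T}}$.

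The main technical obstacle is the uniform estimate, because the balanced-split induction has no slack: the $1/\sqrt{2}$-type contraction from $2(1/2)^{3/2}$ must absorb the $e^{B\sqrt{t}}$ factor produced by \eqref{eq:t*-bd}, and this succeeds only for $T$ chosen small. A secondary but essential subtlety is that a naive iterated application of \eqref{eq:t*-bd} to the $k$-fold telescoping would yield a ruinous $e^{(k-1)B\sqrt{t}}$ factor per summand (uncontrolled as $|P| \to 0$); grouping the surrounding factors into single $K^{*(\cdot)}$-kernels and then invoking the uniform norm bound replaces this by the constant $e^{2B\sqrt{t}}$.
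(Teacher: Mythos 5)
Your overall strategy is genuinely different from the paper's and, at the telescoping stage, cleaner. The paper proves Eq.~\eqref{eq:refinement} by a second induction on $\#P$ (three-way split $P = P_1(t_2)P_3$, $t_1,t_3 \le t/2$), carefully threading an $e^{b_2 t^{1/2}}$ factor through both inductive statements to absorb the $e^{B\sqrt{t}}$ from \eqref{eq:t*-bd}. You instead telescope once over all pieces of $P$ and observe that $L_i$ and $R_i$ are themselves single $K^{*(\cdot)}$-kernels, so only two applications of \eqref{eq:t*-bd} are ever needed per summand — that observation is exactly right, and it replaces the paper's delicate second induction with a one-line sum bounded by $\sum t_i^{3/2} \le |P|^{1/2}t$. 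Your use of a fixed constant $A_0$ plus ``$T$ small'' in place of the paper's $e^{b_2 t^{1/2}}$ bookkeeping is also a valid simplification: since $e^{b_2 t^{1/2}}$ is bounded for $t<T$ the two formulations are equivalent in strength.

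However, there is a genuine gap in your uniform no-refinement estimate. The claim ``$|P| > t/4$ forces $\#P \le 4$'' is false: $|P|$ is the \emph{largest} piece, so for instance $P = (t/2, t/8, t/8, t/8, t/8)$ has $|P| = t/2 > t/4$ but $\#P = 5$, and one can take $\#P$ as large as desired while keeping $|P| > t/4$ (pad with many small pieces). So your two cases do not cover all $P$, and the balanced two-way split genuinely fails when no break lies in $[t/4, 3t/4]$: the sharp criterion is $|P| \le t/2$ (if no break is in that window, some piece spans it and hence has length $> t/2$). When $|P| > t/2$ a single oversized piece $t_j$ contains the midpoint; a two-way split at either end of $t_j$ leaves one side with $t^{3/2}$-scale mass and the recurrence does not contract.

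The fix is to abandon the two-way split in favor of the paper's three-way split $P = P_L(t_j)P_R$ with $t_j$ the piece containing $t/2$, which automatically gives $t_L, t_R < t/2$ with no hypothesis on $|P|$ whatsoever, and always has $\#P_L, \#P_R < \#P$. You then write
\[
K^{*P}(t) - K(t) = (K^{*P_L} - K(t_L))*K(t_j)*K^{*P_R} + K(t_L)*K(t_j)*(K^{*P_R} - K(t_R)) + \bigl[K(t_L)*K(t_j)*K(t_R) - K(t)\bigr],
\]
bound the last bracket by a constant times $t^{3/2}$ (one application of \eqref{eq:t*-bd} and two of \eqref{eq:k*k-est}), and your recurrence closes with the even better contraction $t_L^{3/2} + t_R^{3/2} \le 2^{-1/2}t^{3/2}$, at the cost of an extra $e^{B\sqrt{t}}$ that ``$T$ small'' absorbs just as before. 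With that substitution the uniform estimate holds and the rest of your argument — the telescoping and the final elementary inequality $t|P|^{1/2} \le t^{5/4}|P|^{1/4}$ — goes through unchanged.
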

 %%%%%%%%%
 %%%%%%%%%

 %%%%%%%%%
 %%%%%%%%%
 \begin{proof}
   First observe that by Eqs.~\eqref{eq:t*-bd},~\eqref{eq:k-bd}, and~\eqref{eq:k*k-est} 
   there is a  $c_2>0$ so that for all sufficiently small $t=t_1+t_2+t_3$
   \[\norm{K(t_1) *  K(t_2) *  K(t_3)-K(t)}_{\tnm{t}}\leq
   c_2t^{3/2}.\]

    Next, argue by induction on the
   number of entries in $Q$ that there are positive reals $b_2,c_3>0$ such that 
   \be \norm{  K^{* Q}(t)-K(t)}_{\tnm{t}}\leq
   c_3e^{b_2t^{1/2}}t^{3/2}. 
    \ee{kq-k}
  For that note one can always write
   $Q=Q_1(t_2)Q_3,$ where $Q_1$ is a partition of $t_1$ and $Q_3$ is a partition
   of $t_3,$ $t_2$ is a component of $Q$,  $t_1\leq t/2$ and $t_3\leq t/2$ (one or both of $t_1,t_3$ may be $0$).  Then
   \begin{align*}
   &  \norm{K^{* Q}(t)-K(t)}_{\tnm{t}} \\
 &\leq
     \norm{\sbrace{K^{* Q_1}(t_1)-K(t_1)} *  K(t_2) *
       K(t_3)}_{\tnm{t}}\\
& \qquad   +\norm{K(t_1) *  K(t_2) *  \sbrace{K^{* Q_3}(t_3)-K(t_3)}}_{\tnm{t}}\\
     &\qquad + \norm{\sbrace{K^{* Q_1}(t_1)-K(t_1)} *  K(t_2) *  \sbrace{K^{* Q_3}(t_3)-K(t_3)}}_{\tnm{t}} + \norm{K(t_1) *  K(t_2) *  K(t_3)-K(t)}_{\tnm{t}}\\
     &\leq e^{2Bt^{1/2}}\norm{K^{* Q_1}(t_1) -K(t_1)}_{\tnm{t_1}} +
       e^{2B t^{1/2}}\norm{K^{* Q_3}(t_3)-K(t_3)}_{\tnm{t_3}} \\
 & \qquad + e^{2Bt^{1/2}}\norm{K^{* Q_1}(t_1) -K(t_1)}_{\tnm{t_1}} \norm{K^{* Q_3}(t_3)-K(t_3)}_{\tnm{t_3}}
     + c_2t^{3/2}\\
     &\leq e^{2Bt^{1/2}}\sbrace{c_3e^{b_2 t_1^{1/2}} t_1^{3/2}+ c_3
       e^{b_2 t_3^{1/2}} t_3^{3/2}
       + c_3^2e^{b_2(t_1^{1/2}+t_3^{1/2})} t_1^{3/2}t_3^{3/2}+
       c_2t^{3/2}}\\
     &\leq c_3t^{1/2}e^{2B t^{1/2}+
       b_2\parens{t/2}^{1/2}}\sbrace{t2^{-1/2}  + c_3e^{b_2 t^{1/2}}t^{5/2}+ \parens{c_2/c_3}t}
       \\
     &\leq c_3 t^{1/2}e^{b_2 t^{1/2}}\parens{2^{-1/4}t + c_3e^{b_2
         t^{1/2}}t^{5/2}} \leq c_3 e^{b_2 t^{1/2}}t^{3/2}
    \end{align*}
	where the second inequality follows from Eqs.~\eqref{eq:t*-bd}
        and~\eqref{eq:k-bd}, the third from the inductive hypothesis,
        the fourth from $t_1<t/2,$ $t_3<t/2,$  the fifth from choosing $b_2>2B/(1-2^{-1/2})$ and $c_3>c_2/(2^{-1/4}-2^{-1/2})$
        (this condition also covers the base case) and the last by
        setting $T$ small enough that  $c_3e^{b_2 t^{1/2}}t^{3/2}$ is
        less than $(1-2^{-1/4}).$  

Note  this implies for $K$ as above
there is a $b_3>0$ so that for any kernel $J$, for $t=t_1+t_2<T$ for
small enough $T,$ and any partition $Q$ of $t_1$
 \eqa{kpop-tbd}{
 \norm{K^{* Q}(t_1) *  J(t_2)}_{\tnm{t}}&\leq
 e^{b_3 t^{1/2}}\norm{J}_{\tnm{t_2}} \mbox{, and} \nonumber\\
\norm{K^{*Q}(t)}_{\tnm{t}}&\leq e^{b_3 t^{1/2}}.
}
This follows from simply writing $K^{* Q}$ as $K^{* Q} - K + K$.

    Now  argue by induction 
  on the number of entries in $P$ 
  that there is an $A>0$ such that
 for all partitions $P$ and all refinements $Q$ of $P$ 
  \[\norm{K^{* Q}(t)-K^{* P}(t)}_{\tnm{t}} < At^{5/4}\abs{P}^{1/4}.\]
  Suppose
  $t=t_1+t_2+t_3,$   $P=P_1(t_2)P_3$ with $P_i$ a partition of
 $t_i,$ 
   and $Q=Q_1Q_2Q_3$ with $Q_i$ a refinement of $P_i,$  $Q_2$ a
   partition of $t_2,$  chosen so that $t_1<t/2$ and $t_3<t/2$ (one or both of $t_1,t_3$ may be $0$).
   Then
   \begin{align*}
     &\norm{K^{* Q}(t)-K^{* P}(t)}_{\tnm{t}} =
     \norm{K^{* Q_1}(t_1) *  K^{* Q_2}(t_2) *  K^{* Q_3}(t_3)-K^{* P_1}(t_1) *  K(t_2) *  K^{* P_3}(t_3)}_{\tnm{t}}\\
       &\qquad \leq 
       \norm{\sbrace{K^{* Q_1}(t_1)-K^{* P_1}(t_1)} *  K^{* Q_2Q_3}(t_2+t_3)}_{\tnm{t}} 
       +  \norm{K^{* P_1}(t_1) *  \sbrace{K^{* Q_2}(t_2)-K(t_2)} *  K^{* Q_3}(t_3) }_{\tnm{t}} \\
 & \qquad \qquad 
        +  \norm{K^{* P_1}(t_1) *  K(t_2) *  \sbrace{K^{* Q_3}(t_3) - K^{* P_3}(t_3)} }_{\tnm{t}} \\
       &\qquad \leq     e^{2b_3  t^{1/2}}\sbrace{ A t_1^{5/4}
         \abs{P}^{1/4} + c_3 e^{b_2 t^{1/2}}  t_2^{3/2}+ At_3^{5/4}\abs{P}^{1/4} }\\
       &\qquad \leq    Ae^{2b_3 t^{1/2}} t^{5/4} \abs{P}^{1/4}
       \sbrace{ 2^{-1/4} + c_3A^{-1}e^{b_2 t^{1/2}}}\\
         & \leq
       Ae^{2b_3 t^{1/2} } t^{5/4} \abs{P}^{1/4}2^{-1/8} \leq At^{5/4} \abs{P}^{1/4} 
       \end{align*}
      where the second inequality follows from Eqs.~\ref{eq:kq-k}
      and~\ref{eq:kpop-tbd} with the
      inductive hypothesis, and the
      third from $t_1<t/2,$ $t_3<t/2$ and $t_2^{3/2} \leq
      t^{5/4}\abs{P}^{1/4}.$  The fourth inequality follows by
      choosing $A$ large enough and $T$ small enough that $A> c_3
      e^{b_2 t^{1/2} }/(2^{-1/8}-2^{-1/4})$ (a similar choice covers the base
      case), and the last follows by choosing $T$ small enough that
      $e^{2b_3 t^{1/2} }<2^{1/8}.$ 
 \end{proof}
 %%%%%%%%%
 %%%%%%%%% end proof of Lemma lm:partition

 %%%%%%%%%
 %%%%%%%%%
 \begin{theorem} \label{th:kinf}
  Suppose  $K(t)$ is an approximate semigroup with constants $(B,C,D,T)$ on a bundle $\VV$ 
  over $M$ with   Riemannian
 $g,$ all tame, and $\norm{\,\cdot\,}_{\tnm{t}}$ is the norm on
 kernels on $\VV$ guaranteed by Prop.~\ref{pr:t-def}. 
Then there is a
 family of kernels $K^\infty (t)$ and a constant 
  $A>0$ depending on $B,C$   such that if  $T$ is chosen small enough 
  %%%%% 
  \be
  \norm{K^\infty(t) - K^{* P}(t)}_{\tnm{t}} \leq  A t^{5/4}\abs{P}^{1/4}
  \ee{kinf-test}
  %%%%% 
  for any partition $P$ of any $t<T.$ In particular 
  $K^\infty$ can be extended to all $t>0$ and there are $A_1,T_1,$ $D_1$ depending only on these constants (and dimensions) such that
 %%%%% 
 \be
 \norm{K^\infty(t) - K^{* P}(t)}_{\infty} \leq  A_1te^{B_1t} \abs{P}^{D_1}
 \ee{kinf-est}
 %%%%% 
 for all $P$ with $\abs{P}<T_1,$ so that in fact 
  %%%%% 
  \be
  K^\infty(t) = \lim_{\abs{P} \to 0} K^{*P}(t)
  \ee{kinf-lim}
  %%%%% 
 in supremum norm for each fixed $t>0.$
 \end{theorem}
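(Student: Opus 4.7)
The plan is to use Lemma~\ref{lm:part-est} to establish $t$-norm convergence for small $t$, then extend to all $t>0$ by $*$-composing the small-$t$ limits, and finally derive the sup-norm estimate \eqref{eq:kinf-est} via a telescoping identity. For $t<T$, apply Lemma~\ref{lm:part-est} and the triangle inequality to any two partitions $P_1,P_2$ of $t$ via their common refinement to obtain
\[
\norm{K^{*P_1}(t)-K^{*P_2}(t)}_{\tnm t}\le At^{5/4}\parens{\abs{P_1}^{1/4}+\abs{P_2}^{1/4}},
\]
so $\{K^{*P}(t)\}$ is Cauchy in the $t$-norm, and hence in sup norm by Eq.~\eqref{eq:tinf-bd}. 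Let $K^\infty(t)$ be the pointwise limit. To promote the Cauchy estimate to \eqref{eq:kinf-test}, fix $P$, choose refinements $Q_n$ of $P$ with $\abs{Q_n}\to 0$, and note that by Lemma~\ref{lm:part-est} each normalized difference $(K^{*Q_n}-K^{*P})/\parens{At^{5/4}\abs{P}^{1/4}}$ lies in $\EE'_{B,D}(t)$, witnessed by a probability measure $\mu_n$ on $[1,2]$ and an exponentially damped correction $J_n$. Since $K^{*Q_n}\to K^\infty$ in sup norm, extracting a weak-$*$ limit of the $\mu_n$ from the compact set of probability measures on $[1,2]$ produces a bounding measure for $K^\infty-K^{*P}$; the uniform bound on $\norm{J_n}_{\mathrm{ker}}$ survives passage to a weak limit, giving \eqref{eq:kinf-test}.

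For the extension to all $t>0$, fix $\tau\in(0,T/2)$. Given $t>0$ and any sufficiently fine partition $P$ of $t$, block $P=P_1\cdots P_k$ consecutively so that each $P_i$ partitions some $\tau_i\in[\tau/2,\tau]$, yielding $K^{*P}(t)=K^{*P_1}(\tau_1)*\cdots*K^{*P_k}(\tau_k)$ with $k\le 2t/\tau$. Define $K^\infty(t):=K^\infty(\tau_1)*\cdots*K^\infty(\tau_k)$; independence of the blocking follows from the semigroup identity $K^\infty(s_1)*K^\infty(s_2)=K^\infty(s_1+s_2)$ for $s_1+s_2<T$, obtained by letting $\abs{P_i}\to 0$ in $K^{*P_1}*K^{*P_2}=K^{*(P_1 P_2)}$ and invoking $*$-continuity from Cor.~\ref{cr:t-norm}. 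The sup-norm estimate \eqref{eq:kinf-est} follows for $t<T$ from \eqref{eq:kinf-test} and \eqref{eq:tinf-bd}, and for $t\ge T$ from the telescoping identity
\[
K^\infty(t)-K^{*P}(t)=\sum_{i=1}^k K^\infty(\tau_1)*\cdots*K^\infty(\tau_{i-1})*\sbrace{K^\infty(\tau_i)-K^{*P_i}(\tau_i)}*K^{*P_{i+1}}(\tau_{i+1})*\cdots*K^{*P_k}(\tau_k),
\]
in which each summand is bounded by the operator norms \eqref{eq:top-bd} of the bookending factors times the small-$t$ sup-norm bound on the middle difference. The product of $k\le 2t/\tau$ operator-norm factors, each bounded by $e^{c\sqrt\tau}$, contributes $e^{B_1 t}$ with $B_1$ depending only on $\tau$ and the original constants, and summing over $i$ produces the linear-in-$t$ prefactor. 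Eq.~\eqref{eq:kinf-lim} is then immediate.

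The main obstacle is the limit argument in the first paragraph: the naive approach of taking a sup-norm limit of the Cauchy sequence and hoping the $t$-norm bound survives requires that $\EE'_{B,D}(t)$ be closed under sup-norm limits. This reduces to weak-$*$ compactness of probability measures on $[1,2]$, to handle the $\EE_{B,D}$ factor, combined with lower semicontinuity of $\norm{\cdot}_{\mathrm{ker}}$, to handle the error term. A secondary bookkeeping challenge is ensuring that the constants $A_1,B_1,D_1$ in \eqref{eq:kinf-est} depend only on the approximate-semigroup constants and the metric bounds rather than on $t$; this is arranged by fixing $\tau$ once and for all in terms of the original constants before extending to large $t$.
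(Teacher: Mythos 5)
Your proposal follows the paper's strategy in broad strokes — use Lemma~\ref{lm:part-est} to get Cauchy convergence in the $t$-norm for $t<T$, pass to the sup-norm limit, and then obtain the large-$t$ sup-norm estimate by telescoping a product decomposition — but it diverges from the paper at two points worth noting. First, in the telescoping step the paper chooses blocks of size $t_j\sim \abs{P}^{1/(2m)}$ (shrinking with $\abs{P}$), so that the factor $t_j^{-m/2}$ from Eq.~\eqref{eq:tinf-bd} is absorbed by $\abs{P}^{1/4}$, leaving the residual power $\abs{P}^{1/(8m)}$; you instead fix $\tau\in(0,T/2)$ once and for all, so $t_j^{-m/2}$ is a fixed constant and the full $\abs{P}^{1/4}$ survives, giving the cleaner, $m$-independent exponent $D_1=1/4$ and a more transparent argument. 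You also construct $K^\infty(t)$ for large $t$ by $*$-composing small-time limits (using the short-time semigroup identity for well-definedness), whereas the paper constructs it directly as a sup-norm limit of $K^{*P_i}(t)$ and then telescopes $K^{*Q}-K^{*P}$; these are equivalent, but your version makes the semigroup structure of the large-$t$ kernel explicit. Second, you flag a real gap the paper skips over: passing from a sup-norm limit to the $t$-norm bound of Eq.~\eqref{eq:kinf-test} requires the $t$-norm (equivalently, membership in $\EE'_{B,D}(t)$) to be lower semicontinuous under sup-norm limits, which you reduce to weak-$*$ compactness of measures on $[1,2]$ plus lower semicontinuity of $\norm{\cdot}_{\mathrm{ker}}$. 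Your sketch is correct in outline, though you would still need to verify that the $L^1\!\to\!L^1$ and $L^\infty\!\to\!L^\infty$ operator-norm pieces of $\norm{\cdot}_{\mathrm{ker}}$ are weak-$*$ lower semicontinuous (they are, since each is a supremum of pairings against $L^1$ test functions) and that the pointwise bound $\abs{\widetilde K}\leq e^{B\sqrt t}\int K_{B,D}(\alpha t)\,d\mu_\alpha$ survives the passage to the weak-$*$ limit in both $J_n$ and $\mu_n$ simultaneously — a subsequence argument that you gesture at rather than carry out. The paper itself simply writes ``taking $i$ to infinity'' at this point, so your treatment is if anything more careful, but it is not fully complete as written.
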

 %%%%%%%%%
 %%%%%%%%%

 %%%%%%%%%
 %%%%%%%%%
 \begin{proof} 
	 For the short-time construction of $K^\infty$ and
         Eq.~\eqref{eq:kinf-test}, consider a sequence  $P_1 = (t),
         P_2, \ldots$ 
         for sufficiently small $t$,
         each a refinement of the previous and  with $\abs{P_i} \to 0$.
         By Eq.~\eqref{eq:refinement}, $K^{*P_i}(x,y,t)$  is a Cauchy
         sequence in the $t$-norm, and therefore by
         Eq.~\eqref{eq:tinf-bd} is Cauchy in supremum norm and by
         completeness converges to some $K^\infty(x,y;t)$. 
         If $P$ is any 
         partition of $t$, let $P_i'$ be a  refinement of $P_i$ and   $P$  for
         each $i>1$.
         Then
         \begin{align*}
           \norm{K^{*P}(t)-K^\infty(t)}_{\tnm{t}} & \leq
           \norm{K^{*P}(t)-K^{*P'_i}(t)}_{\tnm{t}}  +
           \norm{K^{*P'_i}(t)-K^{*P_i}(t)}_{\tnm{t}}  +
           \norm{K^{*P_i(t)}-K^\infty(t)}_{\tnm{t}}\\
           &< At^{5/4}\abs{P}^{1/4} + 2At^{5/4}\abs{P_i}^{1/4} \\
           &\leq At^{5/4}\abs{P}^{1/4},
         \end{align*}
         taking $i$ to infinity.
         This proves Eq.~\eqref{eq:kinf-test}.

         Eq.~\eqref{eq:kinf-est} and hence Eq.~\eqref{eq:kinf-lim} will follow for an approximate
         semigroup $K$  from the observation that, for
         $P$ a sufficiently fine partition of a given arbitrary  $t>0$ and $Q$ any
         refinement of $P,$ there are  constants $A_1,b_1>0$ such that 
         %%%%% 
         \be
         \norm{K^{* Q}(t) - K^{* P}(t)}_{\infty} \leq
         A_1te^{b_1 t} \abs{P}^{1/(8m)}. 
         \ee{kq-est}
         %%%%% 

         To see Eq.~\eqref{eq:kq-est} suffices, consider a
         sequence $P_i$ of partitions with $\abs{P_i} \to
         0$.  Consider any two $P_{i_1}, P_{i_2}$ 
         with $i_1<i_2$ far enough out in the sequence for
         Eq.~\eqref{eq:kq-est} to apply, and let $Q$ be  a common 
         refinement. Then  the bounds on $\norm{K^{* Q}(t) - K^{*
             P_i}(t)}_{\infty}$ imply $\norm{K^{* P_{i_1}}(t) - K^{*
             P_{i_2}}(t)}_{\infty} \in \OO\parens{\abs{P_{i_1}}^{1/(8m)}}$ for
         fixed $t$.  Thus the sequence is Cauchy in the supremum norm, so a limit
         $K^\infty(t)$ exists. For $P$ in the given sequence, the obvious
         estimate shows the limit satisfies Eq.~\eqref{eq:kinf-est}, which is
         the crux of the theorem. If $t$ is small, this limit clearly agrees with  the
         short-time construction above, and the argument above extends to
         show Eq.~\eqref{eq:kinf-est} 
         and therefore Eq.~\eqref{eq:kinf-lim} 
         in fact follow from
         Eq.~\eqref{eq:kq-est} for all partitions.

         To see  Eq.~\eqref{eq:kq-est} holds, let $P$ be a partition of $t >
         0$, where $t$ need not be particularly small, and let $T$ be small
         enough that Cor.~\ref{cr:t-norm}
         holds. Assume
         $\abs{P}^{1/(2m)}<T$ and $\abs{P} <1$, so
         $\abs{P} < \abs{P}^{1/(2m)}$. Let $P_0$ be another partition of $t$
         such that $P$
         is a refinement of $P_0$ and such that each component $t_j$ of $P_0$ satisfies
         $\abs{P}^{1/(2m)} \leq t_j \leq 2 \abs{P}^{1/(2m)}.$ (To define
         $P_0$,
         proceed inductively, using
         $\abs{P}<\abs{P}^{1/(2m)}$) If the partition $P$ is sufficiently
         fine, then the upper bound on $t_j$ will ensure
         Eqs.~\eqref{eq:refinement} and~\eqref{eq:kop-bd} hold with $t_j$
         replacing the generic $t$ in these equations. For each $t_j$ in  $P_0,$ the partitions $P$ and
         $Q$ restrict to partitions $P_j$ and $Q_j$ respectively of $t_j$. In terms of these,
         \eqa{*diff}{
           K^{* Q}(t)-K^{* P}(t) =  
           & \sum_j K^{* P_{1}}(t_1) * 
           \cdots  *  K^{* P_{j-1}}(t_{j-1}) * 
           \sbrace{K^{*Q_{j}}(t_j)-K^{* P_{j}}(t_j)} \nonumber\\
           &\qquad * K^{* Q_{j+1}}(t_{j+1})
           *  \cdots  *  K^{* Q_{k}}(t_{k}).
         }
         Eq.~\eqref{eq:refinement}
         together with Eq.~\eqref{eq:tinf-bd} and the  bounds on $t_j$ give
         \begin{align*}
           \norm{K^{* Q_{j}}(t_j) - K^{* P_{j}}(t_j)}_{\infty}
           &\leq A_2 t_j^{-m/2}\norm{K^{* Q_{j}}(t_j) -
             K^{* P_{j}}(t_j)}_{\tnm{t_j}} \\
           &\leq  A_3 t_j^{5/4} \leq
           A_4 t_j \abs{P}^{1/(8m)},
         \end{align*}
         where  the second inequality follows from $\abs{P_j} \leq \abs{P}$ and
         the lower bound on $t_j$, while the final inequality follows from the
         upper bound on $t_j.$

$P_0$ is a refinement of a partition $(\tau_1,t_j, \tau_2)$, and
         restricts to partitions $P_{0,1}$ and $P_{0,2}$ of $\tau_1$
         and $\tau_2$.  If $\tau_1<T$ for small enough $T$  then
         $\norm{K^{*P_{0,1}}(\tau_1)}_{\text{op}}\leq
         e^{(2B+b_3)T^{1/2}}\leq c$ for some $c > 1$, by
         Eqs.~\eqref{eq:top-bd} and~\eqref{eq:kpop-tbd}. In this case,
         $\norm{ K^{* P_{1}}(t_1) * \cdots * K^{*
             P_{j-1}}(t_{j-1})}_{\text{op}} =
         \norm{K^{*P_{0,1}}(\tau_1)}_{\text{op}} \leq c$. 
         On
         the other hand, if $\tau_1 \geq T$, 
         since $t_j<2T$ there is a
         partition $P'_{0,1} = (t'_1, \cdots, t'_p)$ of $\tau_1$ which
         is a refinement of $P_{0,1}$ with each $t_i'$ satisfying $2T<
         t_i' <4T$ (keep combining adjacent $t_i$ until it is between
         these two limits). The bounds on $t'_j$ ensure
         $\norm{K^{*P'_{0,1}(t_j')}}_{\text{op}} \leq
         e^{(2B+b_3)(t_j')^{1/2}} \leq e^{b_1 t_j'}$  for $b_1 = 2^{-1/2} (2B +
           b_3) T^{-1/2}$, by Eqs.~\eqref{eq:top-bd}
           and~\eqref{eq:kpop-tbd}. This ensures $\norm{ K^{*P_{1}}(t_1) * \cdots * K^{* P_{j-1}}(t_{j-1})}_{\text{op}}
           \leq e^{b_1 \tau_1}$. The cases $\tau_2 < T$ and   $\tau_2
           \geq T$ lead to an
           analogous estimate. 
         
         Combining the above estimates, Eq.~\eqref{eq:*diff} gives 
         \begin{align*}
           &\norm{K^{* Q}(t)-K^{* P}(t)}_\infty \\
           &\qquad \leq
           \sum_j  \norm{K^{* P_{1}}(t_1) *  \cdots  *  K^{* P_{j-1}}(t_{j-1})*
             \sbrace{K^{*
                 Q_{j}}(t_j)-K^{* P_{j}}(t_j)} \right. \right.\\
         &\qquad \qquad \left.\left. * K^{* Q_{j+1}}(t_{j+1}) *    \cdots  *  K^{*
               Q_{k}}(t_k)}_\infty \\ 
           &\qquad \leq  \sum_j c^2 e^{b_1 t} \norm{K^{*
               Q_{j}}(t_j)-K^{* P_{j}}(t_j) }_\infty\\ 
           &\qquad \leq  c^2 e^{b_1 t}  \sum_j A_4 t_j \abs{P}^{1/(8m)}  \leq
           A_1 t e^{b_1 t}\abs{P}^{1/(8m)}.
         \end{align*}
  
 \end{proof}
 %%%%%%%%%
 %%%%%%%%% end proof of Theorem th:kinf

 %%%%%%%%%%%%%%%%%%%%%%%%%%%%%%
 %%%%%%%%%%%%%%%%%%%%%%%%%%%%%%
 %%%%%%%%%%%%%%%%%%%%%%%%%%%%%%
 %%%%%%%%%%%%%%%%%%%%%%%%%%%%%%
              \subsection{Relating different kernels}\label{ss:limit}
Sect.~\ref{sc:kernel}
defined approximate semigroups locally.  Even
local values of the fine partition limit depend globally on the value
of the approximate semigroup, but the next proposition shows this to
be true rather weakly. In fact, changing the kernel or even the
underlying manifold outside a neighborhood of a point, as long as the bounds
$B,C$ and $D$ remain fixed, will change the fine partition limit at the given
point only by an exponentially damped term.  

%%%%%%%%%
%%%%%%%%%
\begin{proposition} \label{pr:local}
Suppose  $(\VV_i,M_i,g_i,K_i(t))$ for $i=1,2$ each represent a tame
bundle over a tame Riemannian manifold with an approximate semigroup.
Suppose $x$ is a point in $M_1,$ $O$ is a neighborhood of that point,
and $\Phi$ is an isomorphism of all this structure to an open set
$\Phi[O] \subset M_2.$  That is to say in a neighborhood of $x$ the
bundle $\VV_1$ can be identified isometrically with the bundle
$\VV_2$ over a neighborhood of $\Phi(x)$ so that $K_1(t)$ is the
pullback of $K_2(t).$   Then there are constants $c,d>0$ depending
only on the distance $r$ of $x$ to the boundary of  $O$ and on the constants $(B,C,D)$ associated to the two
approximate semigroups such that for some $T$ depending on these
$(B,C,D)$ and all $0<t\leq T,$ 
\be
\abs{K_1^\infty(x,x;t) -K_2^\infty(\Phi(x),\Phi(x);t)} \leq ce^{-d/t},
\ee{local}
%%%%% 
which is to say their difference is exponentially damped.  
\end{proposition}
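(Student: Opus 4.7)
\medskip
\noindent\emph{Proof plan for Proposition~\ref{pr:local}.}
The plan is to compare $K_1^\infty(x,x;t)$ and $K_2^\infty(\Phi(x),\Phi(x);t)$ by first replacing them with their finite-partition approximations $K_i^{*P}$, splitting each $K_i^{*P}$ into a piece supported on paths that stay in a small ball around the basepoint and a piece coming from paths that escape, and showing that the ``staying'' pieces agree exactly while the ``escaping'' pieces are exponentially damped in $1/t$ uniformly in $P$. Letting $|P|\to 0$ using Theorem~\ref{th:kinf} then gives the result.

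In detail, choose $r'=r/2$ so that $B'\defequals B(x,r')\subset O$ and $\Phi(B')\subset \Phi[O]$. Given a partition $P=(t_1,\ldots,t_k)$ of $t$, decompose $K_i^{*P}(x,x;t) = I_i^{\mathrm{in}}(P)+I_i^{\mathrm{out}}(P)$, where $I_i^{\mathrm{in}}(P)$ is the integral over tuples $(x_1,\ldots,x_{k-1})$ with every $x_j\in B'$ (respectively in $\Phi(B')$), and $I_i^{\mathrm{out}}(P)$ is the integral over the complement. Because $\Phi$ pulls back $K_2$ to $K_1$ isometrically on $O$ and pulls back the metric volume measure to the metric volume measure, $I_1^{\mathrm{in}}(P)=I_2^{\mathrm{in}}(P)$ exactly.

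For $I_i^{\mathrm{out}}(P)$, note that if some $x_j\notin B'$ then $d(x_0,x_j)\ge r'$ and $d(x_j,x_k)\ge r'$, so by the triangle inequality $\sum_{l=1}^{k} d(x_{l-1},x_l)\ge 2r'=r$, and Cauchy--Schwarz gives $\sum_l d(x_{l-1},x_l)^2/t_l \ge r^2/t$. Combining this with the pointwise bound in Eq.~\eqref{eq:tpntws-bd} (via the decomposition $K_i(t)=\widetilde K_i(t)+J_i(t)$ with $\widetilde K_i(t)\in\EE_{B,D}(t)$), I would split each Gaussian exponent $d(x_{l-1},x_l)^2/(4t_l)$ into equal halves, using one half together with the $\star$-product calculus of Section~\ref{sc:kernel} (specifically the $\chi_{>D}$ estimate of Lemma~\ref{lm:h*h-bd} and the partition-uniform operator bound in Eq.~\eqref{eq:kpop-tbd}) to absorb the iterated integral into a factor bounded uniformly in $P$, and the other half to produce an overall damping $e^{-r^2/(8t)}$. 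The $J_i$ contributions are already bounded by $te^{-D^2/(20t_l)}$, which after the $\star$-calculus produces a similar exponential factor. The outcome is an estimate $|I_i^{\mathrm{out}}(P)|\le c\,e^{-d/t}$ with $c,d>0$ depending only on $r$ and $(B,C,D)$, and in particular uniform in the partition $P$.

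Finally, Theorem~\ref{th:kinf} (Eq.~\eqref{eq:kinf-est}) gives $K_i^{*P}(x,x;t)\to K_i^\infty(x,x;t)$ in supremum norm as $|P|\to 0$, so the uniform bound on $|K_1^{*P}(x,x;t)-K_2^{*P}(\Phi(x),\Phi(x);t)| = |I_1^{\mathrm{out}}(P)-I_2^{\mathrm{out}}(P)|$ passes to the limit, yielding Eq.~\eqref{eq:local}. The main obstacle is making the bound on $|I_i^{\mathrm{out}}(P)|$ genuinely uniform in $P$: naive pointwise estimation of the iterated Gaussian product picks up factors like $e^{B\sum_l\sqrt{t_l}}$ or $2^{mk/2}$ that blow up with the number of partition steps. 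The fix is precisely the split-the-exponent trick above, so that one gets the partition-uniform operator bound Eq.~\eqref{eq:kpop-tbd} from Section~\ref{sc:limit} to do the combinatorial work, reserving only a small piece of the Gaussian damping to produce the exponential in $1/t$.
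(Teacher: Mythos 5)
Your high-level strategy (decompose $K_i^{*P}(x,x;t)$ into a confined piece $I_i^{\mathrm{in}}(P)$ that matches by the isomorphism and an escaped piece $I_i^{\mathrm{out}}(P)$ to be damped) is a genuinely different route from the paper's. The paper instead introduces a cutoff $\chi$ supported on $O$, uses the identity $(1-\chi)K_1\chi - \chi K_2(1-\chi) = K_1\chi - \chi K_2$ (valid because $K_1=K_2$ on $O\times O$), telescopes it across the partition, and bounds each of the $\#P$ resulting terms by $e^{-c_3/t}$ using Eqs.~\eqref{eq:tpntws-bd} and~\eqref{eq:kpop-tbd}. That gives a bound of order $\#P\cdot e^{-c_3/t}$ --- explicitly \emph{not} uniform in $P$ --- after which the proof makes a specific, $t$-dependent choice of partition ($\#P$ the least integer exceeding $e^{c_3/(2t)}$) so that the telescoping bound is $\approx e^{-c_3/(2t)}$, while the fine-partition error $\norm{K_i^\infty - K_i^{*P}}_{\tnm{t}} \le A t^{5/4}\abs{P}^{1/4}$ from Eq.~\eqref{eq:kinf-test} becomes $\approx t^{3/2}e^{-c_3/(8t)}$. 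The exponential damping is thus obtained by \emph{balancing} two $P$-dependent errors, not from a $P$-uniform estimate.

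This is where your plan has a real gap. You explicitly rely on bounding $\abs{I_i^{\mathrm{out}}(P)}\le ce^{-d/t}$ \emph{uniformly in $P$}, and you flag the blowup factors $e^{B\sum_l\sqrt{t_l}}$ and $2^{mk/2}$ as the obstacle, proposing to defeat them by halving the Gaussian exponent and invoking Eq.~\eqref{eq:kpop-tbd}. But this does not go through: once you replace $H_D(x,y;t)$ by $e^{-d(x,y)^2/(8t)}$-type factors (variance $4t$ with the original $(2\pi t)^{-m/2}$ normalization), the resulting kernel is no longer in $\EE'_{B,D}(t)$, whose definition caps the allowable variance at $2\alpha t$ with $\alpha\le 2$; and, more to the point, it is not an approximate semigroup in the sense of Eq.~\eqref{eq:k*k-est}. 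Eq.~\eqref{eq:kpop-tbd} is \emph{not} a generic bound on iterated $*$-products over $\EE'_{B,D}$ --- it is proved in Lemma~\ref{lm:part-est} as a consequence of the approximate-semigroup property, which encodes exactly the cancellations that prevent the $e^{B\sum_l\sqrt{t_l}}$ blowup. Generic kernels in $\EE'_{B,D}(t)$ obey only the iterated Eq.~\eqref{eq:t*-bd}, which accumulates the $e^{B\sqrt{t_l}}$ factors. So the tool you cite cannot be applied to your halved kernel, and the combinatorial blowup you identified is not eliminated. A conditioning-on-first-escape decomposition also does not save you: it produces a sum of $\#P$ terms, each needing an operator bound on a \emph{confined} $*$-product, which again lies outside the approximate-semigroup class and hence outside the reach of Eq.~\eqref{eq:kpop-tbd}.

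Two things would repair the argument. Either (i) abandon the claim of a $P$-uniform bound, keep your $\#P$-dependent estimate $\#P\cdot e^{-d/t}$, and adopt the paper's balancing step --- choose a $t$-dependent $P$ with $\#P\approx e^{d/(2t)}$, then invoke Eq.~\eqref{eq:kinf-test} and Eq.~\eqref{eq:tinf-bd} to control $\abs{K_i^\infty - K_i^{*P}}$; or (ii) establish independently that the confined family $\chi_{B'}K_i\chi_{B'}$ is itself an approximate semigroup (converging to a Dirichlet-type kernel) so that the machinery of Lemma~\ref{lm:part-est} applies to it. The paper's choice of (i) is much shorter. Also worth noting: the paper's cutoff identity does not need the explicit Cauchy--Schwarz energy estimate you use; the exponential damping in each telescoping term comes directly from a single factor $K_1^{*P_j}(x,y;\tau_j)$ with $d(x,y)\ge r$, via Eqs.~\eqref{eq:tpntws-bd} and~\eqref{eq:kpop-tbd}. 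Your $I^{\mathrm{in}}_1=I^{\mathrm{in}}_2$ observation is fine and is the same underlying idea that makes the paper's telescoping identity work.
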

%%%%%%%%%
%%%%%%%%%

%%%%%%%%%
%%%%%%%%%
\begin{proof}  Choose constants $(B,C,D)$ and the associated $T$ for
  which both kernels are
  approximate semigroups. We can assume $r<D$ and   $O$ (resp. $\Phi[O]$) contains a
  ball of radius $r$ around $x$ (resp. $\Phi(x)$) in $g_1$
  (resp. $g_2$).  For simplicity of notation, write $K$ for $\Phi^* K$
  and $O$ for $\Phi[O].$     Since Eq.~\eqref{eq:local} is obvious for
  large $t$ by Eq.~\eqref{eq:tinf-bd}, assume $t<T.$ 
  Let $\chi(y)$ be a real valued function on both $M_i$ which is $1$
  if $y \in O$ and $0$ otherwise.  For simplicity write $\chi K$ for
  the kernel whose value on $y,z$ is $\chi(y)K(y,z)$ and $K \chi$ for
  the one whose value at $y,z$ is $K(y,z)\chi(z).$  All this
  notational slight of hand and the fact that $K_1=K_2$ on $O \times
  O$   allows both sides of the following
  expression to make unambiguous sense and to be equal: 
  $(1-\chi) K_1 \chi  -  \chi  K_2 (1-\chi) =  K_1 \chi  - \chi K_2$.
  This implies, for
  any partition $P$ of $t<T,$ the
  sum $\sum_j K_1^{*P_j} * (1-\chi) K_1(t_j) \chi * K_2^{*P_{j'}} -
  K_1^{*P_j} * \chi K_2(t_j)(1- \chi) * K_2^{*P_{j'}},$ where $P = P_j(t_j)P_{j'},$ telescopes to
  $K_1^{*P} \chi - \chi K_2^{*P},$ 
  which reduces to $K_1^{*P} -
  K_2^{*P}$ on $O \times O.$ Turning this around,
  \[ \abs{K_1^{*P} - K_2^{*P}} \leq \sum_j \abs{K_1^{*P_j}* (1-\chi) K_1(t_j) \chi * K_2^{*P_{j'}}} +  \abs{K_1^{*P_j}* \chi K_2(t_j) (1-\chi) * K_2^{*P_{j'}}}
  \]
  on $O \times O$, and in particular at $(x,x;t).$ 
  
  In the above, $P_j$ is a partition of some $\tau_j,$ and $P'_j$ a
  partition of some $\tau'_j,$ with $\tau_j + t_j + \tau'_j = t$.
  Eq.~\eqref{eq:kpop-tbd} gives
  $\norm{K_1^{*P_j}}_{\tnm{\tau_j}}\leq e^{b_3 t^{1/2}}$ for some
  $b_3>0.$  
  Thus, provided $y $ is not in $O$, Eq.~\eqref{eq:tpntws-bd}
  ensures
  \[\abs{K_1^{*P_j}(x,y;\tau_j)} \leq
  e^{b_3 t^{1/2}}\sbrace{2(2\pi t)^{-m/2}e^{-r^2/(4t)} + te^{-D^2/(20t)}}
  \leq e^{-c_2/t}\] 
  for some $c_2.$ The bounds in operator norm on $K_1,$ multiplication
  by $\chi,$ and $K_2^{*P_j'}$ readily give
  \[\abs{K_1^{*P_j}* (1-\chi) K_1(t_j) \chi * K_2^{*P_j'}(x,x;t)} \leq e^{-c_3/t}\]
  for some $c_3.$
  The same bound applies  to $\abs{K_1^{*P_j}* \chi K_2(t_j) (1-\chi) * K_2^{*P_j'}},$ so 
  \[\abs{K_1^{*P}(x,x;t) - K_2^{*P}(x,x;t)} \leq \#P \cdot 2e^{-c_3/t}.\]

  Let $P$
  consist of equal intervals with $\#P$ the least integer greater than
  $e^{c_3/(2t)}.$ Thus
  \[\abs{K_1^{*P}(x,x;t) - K_2^{*P}(x,x;t)} \leq 3 e^{-c_3/(2t)}.\]
  On the other hand by Eq.~\eqref{eq:kinf-test} 
	\[
	  \norm{K_i^\infty -
 K_i^{*P}}_{\tnm{t}} \leq A t^{5/4} \abs{P}^{1/4} \leq A t^{3/2} e^{-c_3/(8t)}
	  \]
	  and therefore by Eq.~\eqref{eq:tinf-bd}
	\[\abs{K_i^\infty(x,x;t)- K_i^{*P}(x,x;t)} \leq e^{-d/t}\]
	for some $c,d$ and small enough $t$. Eq.~\eqref{eq:local} then
        follows.
\end{proof}
%%%%%%%%%
%%%%%%%%%

 %%%%%%%%%%%%%%%%%%%%%%%%%%%%%%
 %%%%%%%%%%%%%%%%%%%%%%%%%%%%%%
 %%%%%%%%%%%%%%%%%%%%%%%%%%%%%%
 %%%%%%%%%%%%%%%%%%%%%%%%%%%%%%
              \subsection{The heat kernel}\label{ss:heat-kernel}

              Notice that an approximate heat kernel is an approximate
              semigroup by Prop.~\ref{pr:approx-heat}, and thus by
              Thm.~\ref{th:kinf} has a fine partition
              limit. Thm.~\ref{th:heat-kernel} below equates this
              limit with the heat kernel of the same Laplacian as
              appears in the definition of the approximate heat
              kernel.  Thus the fine partition limit offers an
              alternate construction of the heat kernel of a
              generalized Laplacian on a manifold. As noted in the
              introduction, the time-slicing interpretation of the
              path integral depends on a choice of kernel (reflecting
              a discretization of the action) 
              which in some sense approximates the heat kernel
               for a generalized Laplacian quantizing the
              Hamiltonian.
              If this choice is in fact an approximate heat kernel in
              the precise sense of Def.~\ref{def:ahk}, then
              Thm.~\ref{th:heat-kernel} provides a rigorous
              construction of the heat kernel as a time-sliced path
              integral with the appropriate
              Lagrangian. Section~\ref{sc:dirac} spells this out in
              some detail. 

 %%%%%%%%%
 %%%%%%%%%
 \begin{lemma} \label{lm:heat-op}
 Suppose $K(t)$ is an approximate heat kernel for the  operator
 $\Delta$ as in Defs.~\ref{def:tame} and~\ref{def:approx-ker}  and $K^\infty(t)$ is the
 limit guaranteed by  
 Thm.~\ref{th:kinf}.
 If $f$ is a smooth 
 section of $\VV$ bounded in each coordinate patch and $t<T,$ 
 %%%%%
 \be
 f(t) = K^{\infty}(t)  *   f
 \ee{heat-op}
 %%%%%
 agrees with the unique solution  $f(t)$ of the heat equation
 $\partial f(t)/\partial t =\frac{1}{2} \Delta f(t)$ subject to  $\lim_{t \to 0} f(t)=f.$
 \end{lemma}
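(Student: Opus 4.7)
The plan is to identify $u(t) := K^\infty(t) * f$ with $v(t) := S(t)f$, where $S(t) = e^{t\Delta/2}$ is the (unique) heat semigroup generated by $\tfrac12\Delta$; existence of $S$ with a smooth kernel and the standard bound $\norm{S(\tau)}_{\mathrm{op}}\leq e^{c\tau}$ follow from standard parabolic theory under the tameness hypotheses of Def.~\ref{def:tame}. The strategy is to compare each $K^{*P}(t)*f$ directly to $S(t)f$ for a generic partition $P$ of $t$, show the discrepancy vanishes as $\abs{P}\to 0$, and then use Thm.~\ref{th:kinf} to identify the limit with $K^\infty(t)*f$.

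The key step is a one-step Duhamel estimate.  Set $\phi(s) := S(t-s)*K(s)*f$ for $s\in (0,t)$; by Eq.~(\ref{eq:tto0}) we have $\phi(0^+) = S(t)f$ and $\phi(t) = K(t)*f$. Formal differentiation gives
\[
\phi'(s) = -\tfrac12 \Delta\sbrace{S(t-s)*K(s)*f} + S(t-s)*\partial_s K(s)*f = S(t-s)*\sbrace{\partial_s - \tfrac12\Delta_x}K(s)*f ,
\]
the last equality using that $\Delta$ commutes with $S(t-s)$ on smooth functions. Eq.~(\ref{eq:ahk}) yields $\norm{\sbrace{\partial_s - \tfrac12\Delta_x}K(s)}_\tnm{s} \leq Cs^{1/2}$; combined with Eq.~(\ref{eq:top-bd}) and the bound on $\norm{S(\tau)}_{\mathrm{op}}$, integration in $s$ gives
\[
\norm{K(t)*f - S(t)f}_\infty \leq C_1 t^{3/2}\norm{f}_\infty .
\]

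For a partition $P=(t_1,\ldots,t_k)$ of $t$ with cumulative times $s_j = \sum_{i\leq j}t_i$, telescope
\[
K^{*P}(t)*f - S(t)f = \sum_{j=1}^k K(t_1)*\cdots *K(t_{j-1}) * \sbrace{K(t_j)-S(t_j)} * S(t-s_j) f .
\]
The operator norm of each prefix $K(t_1)*\cdots *K(t_{j-1})$ is uniformly bounded in $P$ by Eqs.~(\ref{eq:kpop-tbd}) and~(\ref{eq:top-bd}), and $\norm{S(t-s_j)f}_\infty\leq e^{ct}\norm{f}_\infty$, so applying the one-step estimate to each summand with $S(t-s_j)f$ in place of $f$ yields
\[
\norm{K^{*P}(t)*f - S(t)f}_\infty \leq C_2 \norm{f}_\infty \sum_{j=1}^k t_j^{3/2} \leq C_2 \, t \, \abs{P}^{1/2}\norm{f}_\infty ,
\]
which tends to $0$ as $\abs{P}\to 0$. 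On the other hand, Thm.~\ref{th:kinf} gives $K^{*P}(t)*f \to K^\infty(t)*f$ in supremum norm, so $K^\infty(t)*f = S(t)f$, the unique solution of the heat equation with initial datum $f$.

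The principal technical obstacle is justifying the Duhamel manipulation: differentiating $\phi(s)$ under the integral and commuting $\Delta$ through $S(t-s)$ require $K(s)*f$ to have sufficient regularity in both $s$ and the spatial variables, and the integral must converge as $s\to 0$. These are underwritten by the explicit regularity assumptions in Def.~\ref{def:ahk} ($K$ is $C^1$ in $t$ and $C^2$ spatially), the spatial derivative bound Eq.~(\ref{eq:kprime-bd}), and the fact that $\norm{\sbrace{\partial_s - \tfrac12\Delta_x}K(s)}_\tnm{s}$ is $O(s^{1/2})$ at $s=0$, so that the Duhamel integrand remains controlled on all of $(0,t)$.
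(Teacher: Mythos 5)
Your proof is essentially correct but takes a genuinely different route from the paper's. You import the heat semigroup $S(t)=e^{t\Delta/2}$ from standard parabolic theory and compare $K^{*P}(t)*f$ to $S(t)f$ directly, via a one-step Duhamel estimate for $\phi(s)=S(t-s)*K(s)*f$ combined with a telescoping sum over the partition. The paper instead works entirely inside its own framework: it first bounds $\norm{K^\infty(t)*f-K(t)*f}_\infty$ using Eq.~\eqref{eq:kinf-test} with the trivial partition, from which the initial condition follows via Eq.~\eqref{eq:tto0}, and then uses the semigroup property of $K^\infty$ (itself a consequence of the fine-partition construction) together with the generator property Eq.~\eqref{eq:kf} to verify directly that $\partial_t\sbrace{K^\infty(t)*f}=\tfrac12\Delta\sbrace{K^\infty(t)*f}$; uniqueness is only implicitly invoked at the end. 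Your approach is a clean, classical Duhamel argument and yields a slightly different explicit rate $\OO(t\abs{P}^{1/2})$, but it relies on the prior existence of a well-behaved heat semigroup $S(t)$ --- an assumption that sits somewhat awkwardly in a paper whose point is to construct the heat kernel from scratch (though the lemma's statement does presuppose existence and uniqueness of heat-equation solutions, so this is defensible). The paper's approach is more economical within its framework because it reuses ingredients ($K^\infty$ semigroup property, Eq.~\eqref{eq:kf}) already established. Two places you should be more careful: justifying $\phi(0^+)=S(t)f$ requires passing the pointwise limit of Eq.~\eqref{eq:tto0} through $S(t)$ (a dominated-convergence argument using the uniform bound from Eq.~\eqref{eq:top-bd}, not just Eq.~\eqref{eq:tto0} alone), and the one-step estimate must be applied to $g=S(t-s_j)f$, whose smoothness and boundedness you should note explicitly so that the Duhamel integrand is well-defined on $(0,t_j)$.
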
 
 %%%%%%%%%
 %%%%%%%%%

 %%%%%%%%%
 %%%%%%%%%
 \begin{proof}
   For $f(t)$ given by Eq.~\eqref{eq:heat-op},  Eq.~\eqref{eq:kinf-test}
   of Thm.~\ref{th:kinf}
and  Eq.~\eqref{eq:top-bd} combine to give
   \[\norm{f(t)- K(t) *  f}_\infty \leq 
   At^{3/2}e^{2B\sqrt{t}}\norm{f}_\infty\]
   for $t<T.$  
   In particular 
   \[\lim_{t \to 0} f(t)= \lim_{t\to 0} K(t) *  f= f.\]
  
 To see that the heat equation holds, note $K^\infty$ is a semigroup:
 $K^\infty(t)=K^\infty(t_1) *  K^\infty(t_2),$  for $t = t_1 + t_2$
 and $t_1, t_2 > 0;$ this follows from considering the limit of
 $K(t_1) * K(t_2)$ under refinements of the partition $(t_1,t_2)$ of
 $t.$
 Thus
   \begin{align*}
     \abs{\frac{\partial f(t)}{\partial t} - \frac{1}{2}\Delta f(t)} &= \abs{\lim_{\tau \to 0}
     \frac{f(t+\tau)-f(t)}{\tau} - \frac{1}{2}\Delta f(t)}= \abs{\lim_{\tau \to 0}
     \frac{K^\infty(\tau) *  f(t)-f(t)}{\tau} - \frac{1}{2}\Delta f(t)}\\
     &\leq \abs{\lim_{\tau \to 0}
     \frac{K(\tau) *  f(t) -f(t)}{\tau} -
     \frac{1}{2}\Delta f(t)} +
       \lim_{\tau \to 0} \frac{A\tau^{3/2}e^{2B\sqrt{\tau}}\norm{f(t)}_\infty}{\tau}\\
 & \leq \abs{\lim_{\tau \to 0}
     \frac{K(\tau) *  f(t) -f(t)}{\tau} -  \frac{1}{2}\Delta f(t)} = 0,
   \end{align*}
     where the first line follows from  the semigroup property of $K^\infty,$ the
     second from the preceding estimate, and
     the last from Eq.~\eqref{eq:kf} of
     Def.~\ref{def:ahk}.

 \end{proof}
 %%%%% end proof pr:heat-op

 %%%%%%%%%
 %%%%%%%%%
 \begin{theorem} \label{th:heat-kernel}
 Suppose $K$ is an approximate heat kernel for
  elliptic $\Delta$  on a bundle $\VV$  over $M$ with  Riemannian
 $g.$    Then the fine partition 
 limit $K^\infty(t)$ defined by  Thm.~\ref{th:kinf}
 is
 the heat kernel of $\Delta.$
 \end{theorem}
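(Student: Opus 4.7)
The plan is to leverage Lemma~\ref{lm:heat-op} to identify $K^\infty(t)$ as the integral kernel of the heat semigroup $e^{-t\Delta/2}$. Lemma~\ref{lm:heat-op} already shows that for every smooth bounded section $f$ of $\VV$ and every $t<T$, the function $u(t,x) := (K^\infty(t)*f)(x)$ satisfies $\partial_t u = \tfrac{1}{2}\Delta u$ with $\lim_{t\to 0} u(t) = f$ pointwise. The first step is therefore to invoke uniqueness of bounded solutions to the heat equation on the tame Riemannian manifold $(M,g,\VV,\Delta)$: the pointwise bound of Eq.~\eqref{eq:tinf-bd} combined with $\norm{f}_\infty < \infty$ ensures $u(t)$ is bounded on $M\times [0,T')$, and on a tame Laplacian the maximum principle (or a standard energy estimate using the bounded-geometry assumptions in Def.~\ref{def:tame}) gives uniqueness within this class. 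Hence $K^\infty(t)*f = e^{-t\Delta/2}f$ as functions.

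Next, I would upgrade this identity of operators to an identity of kernels. Both sides of $K^\infty(t)*f = H(t)*f$ (with $H$ the heat kernel of $\Delta$) are integrals against continuous kernels on $M\times M$: for $K^\infty$ this follows from Eq.~\eqref{eq:kinf-est} and the pointwise bound Eq.~\eqref{eq:tpntws-bd}, while for $H$ it follows from the standard parabolic regularity on tame geometries. Since
\[
\int \bigl(K^\infty(x,y;t) - H(x,y;t)\bigr)\, f(y)\, \dmeas_g y = 0
\]
for every smooth bounded $f$, approximating point evaluation at an arbitrary $y_0$ by a sequence of smooth bumps (localized inside a single chart, using the tame atlas) forces $K^\infty(x,y_0;t) = H(x,y_0;t)$ for every $x,y_0 \in M$ and every $t<T$.

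Finally, to extend from $t<T$ to all $t>0$, I would use the semigroup property of $K^\infty$ established inside the proof of Lemma~\ref{lm:heat-op}: $K^\infty(t_1+t_2) = K^\infty(t_1)*K^\infty(t_2)$ follows from Eq.~\eqref{eq:kinf-est} applied to refinements of the partition $(t_1,t_2)$. Combined with the analogous semigroup property of $H$, this propagates the agreement on $(0,T)$ to all $t>0$.

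The main obstacle is the pass from operator identity to kernel identity in the non-compact tame setting: one must justify the interchange of pointwise evaluation with the integral, which requires the pointwise Gaussian bounds of Cor.~\ref{cr:t-norm} together with uniform control across charts. A secondary delicate point is ensuring that uniqueness of bounded solutions of the heat equation applies in the tame generalized-Laplacian setting; this is where the sup-norm bounds on the lower-order coefficients in Def.~\ref{def:tame} enter crucially, as they guarantee the maximum-principle / Gaffney-type estimates needed to rule out a nonzero bounded solution with zero initial data.
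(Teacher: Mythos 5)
Your proof follows the same overall architecture as the paper's: Lemma~\ref{lm:heat-op} is the engine, and the extension from $t<T$ to all $t>0$ rides on the semigroup property of $K^\infty$. The one place you diverge is the final identification step. The paper observes that $K^\infty$ is a distributional solution of the heat equation (pairing $K^\infty*f$ against test functions), then invokes elliptic (really parabolic/hypoelliptic) regularity to conclude that $K^\infty(x,y;t)$ is a genuinely smooth function and therefore \emph{is} the heat kernel. You instead assume the existence of the heat kernel $H$ of $\Delta$ from standard parametrix theory, invoke uniqueness of bounded solutions to conclude $K^\infty(t)*f = H(t)*f$ at the level of operators, and then pass to the kernel level by testing against bump functions. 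Both routes are correct. The paper's is more self-contained (it does not presuppose the heat kernel, but rather exhibits $K^\infty$ as it) and shorter; yours is more pedestrian but makes explicit the uniqueness hypothesis that the lemma's statement tacitly relies on, which is a useful clarification. One small caution: the uniqueness of bounded heat solutions on a noncompact tame manifold is not purely a local matter, so the phrase ``maximum principle or energy estimate'' should be backed by a reference to a completeness-plus-bounded-geometry uniqueness theorem rather than left as a heuristic; the paper avoids this by working distributionally and appealing to regularity instead.
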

 %%%%%%%%%
 %%%%%%%%%

 %%%%%%%%%
 %%%%%%%%%
 \begin{proof}
 Thm.~\ref{th:kinf}
 implies that the refinement limit
   $K^\infty(t)$   exists for all
   $t>0$ and satisfies Eq.~\eqref{eq:kinf-est} for sufficiently small
   $t.$  
  Lemma~\ref{lm:heat-op}  implies as a distribution  $K^\infty$  is a
 solution to the heat equation for $t<T.$   If $t$ is
 too large to apply this lemma directly,
 note that  $K^\infty(t)=\parens{K^\infty}^{* Q_t}$ for some partition $Q_t,$ with
   each $t_i<T.$  Thus, $K^\infty$ is a distributional
   heat kernel for all $t>0.$  Since $\Delta$ is
 elliptic,  elliptic regularity \cite{Evans98}
 says $K^\infty(x,y;t)$  is smooth in $x,$ $y,$ and $t$ and
  thus is the heat kernel of $\Delta.$ 
 \end{proof}
 %%%%%%%%%
 %%%%%%%%% end proof th:heat-kernel

%%%%%%%%%%%%%%%%%%%%%%%%%%%%%%
%%%%%%%%%%%%%%%%%%%%%%%%%%%%%%
%%%%%%%%%%%%%%%%%%%%%%%%%%%%%%
%%%%%%%%%%%%%%%%%%%%%%%%%%%%%%
%%%%%%%%%%%%%%%%%%%%%%%%%%%%%%
%%%%%%%%%%%%%%%%%%%%%%%%%%%%%%
\section{Kernels for Generalized Laplacians and $N=1/2$ SUSY}
\label{sc:dirac}
The results of
Sect.~\ref{ss:heat-kernel} ensure the products of a  kernel which satisfies the conditions
defining an approximate heat kernel will in fact converge to the heat
kernel. This begs the question of how to find such a kernel for a
given Laplacian. Eq.~\eqref{eq:kgen-def} answers this by defining a specific kernel for
each generalized Laplacian. Thm.~\ref{th:kgen} 
applies the results
of Sect.~\ref{ss:heat-kernel}
 to show the fine-partition limit of products
of this
kernel is the heat kernel for that Laplacian. 

Sect.~\ref{ss:dirac} interprets the 
approximate heat kernels for these generalized Laplacians  as exponentiated, discrete actions for an associated supersymmetric
theory. As such these approximate heat kernels provide  the basis for
a time-slicing approximation to the path 
integral for this theory. Thus Thm.~\ref{th:kgen} 
provides a rigorous realization of the 
time-slicing construction of the path integral and
confirms it represents the heat kernel.

Sect.~\ref{ss:susy} specializes this to 
twisted $N=1/2$ supersymmetric quantum mechanics 
in the imaginary-time formulation.
Note that an appropriate choice of twisting (the Levi-Civita
connection on the dual spinor bundle) yields $N=1$ supersymmetric
quantum mechanics.

\subsection{Approximate heat kernel for elliptic operators}

	Let $\VV$ be a vector bundle over a manifold $M$ with
        Riemannian metric $g.$    Berline, Getzler and 
	Vergne \cite{BGV04} observe that every generalized Laplacian can be
	written locally as
	\be
	\Delta^\VV = g^{ij} \sbrace{\nabla_{\partial_i}^\VV
          \nabla_{\partial_j}^\VV -
          \Gamma_{ij}^k\nabla_{\partial_k}^\VV} -V, 
	\ee{delta-def}
	where $\nabla^\VV$ is a connection on $\VV,$
	$\nabla^{\text{LC}}_{\partial_i}(\partial_j)=
	\Gamma_{ij}^k \partial_k$ defines the Christoffel symbols for the
	Levi-Civita connection on the tangent bundle, and $V$ is a section of
	$\End(\VV).$  If $(\VV,M,g, \Delta)$ is tame,
        then, for $d(x,y)<D,$  let 
	$\pt^y_x \in \Hom\parens{\VV_y,\VV_x}$ denote the parallel
        transport map from $\VV_y$ to $\VV_x$ along 
	the unique minimal geodesic. Define the section of $\Hom_{yx}$ 
	%%%%%
	\be
	K_\Delta(x,y;t)= H_D(x,y;t)e^{ -\Ricci\parens{\vec{x}_y,
            \vec{x}_y}/12 - \scalar t/12 - tV(x)/2} \pt^y_x,
	\ee{kgen-def}
	%%%%%
where the Ricci and scalar curvatures are evaluated at $y$.
	
The following lemma provides some basic estimates on the effect of
having modified $H_D$ of Eq.~\eqref{eq:h-def}
 by a known factor.
%%%%%%%%%
%%%%%%%%%
\begin{lemma} \label{lm:time-space}
	For any $k \in \NN$ 
    %%%%% 
    \be 
    d(x,y)^k H_D(x,y;t) \leq 2^{(m+k)/2}(k/e)^{k/2}{t^{k/2}} H_D(x,y;2t).
    \ee{time-space}
    %%%%%
Moreover, if $F(x,y;t) = \OO\parens{\abs{\vec{x}_y}^{a}t^{-b}},$
there is a $B > 0$ so that, in the $t$-norm based on this choice,
        $\norm{F(x,y;t)H_D(x,y;t)}_{\tnm{t}} = \OO\parens{t^{a/2-b}}.$ 
If in particular $F(x,y;t)$ is differentiable as a function of
        $y,$ $\vec{x}_y$ and $t,$ and $F(y,y;0)=1,$ then
        there are $B,D$ such that $F(x,y;t)H_D(x,y;t) \in \EE_{B,D}.$   
\end{lemma}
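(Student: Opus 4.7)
The plan is to prove each of the three claims in turn. Part 1 reduces to a scalar calculus estimate: on the support of $H_D$, where $d(x,y) = \abs{\vec{y}_x}$, the ratio $d(x,y)^k H_D(x,y;t)/H_D(x,y;2t)$ equals $\abs{\vec{y}_x}^k \cdot 2^{m/2} e^{-\abs{\vec{y}_x}^2/(4t)}$. The single-variable function $r \mapsto r^k e^{-r^2/(4t)}$ attains its maximum on $[0,\infty)$ at $r^2 = 2tk$, where it equals $(2tk)^{k/2} e^{-k/2}$; combining this with the prefactor $2^{m/2}$ gives the stated inequality.

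For Part 2, the same calculus estimate extends verbatim to arbitrary real $a \geq 0$, yielding $\abs{\vec{x}_y}^a H_D(x,y;t) \leq C_a t^{a/2} H_D(x,y;2t)$. Combining this with the hypothesis $\abs{F(x,y;t)} \leq C\abs{\vec{x}_y}^a t^{-b}$ and the pointwise inequality $H_D(x,y;2t) \leq K_{B,D}(x,y;2t)$ (immediate from Eq.~\eqref{eq:kbd-def}) produces
\[
\abs{F(x,y;t)\,H_D(x,y;t)} \leq C' t^{a/2-b} K_{B,D}(x,y;2t).
\]
Taking the Dirac mass at $\alpha=2\in[1,2]$ as the probability measure in Def.~\ref{def:EE} exhibits $FH_D / (C' t^{a/2-b})$ as an element of $\EE_{B,D}(t)$ for any $B \geq 0$, so $\norm{FH_D}_{\tnm{t}} = \OO(t^{a/2-b})$.

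For Part 3, I Taylor-expand $F$ around $(\vec{x}_y,t) = (0,0)$ at fixed $y$. Differentiability together with $F(y,y;0) = 1$ and the bounds on the exponential map from Lemma~\ref{lm:coords} give $\abs{F(x,y;t) - 1} \leq C_1(\abs{\vec{x}_y} + t)$ uniformly for $d(x,y)<D$ and $t$ small. Applying Part 1 with $k=1$ to the $\abs{\vec{x}_y}$ remainder (and the trivial estimate $H_D(x,y;t) \leq 2^{m/2} H_D(x,y;2t)$ for the $t$ remainder) yields
\[
\abs{F(x,y;t)}\, H_D(x,y;t) \leq H_D(x,y;t) + C_2 t^{1/2} H_D(x,y;2t).
\]
Bounding each $H_D$ on the right by the corresponding $K_{B,D}$ expresses the right side as $(1 + C_2 t^{1/2})$ times the integral of $K_{B,D}(x,y;\alpha t)$ against the probability measure $(\delta_1 + C_2 t^{1/2} \delta_2)/(1 + C_2 t^{1/2})$ on $[1,2]$. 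Since $1 + C_2 t^{1/2} \leq e^{C_2 \sqrt{t}} \leq e^{B\sqrt{t}}$ for any $B \geq C_2$, this places $FH_D$ in $\EE_{B,D}(t)$ for suitable $B$ and sufficiently small $t$ and $D$. The only mildly delicate point is uniformity of the Taylor bound in $y$, which is supplied by the assumed metric bounds and the resulting control on the map $y \mapsto \vec{x}_y$; the rest is direct bookkeeping with the probability measures in Def.~\ref{def:EE}.
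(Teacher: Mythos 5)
Your proof is correct and follows essentially the same route as the paper: the calculus maximization of $r^k e^{-r^2/(4t)}$ is exactly the observation the paper invokes (that $x^k e^{-x^2/2}$ is bounded by $(k/e)^{k/2}$), Part~2 is the same pointwise transfer to $K_{B,D}(2t)$, and Part~3 uses the same two-point measure on $\{1,2\}$, differing only cosmetically in the choice of weights ($(\delta_1 + C_2 t^{1/2}\delta_2)/(1+C_2 t^{1/2})$ versus $e^{-B\sqrt{t}}\delta_1 + (1-e^{-B\sqrt{t}})\delta_2$). The only wrinkle worth flagging is your parenthetical ``for any $B\ge 0$'' in Part~2: while the pointwise bound indeed holds for all $B\ge 0$, the $t$-norm is only defined for $B$ large enough that Prop.~\ref{pr:t-def} holds, so the conclusion as stated (``there is a $B>0$'') is what you actually obtain.
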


%%%%%%%%%
%%%%%%%%%

%%%%%%%%%
%%%%%%%%%
\begin{proof}

    Eq.~\eqref{eq:time-space} follows immediately from the definition
    of $H_D$ in Eq.~\eqref{eq:h-def} and the fact that
    $x^k e^{-x^2/2}$ is bounded by $(k/e)^{k/2}.$

	For  $\abs{F} = \OO\parens{\abs{\vec{x}_y}^a t^{-b}},$
       \[\norm{F(t)H_D(t)}_{\tnm{t}} =
        \norm{\OO\parens{t^{a/2-b}}H_D(2t)}_{\tnm{t}}=\OO\parens{t^{a/2-b}},\] 
where the first equality follows immediately from
Eq.~\eqref{eq:time-space}. 
	
	Similarly, for  $F(x,y;t)= 1 +
        \OO\parens{\abs{\vec{x}_y}} + \OO\parens{t},$ 
	\[F(x,y;t)H_D(x,y;t) = H_D(x,y;t) + \OO\parens{t^{1/2}}H_D(x,y;2t)\]
	and therefore 
\begin{align*}
\abs{F(t)H_D(t)}& \leq H_D(t) + \parens{e^{Bt^{1/2}}-1}H_D(2t) \\
&\leq e^{Bt^{1/2}}\sbrace{e^{-Bt^{1/2}} H_D(t)
  + \parens{1-e^{-Bt^{1/2}}}H_D(2t)} = e^{Bt^{1/2}}\int_1^2 H_D(\alpha
  t)d\mu_\alpha \in \EE_{B,D}(t)
\end{align*}
for small enough $t$ and an appropriate $\mu$ by
Def.~\ref{def:EE}.
\end{proof}
%%%%%%%%%
%%%%%%%%%		
%%%%%%%%%
%%%%%%%%%
\begin{theorem} \label{th:kgen}
 Suppose   $\nabla^\VV$  is a  connection on $\VV$ a vector bundle over
 a manifold $M,$
 $V$ is a section of $\End(\VV),$  $\Delta$ is 
 the generalized Laplacian associated to this data, and
 $K_\Delta$ is the kernel given by Eq.~\eqref{eq:kgen-def}.   If
 $(\VV,M,g,\Delta)$ is tame, then  
 $K_\Delta$ is an approximate heat kernel
 with constants $(B,C,D)$ depending only on the bounds on $g$ and on
 $\Delta,$ and therefore its large partition limit $K_\Delta^\infty$
 is the heat kernel for $\Delta.$ 
\end{theorem}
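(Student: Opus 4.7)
The plan is to verify the five conditions of Def.~\ref{def:ahk} for $K_\Delta$, after which Prop.~\ref{pr:approx-heat} promotes it to an approximate semigroup, Thm.~\ref{th:kinf} produces the fine-partition limit $K_\Delta^\infty$, and Thm.~\ref{th:heat-kernel} identifies that limit with the heat kernel of $\Delta$. The workhorse throughout will be Lemma~\ref{lm:time-space}, which converts pointwise estimates of the form $\OO(\abs{\vec{x}_y}^a t^{-b})H_D$ into $t$-norm bounds $\OO(t^{a/2-b})$.

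First I would write $K_\Delta(x,y;t)= F(x,y;t)\,H_D(x,y;t)$ with $F(x,y;t)= e^{-\Ricci(\vec{x}_y,\vec{x}_y)/12 - \scalar t/12 - tV(x)/2}\,\pt^y_x$, noting that $F$ is smooth in $y$, $\vec{x}_y$ and $t$ and that $F(y,y;0)=\mathrm{id}$. The second half of Lemma~\ref{lm:time-space} then places $K_\Delta(t)$ in $\EE_{B,D}(t)\subset \EE'_{B,D}(t)$ for suitable $(B,D)$, giving Eq.~\eqref{eq:k2-bd}. The spatial derivative bound Eq.~\eqref{eq:kprime-bd} follows from the same product structure: differentiating $H_D$ in $x$ or $y$ produces a factor $\vec{y}_x/t$, whose $t$-norm is $\OO(t^{-1/2})$ by the first half of Lemma~\ref{lm:time-space}, well within the required $B/t$, while differentiation of $F$ contributes only bounded factors. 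The approximate-identity limit Eq.~\eqref{eq:tto0} and the Laplacian limit Eq.~\eqref{eq:kf} come from Taylor expanding $f$ around $x$ to second order, changing variables to $\vec{y}_x$, using Eqs.~\eqref{eq:yminusx-est} and~\eqref{eq:determinant} to convert $\dmeas_g y$, and evaluating the resulting Gaussian moments: the zeroth-moment term gives $f(x)$; the first-moment term vanishes by symmetry; and the second-moment term, together with the $t^1$ expansion of the exponential prefactor and of the parallel transport $\pt^y_x$, assembles into $\tfrac{1}{2}\Delta f$ via Eq.~\eqref{eq:delta-def}.

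The main obstacle is the heat-equation error bound Eq.~\eqref{eq:ahk}. I would compute $(\tfrac{1}{2}\Delta_x - \partial_t)K_\Delta$ in normal coordinates based at $y$, where the metric has the expansion $g_{ij}(x)=\delta_{ij}-\tfrac{1}{3}R_{ikjl}\vec{x}_y^k\vec{x}_y^l + \OO(\abs{\vec{x}_y}^3)$ and the Christoffel symbols vanish at $y$. Differentiating $H_D$ twice in $x$ produces singular pieces $\OO(\abs{\vec{x}_y}^2 t^{-2})$ and $\OO(t^{-1})$, while $\partial_t H_D$ produces matching singular pieces. The point of the specific prefactor in Eq.~\eqref{eq:kgen-def} is that the metric distortion of $\Delta_x$ acting against $H_D$, once combined with the Taylor expansion of $e^{-\Ricci(\vec{x}_y,\vec{x}_y)/12}$, cancels the non-Gaussian-stationary quadratic singularities; the $-\scalar t/12$ and $-tV(x)/2$ terms dispose of the residual $\OO(1)$ pieces (including the $-V$ contribution in Eq.~\eqref{eq:delta-def}), using that $\nabla^\VV_{\vec{x}_y}\pt^y_x=0$ along the radial geodesic so that derivatives of $\pt^y_x$ contribute only curvature of $\nabla^\VV$ at higher order. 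What remains is a finite sum of contributions of the form $\OO(\abs{\vec{x}_y}^a t^{-b})H_D$ with $a/2-b\geq 1/2$, whose $t$-norm is $\OO(t^{1/2})$ by Lemma~\ref{lm:time-space}. The adjoint bound for $\tfrac{1}{2}\Delta^*_y-\partial_t$ follows by swapping the roles of $x$ and $y$, with the asymmetry tracked using the tameness bounds on the first derivatives of $g$, $V$ and the connection coefficients of $\nabla^\VV$.

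All constants entering the expansions are polynomial in the metric, Christoffel, curvature, $V$, and connection coefficient bounds, all uniformly controlled under Def.~\ref{def:tame}, so the resulting $(B,C,D)$ depend only on the tame bounds on $(\VV,M,g,\Delta)$. The heavy part is the cancellation in the previous paragraph; the rest is direct Gaussian bookkeeping organized through Lemma~\ref{lm:time-space}.
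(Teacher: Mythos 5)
Your proposal follows essentially the same route as the paper: both reduce to checking the conditions of Def.~\ref{def:ahk} in Riemann normal coordinates centered at $y$, use the radial trivialization of $\VV$ by parallel transport so that the curvature of $\nabla^\VV$ enters only through the quadratic term of Eq.~\eqref{eq:nabla-rnest}, apply Lemma~\ref{lm:time-space} both to place $K_\Delta$ in $\EE_{B,D}$ and to convert the residual $\OO(\abs{\vec{x}_y}^a t^{-b})H_D$ error in $(\tfrac{1}{2}\Delta_x-\partial_t)K_\Delta$ into the $Ct^{1/2}$ bound in the $t$-norm, and handle the adjoint bound by the $x\leftrightarrow y$ asymmetry estimate together with self-adjointness of the Laplace--Beltrami operator. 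The only cosmetic differences are that the paper uses the Mean Value Theorem to reduce Eq.~\eqref{eq:kf} to computing $\lim_{t\to 0}\partial_t(K_\Delta*f)$ rather than Taylor-expanding $f$ directly, and the sign in your $g_{ij}$ expansion appears to be a typo (the paper's Eq.~\eqref{eq:g-rnest} has $+\tfrac13 R_{ikjl}x^kx^l$, with the minus sign occurring in $g^{ij}$), neither of which affects the argument.
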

%%%%%%%%%
%%%%%%%%%

%%%%%%%%%
%%%%%%%%%
\begin{proof} By Thm.~\ref{th:heat-kernel},
 it suffices to verify $K_\Delta$
  satisfies the conditions defining an approximate heat kernel as
  spelled out in Eqs.~\eqref{eq:k2-bd} through~\eqref{eq:ahk}. These
  are all local conditions, so  let $y \in M$ and work  in Riemann
  normal coordinates around $y.$ 
That is, pick an orthonormal basis for $T_{y}M$. Each point $x \in M$ near $y$ is the value of the exponential map at a
unique vector $\vec{x} \in T_{y}M$ near $0$. (The $\vec{x}$ 
 was $\vec{x}_{y}$ earlier; the subscript is implicit here where there is no
danger of confusion.) The components of $\vec{x}$ with respect to the
chosen basis define the Riemann normal coordinates of the point $x$. 
As in the proof of Lemma~\ref{lm:coords}, tameness implies that in Riemann normal
coordinates $g_{ij}$ has bounded $k$th derivatives for $0 \leq k \leq
4.$

If $X$ and $Y$ are tangent vectors at $x \in M$ let $R_x[X,Y]$ be the
Riemannian curvature (endomorphism on $T_xM$), $\Ricci_x(X,Y)$ be the
Ricci curvature, and $\scalar_x$ be the scalar curvature.
The coordinate derivatives $\partial_i$ for $i=1, \ldots,m$  at
each $x \in M$ near $y \in M$  form
 a basis of $T_{x}M$ and define
vector fields in a neighborhood of $y$ 
(commuting but not in general orthonormal). At $y$ these agree with
the original choice of orthonormal basis. Define a second basis
$e_i \in T_{x}M$ (orthonormal but not commuting as vector fields) 
by parallel transporting the same orthonormal basis of
$T_{y}M$ along a minimal geodesic from $y$ to (nearby) $x$.  
The two bases are related by \cite{BGV04}(Prop. 1.28)
\be e_i = \sbrace{\delta_i^j + \frac{1}{6}R_{ikl}^{\rule{1.2em}{0em}j} x^k
  x^l}\partial_j + \OO\parens{\abs{\vec{x}}^3}  
\ee{e-partial} %%%%%eq:e-partial
where
$R_{ikl}^{\rule{1.2em}{0em}j}\partial_j=R_y[\partial_i,\partial_k]\partial_l$
defines 
the coordinates of the curvature at $y.$
If $g_{ij}\parens{\vec{x}}=
\parens{\partial_i, \partial_j}_x,$
 with
inverse $g^{ij}\parens{\vec{x}},$ and 
 $
  \Gamma_{ij}^k\parens{\vec{x}}\partial_k=
    \nabla_{\partial_i}^{\text{LC}} \partial_j,$
    Eq.~\eqref{eq:e-partial} implies 
\begin{align}
g_{ij}\parens{\vec{x}}&= \delta_{ij} + \frac{1}{3} R_{ikjl} x^k x^l
+ \OO\parens{\abs{\vec{x}}^3} \label{eq:g-rnest}\\
%%%%%eq:g-rnest
g^{ij}\parens{\vec{x}}&= \delta^{ij} - \frac{1}{3}
R_{k \rule{.2em}{0em} l}^{\rule{.3em}{0em} i \rule{.2em}{0em} j}  x^k x^l 
+ \OO\parens{\abs{\vec{x}}^3} \label{eq:ginv-rnest}\\
%%%%%eq:ginv-rnest
\Gamma_{ij}^k\parens{\vec{x}}&= - \frac{1}{3}
\sbrace{R_{ilj}^{ \rule{.8em}{0em} k} + R_{jli}^{\rule{.8em}{0em} k}}x^l
+ \OO\parens{\abs{\vec{x}}^2} \label{eq:gamma-rnest}\\
%%%%%eq:gamma-rnest
{\det}^{1/2}g(\vec{x})&= 1 + \frac{1}{6}
R_{ikj}^{\rule{1.1em}{0em}k}x^i x^j +
\OO\parens{\abs{\vec{x}}^3} \label{eq:dmeas-rnest}
%%%%%eq:gamma-rnest
 \end{align}
 freely raising and lowering indices using 
$g_{ij}(0)=\delta_{ij}.$  
At $y, $ abbreviate
$\Ricci_{ij}=
R_{ikj}^{\rule{1.1em}{0em} k}=\Ricci_y\parens{\partial_i,\partial_j}$ and 
$\scalar=\Ricci_i^i=\scalar_y.$ 

The bounds implicit in $\OO\parens{\abs{\vec{x}}^p}$ above depend only the
bounds on $g_{ij}$ and its derivatives up to order three. 
Trivialize
the bundle $\VV$ in a ball of radius $D$ around  $y$ by identifying
$\VV_x$ with $\VV_y$ via parallel transport along the unique minimal
geodesic connecting $y$ and $x,$ so that 
\be \nabla_i^\VV= \partial_i + \frac{1}{2}x^jF_{ij}^\VV + \OO\parens{\abs{\vec{x}}^2}
\ee{nabla-rnest}
where $F_{ij}$ is the curvature of $\nabla^\VV$ evaluated at $y$ in
the $\partial_i \wedge \partial_j$ direction
\cite{BGV04}(Prop. 1.18), the bound depending on the bound on the
coefficients of $\nabla$ to order $2.$ 

 Eq.~\eqref{eq:kgen-def} and Lemma~\ref{lm:time-space} give
 $K_\Delta(t) \in \EE_{B,D}$
(Def.~\ref{def:EE})
for some $B > 0.$ Thus, 
$\norm{K_\Delta(t)}_{\tnm{t}}\leq 1,$ verifying  Eq.~\eqref{eq:k2-bd}
of the definition of an approximate heat kernel. 

Using Eqs.~\eqref{eq:g-rnest}-\eqref{eq:nabla-rnest} and the
antisymmetry of $F^\VV,$
\begin{align*}
\Delta&=g^{ij} \sbrace{\nabla_{i}^\VV \nabla_{j}^\VV
  - \Gamma_{ij}^k\nabla_{k}^\VV} - V\\
&=g^{ij}\sbrace{\partial_i \partial_j + \frac{1}{2} F^\VV_{ji}+
  \frac{1}{2}x^k\parens{F^\VV_{ik}\partial_j+F^\VV_{jk}\partial_i} -
  \Gamma_{ij}^k \partial_k +\OO\parens{\abs{\vec{x}}} +\OO\parens{\abs{\vec{x}}^2} \partial_i} - V\\ 
&= \partial^i \partial_i -\frac{1}{3}R_{k \rule{.2em}{0em} l}^{\rule{.3em}{0em} i \rule{.2em}{0em} j} 
x^k x^l\partial_i \partial_j - x^k F^{\VV,i}_{k}\partial_i -V
+\frac{2}{3} \Ricci_i^jx^i\partial_j +  \OO\parens{\abs{\vec{x}}} +
\OO\parens{\abs{\vec{x}}^2} \partial_i +
\OO\parens{\abs{\vec{x}}^3}\partial_i \partial_j. 
\end{align*}

Compute
\[\frac{\partial }{\partial t} K_\Delta(x,y;t) = \sbrace{ -\frac{m}{2t} + \frac{\abs{\vec{x}}^2}{2t^2} - \frac{\scalar}{12} - \frac{V}{2}}K_\Delta(x,y;t),\]
\[\partial_{i,x} K_\Delta(x,y;t) = \sbrace{-\frac{x_i}{t} -\frac{\Ricci_{ij}x^j}{6} + \OO(t)}K_\Delta(x,y;t),\] 
\[\partial_{i,x} \partial_{j,x}K_\Delta(x,y;t) =
\sbrace{-\frac{\delta_{ij}}{t} - \frac{\Ricci_{ij}}{6}+ \frac{x_i
    x_j}{t^2} +\frac{x^i\Ricci_{jk}x^k+x^j\Ricci_{ik}x^k }{6t} + \OO(t
  + \abs{\vec{x}}^2)}K_\Delta(x,y;t),\]  
so
\begin{align*}
  & \sbrace{\frac{\partial}{\partial t} - \frac{1}{2}\Delta}K_\Delta
  =\Big[  -\frac{m}{2t} + \frac{\abs{\vec{x}}^2}{2t^2} -
  \frac{\scalar}{12} - \frac{V}{2}  + \frac{m}{2t} +
  \frac{\scalar}{12} -\frac{\abs{\vec{x}_y}^2}{2t^2} -
  \frac{x^i\Ricci_{ij}x^j}{6t} \\ 
&\qquad  - \frac{x^i\Ricci_{ij} x^j}{6t} + \frac{1}{6t^2}R_{kilj} x^k x^lx^i
x^j + \frac{1}{2t} x^kF^\VV_{ik} x^i + \frac{V}{2} + \frac{x^i\Ricci_{ij}x^j}{3t}\\
&\qquad + \OO\parens{\abs{\vec{x}}+\abs{\vec{x}}^3/t +
  \abs{\vec{x}}^5/t^2+ t}\Big]K_\Delta(x,y;t) \\ 
&=\OO\parens{\abs{\vec{x}}+\abs{\vec{x}}^3/t + \abs{\vec{x}}^5/t^2+ t}K_\Delta(x,y;t)
\end{align*}
after taking into account the antisymmetry of $F^\VV$ and the fourfold
symmetry of $R.$ By Lemma~\ref{lm:time-space}, the right-hand side has
$t$-norm bounded by a multiple of $t^{1/2},$ so the first line of
Eq.~\eqref{eq:ahk} holds.  Since the Laplace-Beltrami operator is
self-adjoint, $\Delta^*$ is the operator associated to $g,$
$\nabla^\dagger$ and $V^\dagger,$ where $\dagger$ represents the
canonical map sending $\End(\RR^n)$ to $\End\parens{(\RR^n)^*}.$ So
for the second line of Eq.~\eqref{eq:ahk} it suffices to observe that
$K_{\Delta^*}(x,y;t)= K_\Delta^\dagger(y,x;t) + \OO(\abs{\vec{x}_y}^3
+ \abs{\vec{x}_y}t)$.
This estimate follows from the tameness
assumption which more directly implies
$\Ricci_x(\vec{y}_x,\vec{y}_x)- \Ricci_y(\vec{x}_y,\vec{x}_y)=
\OO\parens{\abs{\vec{x}_y}^3},$ $\scalar_x - \scalar_y =
\OO\parens{\abs{\vec{x}_y}},$ and $V(y) - \parens{\pt_x^y}^{-1}V(x)
\pt_x^y= \OO\parens{\abs{\vec{x}_y}}$
with the bounds depending on the bounds on the
metric. Eq.~\eqref{eq:ahk} now follows.

For Eq.~\eqref{eq:tto0}, let $f$ be a smooth function on $O$ valued
in $\RR^n.$   Then working in Riemann normal coordinates around $x$
with the the bundle trivialized by parallel transport in radial
directions,  
\begin{align*}
	\lim_{t \to 0} \int K_{\Delta}(x,y;t) \cdot f(y) \dmeas y &=
        \int H_D(x,y;t) f(y) \sbrace{1 + \OO\parens{\abs{\vec{y}_x}^2}
          + \OO(t)} \dmeas \vec{y}_x\\ 
&= f(x) + \OO(t).
\end{align*}
Similarly, for Eq.~\eqref{eq:kf} it suffices by 
the Mean Value Theorem 
to show $\lim_{t \to 0}\frac{\partial}{\partial
  t}K_\Delta*f = \frac{1}{2}\Delta f.$ In Riemann normal coordinates
\begin{align*}
\lim_{t \to 0}\frac{\partial}{\partial t} K_\Delta*f (x) &= \lim_{t \to
  0}\int \sbrace{ -\frac{m}{2t} + \frac{\abs{\vec{x}_y}^2}{2t^2} -
  \frac{\scalar}{12} - \frac{V}{2}}K_\Delta(x,y;t) f(y)\dmeas y\\ 
&= \lim_{t \to 0}\frac{1}{2}\sbrace{\partial_i \partial_i -V}f(x) + \OO\parens{t^{1/2}} = \frac{1}{2} \Delta f
\end{align*}
by straightforward Gaussian integrals.  Finally
Eq.~\eqref{eq:kprime-bd} follows for appropriate $B$ from the above
calculation for $\partial_i K_\Delta.$ 
\end{proof}

 %%%%%%%%%
 %%%%%%%%%
\begin{remark} \label{rm:correction}The calculations verifying Eq.~\eqref{eq:ahk} shed some
  light on the role of 
  the  Ricci and scalar curvature terms in the definition of
  $K_\Delta.$ 
Adding $a\scalar t +
b \parens{\Ricci\parens{\vec{x}_y,\vec{x}_y}-\scalar t}$ to the
  exponent in  $K_\Delta$ changes  Eq.~\eqref{eq:ahk} in two ways: it
  would add
  $a\scalar$  to the operator $\Delta$, and  
    $\norm{b \parens{\Ricci\parens{\vec{x}_y,\vec{x}_y}/t-\scalar}}_{(t)}
    \in \OO\parens{1} $
    to the bound  $Ct^{1/2}$ on the right-hand side.
  In units where $\hbar$ is not $1,$ the addition to
 $\Delta$ is $a\scalar \hbar^2$ and thus is a
  quantum correction to the Hamiltonian which presumably corresponds 
  to a different resolution of the operator ordering ambiguity in
  $g_{ij}p^ip^j$.  
Although the
  $\Ricci\parens{\vec{x}_y,\vec{x}_y}/t-\scalar$ term is of too large an order
  in $t$ for Eq.~\eqref{eq:ahk} to hold, it surprisingly does not
  change the fine-partition limit. 
  However, the convergence argument in this
  paper would not suffice in that case.
\end{remark}
 %%%%%%%%%
 %%%%%%%%%

%%%%%%%%%%%%%%%%%%%%%%%%%%%%%%%%%%%
\subsection{Path integrals} \label{ss:dirac}
The previous subsection argued that the heat kernel for any
generalized Laplacian $\Delta$ 
can be expressed as a fine-partition limit of
products of
 an
approximate heat kernel constructed directly from $\Delta$. 
As noted in the introduction, the product
 associated to a partition can be viewed as an integral
 over all elements of a discretized space of paths.
Formally, the
limit can be interpreted as an integral over all paths of a function
on the space of paths.  However, it is not obvious that this function
is necessarily 
the exponential of the integral  of a classical
Lagrangian. The generalized Laplacian relevant to the path integral
proof of the index theorem for the twisted Dirac operator cannot be
the quantization of some classical Hamiltonian, because there is not even any symplectic space
on which such a classical Hamiltonian could be defined. Thus, there is
no classical Lagrangian with which to begin formulating a path
integral, even heuristically. 
However, for this case, Friedan and
Windey~\cite{FW84} suggest a natural  extension of the generalized
Laplacian to a larger
space where the 
heat kernel can be written as an integral over all paths of the exponentiated
integral of a Lagrangian, and such that a natural restriction of the
quantum state space recovers the heat kernel for the
original Laplacian.
Sect.~\ref{ss:susy} shows
this trick is unnecessary in the case of untwisted $N=1/2$ supersymmetric
quantum mechanics and, for the twisted 
 $N=1/2$ theory, 
 is only necessary to deal with the
twisted portion. 
Interestingly, the perturbative approximation for the restricted
operator is the same as the restriction of the perturbative
approximation for the unrestricted operator, so either provides an
interpretation of the path integral proof of the index theorem. 

If $f(v_1, \ldots,v_n)$ is a multilinear function of $\VV^*$ for some
vector space $\VV,$ then the antisymmetrization of $f$ represents an
element of $\Lambda \VV.$  To  say $\psi$ is a \emph{Grassman
  variable} valued in $\VV^*,$ means that the expression $f(\psi,
\ldots, \psi) $ represents that element.   If $\VV$ has an inner product
the \emph{Berezin Integral} $\bint f(\psi) 
 d\psi$  is the coefficient  
 of the canonical top-degree  element of $\Lambda \VV$ in $f(\psi)$.
 The inner product induces a nondegenerate pairing on elements of $\Lambda
 \VV$  which in this language becomes
\[\parens{f(\psi), g(\psi)}= \bint f(\psi)g(\psi) d\psi .\]
 See~\cite{MQ86} for a standard reference on Grassman
variables; \cite{Rogers92a} and~\cite{FS14} give examples relevant to
SUSYQM. 

Suppose $(M,g, \VV,\nabla, V)$ are as in Eq.~\eqref{eq:kgen-def}.  Let
$\XX=\Lambda \VV$, and let $\Delta^{ \XX}$ be the generalized
Laplacian associated  to $\nabla$ and $V$ promoted to a
connection and operator on $ \XX$ (using $\nabla(a \wedge b)= \nabla(a)
\wedge b +  a \wedge \nabla(b)$ and $V(a \wedge b)= V(a)
\wedge b +  a \wedge V(b)$). For each point $x \in M$ let $\psi_x$ be
a Grassman variable valued in $\VV^*_x$ so as to
 write kernels on $\XX$ as
superkernels $K(x,y,\psi_x,\psi_y),$ with the understanding $K$ acts on
a section of $\XX,$ which is represented by a  superfunction
$f(x,\psi_x),$ as
\[(K*f)(x,\psi_x)= \int \bint K(x,y,\psi_x,\psi_y) f(y,\psi_y) d\psi_y
\, dy.\]

As outlined in the introduction, given a Lagrangian,  a Riemann sum
approximation to the action defines a kernel which, after some 
 corrections  (higher-order in $\hbar$)                                  
defines an approximate kernel.  Let $\sigma(s)$ be a path in
$M$, let $\Psi,\Psi^\dagger$ be  Grassman variables valued in lifts of
$\sigma$ to $\VV^*$ and $\VV$   respectively, and consider the action
\[\int \frac{1}{2}\parens{\dot{\sigma},\dot{\sigma}} + i
\bracket{\Psi ^\dagger, \nabla^\VV_s \Psi} 
-\frac{i}{2}\bracket{V \Psi^\dagger,  \Psi} ds.\]

On a small interval of
parameter length $t$, approximating the path connecting $x$ and $y$ by a
geodesic gives $\int \frac{1}{2}\parens{\dot{\sigma},\dot{\sigma}} dt \sim
 \parens{\vec{x}_y/ t,\vec{x}_y/ t} t /2 \sim
\abs{\vec{x}_y}^2/(2 t),$
 which agrees with the exponent in $H_D.$
Assuming 
$\Psi^\dagger$ and $\nabla_s\Psi$ are covariantly slowly varying,
$ \int i\bracket{\Psi^\dagger, \nabla_s \Psi} ds \sim i \bracket{
  \Psi^\dagger(t_y) , \pt_y^x \Psi(t_x) -
  \Psi(t_y)}=i\bracket{\psi^\dagger_y, \pt^x_y \psi_x - \psi_y}$
and  $ \int
i\bracket{V\Psi^\dagger, \Psi} ds \sim i \bracket{
  \Psi^\dagger(t_y) ,  t V^*(y)  \Psi(t_y)} \sim i \bracket{
  \psi^\dagger_y ,  t   \pt_y^x V^*(x)\psi_x}$. 
This suggests an approximate heat kernel
 \be K_{\Delta^\XX}(x,y,\psi_x,\psi_y;t) = \bint H_D(x,y;t)
e^{-\Ricci\parens{\vec{x}_y,\vec{x}_y}/12 - t \scalar/12 + i
\bracket{\psi_y^\dagger,  \pt_y^x \sbrace{1-t V^*(x)/2}\psi_x -  \psi_y}}
d\psi_y^\dagger.
\ee{super-k}
The Ricci and scalar
curvature terms do not follow directly from the approximation to the
action. Rather, referring to Rem.~\ref{rm:correction}, they correspond to the resolution of the
operator-ordering ambiguity that gives $\Delta^\XX$ as the operator
whose kernel is the path integral with this Lagrangian, and, among
such choices, they are of the particular form to make
$K_{\Delta^\XX}$ an approximate heat kernel for $\Delta^\XX$. Indeed, 
 %%%%%%%%%
 %%%%%%%%%
\begin{proposition} \label{pr:pathint}
  $K_{\Delta^\XX}$ is an approximate heat kernel for $\Delta^\XX$ and
  therefore its fine-partition limit $K_{\Delta^\XX}^{\infty}$ is the
  heat kernel.  Furthermore the component of $K_{\Delta^\XX}^\infty$
  of degree $1$ in $\psi_x$ and degree $\dim \VV -1$ in $\psi_y$ is the
  heat kernel for $\Delta^\VV.$ 
\end{proposition}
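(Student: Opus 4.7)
The plan is to reduce Part 1 to Theorem~\ref{th:kgen}, by identifying the super-kernel in Eq.~\eqref{eq:super-k}, after Berezin integration over $\psi_y^\dagger$, with (a negligible perturbation of) the kernel produced by applying Eq.~\eqref{eq:kgen-def} directly to the generalized Laplacian $\Delta^\XX$ on the bundle $\XX = \Lambda \VV$. Because $\nabla^\XX$ and $V^\XX$ are Leibniz extensions, tameness of $(\VV,M,g,\Delta)$ transfers to tameness of $(\XX,M,g,\Delta^\XX)$; moreover, the $\Ricci$ and $\scalar$ factors in Eq.~\eqref{eq:super-k} agree with what Eq.~\eqref{eq:kgen-def} prescribes for $\Delta^\XX$ since these curvature terms depend only on the base Riemannian data, not on the bundle.

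The key computation is a Grassman/Berezin identity: for any linear map $A\colon \VV^*_x \to \VV^*_y$, the integral
\[
\bint e^{i\bracket{\psi_y^\dagger,\, A\psi_x - \psi_y}}\, d\psi_y^\dagger
\]
represents (up to a sign) the super-kernel of the induced operator $\Lambda A\colon \Lambda \VV_x \to \Lambda \VV_y$. This is verified by direct expansion: the component of bidegree $(j,n-j)$ in $(\psi_x,\psi_y)$ is precisely the kernel in Grassman form of $\Lambda^j A$ acting on $\Lambda^j \VV$. Applied with $A = \pt_y^x(1 - tV^*(x)/2)$, this produces a super-kernel matching that of $\pt^{y,\XX}_x e^{-tV^\XX(x)/2}$ up to the replacement of $e^{-tV^*/2}$ by its first-order Taylor polynomial; Lemma~\ref{lm:time-space} shows the discrepancy contributes at most an $\OO(t^2)$ error in the $t$-norm, well within the tolerance built into Def.~\ref{def:ahk}. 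With this identification in hand, Theorem~\ref{th:kgen} supplies the approximate heat kernel property, and Theorem~\ref{th:heat-kernel} promotes its fine-partition limit $K_{\Delta^\XX}^\infty$ to the true heat kernel for $\Delta^\XX$.

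For Part 2, the crucial observation is that both $\nabla^\XX$ and $V^\XX$ act as derivations and hence preserve the exterior grading $\XX = \bigoplus_k \Lambda^k \VV$; in particular $\Delta^\XX$ and $e^{-t\Delta^\XX/2}$ are degree-preserving, and by construction the restriction of $\Delta^\XX$ to $\Lambda^1 \VV = \VV$ equals $\Delta^\VV$. In the Grassman dictionary, a section of $\Lambda^j \VV_y$ corresponds to a polynomial of degree $j$ in $\psi_y$, and the Berezin pairing extracts the top-degree-$\dim \VV$ component; hence the piece of $K_{\Delta^\XX}^\infty$ responsible for the map $\Lambda^1 \VV_y \to \Lambda^1 \VV_x$ is precisely its component of degree $1$ in $\psi_x$ and degree $\dim \VV - 1$ in $\psi_y$, which must therefore be the heat kernel of $\Delta^\VV$. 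The most delicate step throughout is the bidegree analysis of the Berezin identity in Part 1, where Grassman sign conventions and the $\OO(t^2)$ correction from truncating the exponential must both be controlled in the $t$-norm sense so as not to violate the bounds of Def.~\ref{def:ahk}.
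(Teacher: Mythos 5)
Your proposal is correct and follows essentially the same route as the paper: Berezin-integrate out $\psi_y^\dagger$ to identify the super-kernel in Eq.~\eqref{eq:super-k} with the kernel $K_{\Delta^\XX}$ of Eq.~\eqref{eq:kgen-def} up to an $\OO(t^2)$ perturbation controlled by Lemma~\ref{lm:time-space}, then invoke Thm.~\ref{th:kgen} and Thm.~\ref{th:heat-kernel}. You are in fact more explicit than the paper in two respects: the paper records the Berezin identity only in its action-on-functions form $\bint\bint e^{i\bracket{\psi_y^\dagger, A\psi_x - \psi_y}} f(\psi_y)\,d\psi_y^\dagger\,d\psi_y = f(A\psi_x)$ and then simply asserts the super-kernel interpretation, whereas you spell out the bidegree decomposition showing the $(j, n-j)$ component is the Grassman kernel of $\Lambda^j A$; and the paper never actually argues the second sentence of the proposition, leaving the degree-preservation of $\Delta^\XX$ and the identification of the $(1,\dim\VV-1)$ bidegree component with the $\Lambda^1\VV\to\Lambda^1\VV$ block as implicit, while you supply that argument. (Note also that the paper's concluding clause contains a typo --- it writes ``an approximate heat kernel for $\Delta^\VV$'' where $\Delta^\XX$ is meant --- which your version corrects.)
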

 %%%%%%%%%
 %%%%%%%%%

 %%%%%%%%%
 %%%%%%%%%
\begin{proof}  
 If $A \colon W_x \to
  W_y$ is a linear  map between vector spaces of the same even
  dimension, $\psi_x,\psi_y$ are Grassman variables taking values in
  $W_x$ and $W_y$ respectively, and  $\psi_y^\dagger$ takes values in
  $W_y^*,$ then  for $f$ a function defined on $W_y$,
\[\bint \bint e^{i \bracket{\psi_y^\dagger, A \psi_x - \psi_y}}
f(\psi_y) d\psi_y^\dagger d\psi_y = f(A\psi_x).\]
This is an immediate consequence of the definitions; the authors'
earlier paper spells out the details for the special case $A =
\pt^x_y$~\cite{FS08}. Thus, the quantity
\[\bint e^{i\bracket{\psi^\dagger_y, \pt^x_y \sbrace{1-t V^*(x)/2}\psi_x-  \psi_y}}
d\psi_y^\dagger\]
is, up to terms in $\OO\parens{t^2}$,  the superkernel for the
 operator
\[e^{-tV(x)/2} \pt^y_x \colon \XX_y \to \XX_x,\] which is
the 
extension 
of $e^{-tV(x)/2} \pt^y_x : \VV_y \to \VV_x$.
Therefore Eq.~\eqref{eq:super-k} differs from
Eq.~\eqref{eq:kgen-def} by $\OO(t^2)K_{\Delta^\VV},$ which means that it
also defines an approximate heat kernel for $\Delta^\VV.$ 
\end{proof}
 %%%%%%%%%
 %%%%%%%%%

\subsection{Twisted N=1/2 SUSYQM}\label{ss:susy}

\subsubsection{The generalized Laplacian}
The heuristic path integral for twisted $N=1/2$ SUSYQM 
in imaginary time
is supposed to be related to the kernel of the heat operator for a
Laplacian which is the square of the twisted Dirac
operator~\cite{FW84}. 
To define this operator,
 recall some Clifford algebra facts and terminology
detailed in Ch. 3 of~\cite{BGV04}.
  If $M$ is a Riemannian manifold
define $\Clifford=C(T^*M)$ to be the bundle which at each point $x\in M$  is the
complexified $\ZZ/2\ZZ$-graded (and $\ZZ$-filtrated) algebra generated by $T_x^*M,$ subject to the relation
%%%%%
\be
v^*\cdot  w^*+w^*\cdot v^*= -2\parens{v^*,w^*}.
\ee{clifford-def}
%%%%%
A \emph{Clifford module} is a graded  vector bundle $\VV$ over $M$
with a graded homomorphism
$c_\VV\colon\Clifford \to \End\parens{\VV}$.   $\Lambda\parens{T^*M}$ is a Clifford
module with the action $c_\Lambda(v^*)\alpha= v^* \wedge \alpha - i_v(\alpha)$
where $v$ is dual to $v^*$ in the inner product. 

If $M$ is even-dimensional and spin, the spinor
bundle $\spinor= \Lambda \PP,$ where $\PP$ is a polarization of the complexified cotangent bundle of
$M$ is  a Clifford module.
Indeed, with this action,  $\Clifford \iso \End\parens{\spinor},$ and any Clifford
module can be written as $\VV= \spinor \tensor \tspinor,$ where
$\tspinor$ is a vector bundle on which $\Clifford$ acts trivially.

If $\VV$ is a Clifford module, a connection $\nabla^\VV$ is a \emph{Clifford
  connection} if, for any vector field $X$ and 
section $Y$ of $T^*M,$ 
\be
\sbrace{\nabla^\VV_X,c_\VV(Y)}=c_\VV\parens{\nabla_X^{\text{LC}}Y}.
\ee{cliffconn-def}
(The bracket on the left-hand side is graded.)
In the case where $M$ is even-dimensional and spin,  
any  Clifford connection  $\nabla^\VV$  can be written as 
%%%%%
\be
\nabla^\VV= \nabla^\spinor \tensor 1 + 1 \tensor \nabla^\tspinor
\ee{nabla-decomp}
%%%%%
for some connection $\nabla^\tspinor$ on $\tspinor$ and the Levi-Civita
connection $\nabla^\spinor$ on $\spinor$.  If $M$ is even-dimensional
but not spin, the Clifford action is still faithful and the 
curvature of a Clifford connection still decomposes as $R+F^\tspinor,$ 
where $R$ is  Riemannian curvature and $F^\tspinor$ is the component of the
curvature in $\End_{C(M)}(\VV)$. \cite{BGV04}(Props.
3.35,3.40  \& 3.43). 

If $\VV$ is a Clifford module and $\nabla^\VV$ a Clifford connection, the twisted  Dirac operator is
%%%%%
\be
\dirac^\VV= c_\VV(dx^i)\nabla^\VV_{\partial_i}.
\ee{dirac-def}
%%%%%
This provides a square root of the generalized Laplacian $\Delta^\VV$ with the choice of section $V =
c_\VV\parens{F^\tspinor} - \scalar/4$, where  $c_\VV$ acts on
two-forms by $c_\VV(v^* \wedge w^*) = \frac{1}{2}\sbrace{c_\VV(v^*)c_\VV( w^*) - c_\VV( w^*)c_\VV(v^*)}.$ 
That is, with this $V$,
%%%%%
\be
\Delta^\VV = \parens{\dirac^\VV}^2.
\ee{dirac-squared}
%%%%%
(In the special case $\VV = \spinor$, the operator $\dirac^\VV$ is
the ordinary Dirac operator.)
If $(\VV,M,g,\Delta)$ is tame then 
the kernel $K_{\Delta^\VV}$ associated by Eq.~\eqref{eq:kgen-def} to this
data is an approximate heat kernel and converges to the heat kernel of
the square of the Dirac operator by Thm.~\ref{th:heat-kernel}.

\subsubsection{The action}
If $M$ is even-dimensional and spin 
and $\tspinor$ is a bundle over $M$ with a connection whose curvature is
$F$, define  \emph{twisted $N=1/2$ SUSYQM} 
via the
action
\[
\int \frac{1}{2}\parens{\dot{\sigma},\dot{\sigma}} + i
 \bracket{\Psi^\dagger, \nabla^\spinor_s \Psi} + i
 \bracket{\Pi^\dagger, \nabla^\tspinor_s \Pi}- 
\frac{i}{2}\bracket{ F(\Psi, \Psi) \Pi^\dagger,\Pi} ds,
\] 
for $\Psi$ and $\Psi^\dagger$ Grassman-valued lifts of $\sigma$ to $\PP^*$ and $\PP$
respectively, 
and $\Pi$ and
$\Pi^\dagger$ Grassman-valued lifts to $\tspinor^*$ and $\tspinor$
respectively.
This action was first written down by Friedan and  Windey \cite{FW84}
(with slightly different normalization conventions). If $\tspinor$
is the trivial bundle it reduces to the action for 
$N=1/2$ SUSYQM \cite{Alvarez83}.

Discretize as above to get a kernel
on $\hat{\VV}=\spinor \tensor \Lambda \tspinor$
\begin{align}
K_{\mathrm{SUSY}} = &  \bint H_D(x,y;t)
e^{-\Ricci\parens{\vec{x}_y,\vec{x}_y}/12 - t \scalar/12}\label{eq:ksusy} \\
& \times e^{i\bracket{\psi_y^\dagger,  \pt_y^x \psi_x -  \psi_y} + i
\bracket{\eta_y^\dagger,  \pt_y^x \eta_x -  \eta_y} + i t 
\bracket{\eta_y^\dagger,  \pt_y^x\sbrace{F(\psi_x, \psi_x) +  \scalar/4} \eta_x}/2} 
d\eta_y^\dagger d\psi_y^\dagger,  \nonumber
\end{align}
where the parallel transports are with respect to the connections
$\nabla^\spinor$ and $\nabla^\tspinor$. As in the general case, the
terms with parallel transport represent, under Berezin integration,
the kernel of $e^{-\frac{1}{2} \sbrace{c(F) - \scalar/4}}
\pt^x_y$, with this parallel transport being with respect to the
connection 
on $\hat{\VV}$.
Thus the
discretization is exactly the approximate heat kernel
$K_{\Delta^{\hat{\VV}}}$ for  $V = c(F) - \scalar/4$. 

The results of Sect.~\ref{sc:limit} apply to rigorously construct the twisted $N=1/2$
SUSYQM path integral as $K_{\Delta^{\hat{\VV}}}^\infty$ which will
agree 
 with the heat kernel 
for $\Delta^{\hat{\VV}}$.
 Restricting  $K_{\Delta^{\hat{\VV}}}^\infty$
 to the
appropriate degrees in $\eta,$ $\eta^\dagger$ gives the heat kernel
for 
$\Delta^\VV =\parens{\dirac^\VV}^2$ 
where $\VV =\spinor \tensor \tspinor.$ 

Of course the heat kernel for the ordinary Dirac operator is the
fine-partition limit of products of the above without the terms
referring to $\tspinor$. The corresponding action is just
 \[
\int \frac{1}{2}\parens{\dot{\sigma},\dot{\sigma}} + i
\bracket{\Psi^\dagger, \nabla_s^\spinor \Psi} ds, 
\]
which is the usual action for $N=1/2$ SUSYQM.

%%%%%%%%%%%%%%%%%%%%%%%%%%%%%%
%%%%%%%%%%%%%%%%%%%%%%%%%%%%%%
\section{The Small $t$ Asymptotics} \label{sc:rescale}

 McKean \& Singer~\cite{MS67} recognized that the Gauss-Bonnet-Chern
 theorem would follow from a sufficiently detailed knowledge of the
 short-time diagonal behavior of the heat kernel of the Laplace-de
 Rham 
operator on differential forms. 
They used Duhamel's formula to derive the behavior in degree zero.
  Gilkey~\cite{gilkey84} summarizes an approach of
 Seeley~\cite{seeley67}, 
 Patodi~\cite{Patodi71} and Atiyah, Patodi \&
 Singer~\cite{APS75a,APS75b,APS76} which extends this to cover the
 square of the Dirac operator of Sect.~\ref{ss:susy}, 
 writes the corresponding heat operator as a contour integral, and ultimately
 approximates the heat kernel by approximating the 
 operator in the integrand.  This approximation leads to a heat
 kernel proof of the Atiyah-Singer index theorem.

Witten~\cite{Witten82a,Witten82b} observed that McKean and Singer's
argument fit naturally into the language of supersymmetry, that the
heat kernel for the Dirac operator was the (imaginary-time) propagator
for an appropriate supersymmetric quantum mechanical theory, and that
standard physics calculations of stationary phase/steepest descent
should give the small-time behavior of this
propagator. Alvarez-Gaum\'e 
\cite{Alvarez83} and, independently, Friedan and Windey \cite{FW84}
implemented this program to give path integral ``proofs'' of the index
theorem.  Their arguments differ from earlier heat kernel proofs in
that the  
these small-time asymptotics are computed not from the heat equation directly but from
steepest descent based on the 
Lagrangian appearing in the path integral representation.  Friedan and
Windey in particular cover the general case of 
a Dirac operator associated to an arbitrary Clifford bundle and
Clifford connection, leading to what  Berline, Getzler
and Vergne \cite{BGV04}  refer to as the local index theorem.  The
argument below follows Friedan and Windey closely, although the more 
mathematician-friendly notation and terminology are those of Berline
et al.

Prop.~\ref{pr:local} implies the asymptotics of the heat kernel
at the diagonal are local, so it
suffices to work over $\RR^m$ with a nonstandard metric.
Eq.~\eqref{eq:rescale} rescales the corresponding approximate kernel on $\RR^m$ in a
way  familiar from standard uses of  steepest descent,
 with the extra wrinkle that the
Clifford bundle is also rescaled. The idea is that the rescaling does
not affect the small-$t$
behavior on the diagonal. In fact, Prop.~\ref{pr:kr} shows the
rescaling operation commutes with taking the fine-partition limit. 
On the other hand, Prop.~\ref{pr:rto0}  shows
the rescaled kernel
on a given partition approaches, in a certain limit of the rescaling
parameter, 
 that of a flat
theory with a magnetic term, which is exactly solved in
Prop.~\ref{pr:k0}.  Thm.\ref{th:asit} uses the strong results of
Eq.~\eqref{eq:kinf-test} to interchange the rescaling parameter 
and the fine
partition limits, from which the local version of the the
Atiyah-Singer index
theorem follows directly. 

Suppose $\VV$ is a Clifford 
module over an even-dimensional manifold
$M$ with Riemannian $g,$  $\Nabla^\VV$ is a Clifford connection,
and $V=c_\VV(F^\tspinor)-\scalar/4,$ so the associated elliptic operator
$\Delta^\VV=\parens{\dirac^\VV}^2$ as in Subsection~\ref{ss:susy}.
If 
 all of
that data is tame (for example, compact and smooth), then
$K_{\Delta^\VV}=K_{\dirac^2}$ as 
defined by Thm.~\ref{th:kgen} has a large partition limit
$K_{\dirac^2}^\infty$ which is the heat kernel for $\dirac^2$.

 Let $x_0 \in M$.  Endow a ball of radius $D_1>0$ around $x_0$  with Riemann
 normal coordinates, and identify the bundle over it  with $\VV_{x_0}$
 via parallel transport along minimal geodesics.  This defines a
 metric
 $g_1,$ a trivial bundle $\VV_1,$ and a connection $\Nabla^1$ over a
 neighborhood of the origin in $\RR^m,$ all with bounded derivatives
 up to order four.  Extend all of these to all of $\RR^m$ so that the
 derivatives remain bounded and so that both $\Nabla^1$ and the
 Levi-Civita connection $\Nabla^{g_1}$ continue to be $0$ on radial
 directions. Let $\Clifford$ denote the Clifford algebra  $C(T^*_{x_0}M)$ at
 $x_0 = 0,$
 whose action on $\VV_{x_0}$ splits it into $\spinor
 \tensor \tspinor,$ where $\spinor$  is the spinor representation of
 $\Clifford$ and $\Clifford$ acts trivially on $\tspinor$. $\VV_1$ can be identified with the trivial
 bundle $\spinor \tensor \tspinor$ over  $\RR^m$.
  Identifying the
 Clifford algebra   at any point in $\RR^m$  with $\Clifford$ by
 radial translation gives it an action on $\spinor \tensor
 \tspinor$ that
 makes $\Nabla^1$ a Clifford connection  agreeing 
 with $\Nabla^\VV$
 in the ball of radius $D_1$.  In fact then $\Nabla^1 = \Nabla^{g_1}
 \tensor 1 + 1 \tensor \Nabla^{\tspinor},$ where $\Nabla^{g_1}$ is the
 Levi-Civita connection on $\spinor$ and $\Nabla^\tspinor$ is some
 connection on $\tspinor$ with curvature $F^\tspinor$.  The choice $V_1=c(F^\tspinor)-
 \scalar_1/4$ defines
 a Dirac operator $\dirac_1$ on $(g_1,\spinor\tensor
 \tspinor\times \RR^m, \Nabla^1)$ whose associated approximate heat kernel
 $K_1=K_{(\dirac_1)^2}$
 can be identified with $K_{\dirac^2}$ in that ball
 by the
 obvious isomorphism, and therefore by Prop.~\ref{pr:local}
 \be
K^\infty_{\dirac^2}(x_0,x_0;t)-K^\infty_1(0,0;t) =\OO\parens{e^{-(d_1)^2/t}}
\ee{dirac-local}
for some $d_1 > 0$.

 To investigate the small-time asymptotics of $K_{\dirac^2}$ at $x_0$
 it thus suffices to consider only $K_1(0,0;t)$.  Because the bundle
 is trivial, $K_1$ can be taken not as a section but as a function with
 values in $\End\parens{\spinor} \tensor \End\parens{\tspinor} \sim \Clifford
 \tensor \End(\tspinor)$.
The Clifford algebra
action $c_{\Lambda}$ on $\Lambda T^*_{x_0}M$ means $K_1$ also picks out a function
with values in $\End\parens{\Lambda
  T^*_{x_0}M}\tensor \End(\tspinor)$. Mildly abuse notation to let $K_1$
also refer to this function. 
Thus, $K_1$ is a kernel on the trivial bundle $\Lambda T^*_{x_0}M
 \times \tspinor$ over $\RR^m$, and  $K_1$ is still
 an approximate semigroup with the same constants as before, call them
 $(B,C,D, T)$. 
 
To rescale $K_1,$ define a family of  metrics $g_r$ on $\RR^m$ for $0 \leq r \leq 1$ as follows: Define $\phi_r \colon \RR^m
 \to \RR^m$ by $\phi_r(\vec{x})=r\vec{x}$, define $\psi_r \colon
 \Lambda T^*_{x_0}M \to \Lambda T^*_{x_0}M$ by $\psi_r(\alpha)=
 r^{\deg(\alpha)} \alpha$ for $\alpha$ homogeneous. Finally, define
 $g_r = r^{-2} \phi_r^*[g_1]$. and extend by continuity to
 $g_0=g_{1,\vec{0}}$. By construction, $g_{1,\vec{0}} (\vec{v},
 \vec{w}) = \parens{\vec{v}, \vec{w}}$, 
 the standard inner product on $R^m$. 
  This family has the following properties (extending each
 formula by continuity to $r=
 0$):
 \begin{align*}
	g_{r,\vec{x}}(\vec{v}, \vec{w}) & = g_{1,r\vec{x}}(\vec{v}, \vec{w})  \\
	d_{g_r}(\vec{x},\vec{y}) & =  r^{-1}d_{g_1}(r\vec{x}, r\vec{y} )    \\
	\parens{\vec{y}_\vec{x}}_{g_r}  & =  r^{-1} \parens{(r\vec{y})_{r\vec{x}}}_{g_1} \\
	\Ricci_r(\vec{y}_{\vec{x}},\vec{y}_{\vec{x}}) & =
        \Ricci_1\parens{(r\vec{y})_{r\vec{x}},(r\vec{y})_{r\vec{x}}}
        \\ 
	\scalar_r & = r^2 \scalar_1\\
	\dmeas_{g_r}\vec{y} & = r^{-m} \dmeas_{g_1} (r\vec{y}).
 \end{align*}
 
 If $K(\vec{x},\vec{y};t)$ is a kernel on the bundle $\Lambda T^*_{x_0}M
 \times \tspinor$ over $\RR^m,$ let $\Phi_r$ rescale $K$
 according to 
 \be \Phi_r[K](\vec{x},\vec{y};t) = r^{m} \psi_r^{-1}
 K(r\vec{x},r\vec{y};r^2t) \psi_r.
\ee{rescale}
This rescaling extends $K_1$ to a family of
 kernels $K_r$ on the same bundle via
 \[K_r=\Phi_r(K_1). \]

%%%%%%%%%
%%%%%%%%%
\begin{proposition}\label{pr:kr}
	$\Phi_r$ is a homomorphism from the kernel product $*$ using
  the metric $g_1$ to the kernel product $*$ using the metric $g_r$
  for $r>0$. A  $t$-norm can be chosen for each
  metric $g_r$ and constants $(B,C,D,T)$ independent of $0<r<1$ such
  that $\Phi_r$ is a map of norm at most $1$ between the respective $t$-norms and
  such that $K_r$ with the metric $g_r$ is an approximate semigroup
  with constants $(B,C,D,T)$ independent of $r.$ Finally $K_r^\infty=
  \Phi_r\sbrace{K_1^\infty}$. 
\end{proposition}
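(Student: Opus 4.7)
The plan is to verify the four claims in order, the key technical observation being that the fiber-wise rescaling $\psi_r$ becomes an isometry under a suitable $r$-dependent inner product on $\Lambda T^*_{x_0}M$.

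First, the homomorphism property is essentially bookkeeping: expanding $(\Phi_rK_1*_{g_r}\Phi_rK_2)(\vec{x},\vec{z};t)$ and using the measure identity $\dmeas_{g_r}\vec{y}=r^{-m}\dmeas_{g_1}(r\vec{y})$ listed before Eq.~\eqref{eq:rescale}, the substitution $\vec{u}=r\vec{y}$ collapses the $r^{\pm m}$ factors while the $\psi_r^{\pm1}$ conjugations pair off, leaving exactly $\Phi_r[K_1*_{g_1}K_2](\vec{x},\vec{z};t)$ with time parameter $t=t_1+t_2$.

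Second, I would set up a $t$-norm on $g_r$ with uniform constants. Because $g_{r,\vec{x}}(\vec{v},\vec{w})=g_{1,r\vec{x}}(\vec{v},\vec{w})$, the $k$th derivative of $g_r$ picks up a factor $r^k$ from the chain rule, so the sup bounds on $g_r$, $g_r^{-1}$ and their derivatives up to order $5$ are uniformly dominated by those of $g_1$ for $r\in(0,1]$; hence the constants $(B,D,T)$ of Prop.~\ref{pr:t-def} may be chosen independent of $r$. I would then equip the fiber $\Lambda T^*_{x_0}M$ with the inner product weighting $k$-forms by $r^{2k}$, making $\psi_r$ an isometry between this $r$-norm and the standard norm, so that conjugation $M\mapsto\psi_r^{-1}M\psi_r$ preserves endomorphism norms. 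Third, I would establish $\norm{\Phi_r K}_\tnm{t}\leq\norm{K}_\tnm{r^2 t}$ (the left norm in $g_r$, right in $g_1$). For $K\in\EE_{B,D}(r^2 t)$ in $g_1$, evaluating $\abs{\Phi_r K(\vec{x},\vec{y};t)}$ in the $r$-fiber norm eliminates the $\psi_r$ factors, leaving $r^m\abs{K(r\vec{x},r\vec{y};r^2 t)}$. On the close-diagonal region $d_{g_r}(\vec{x},\vec{y})<D$, which lies inside $d_{g_1}(r\vec{x},r\vec{y})<rD\leq D$, the scaling identities $\abs{\vec{y}_\vec{x}}^2_{g_r}=r^{-2}\abs{(r\vec{y})_{r\vec{x}}}^2_{g_1}$ and $r^m(2\pi r^2 t)^{-m/2}=(2\pi t)^{-m/2}$ convert the $g_1$ bound into the $g_r$-version of $K_{B,D}$, the extra factor $e^{B(r^2-1)\abs{\vec{y}_\vec{x}}^2/(5m)}\leq 1$ being harmless. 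The annular contribution from $d_{g_r}\in[D,D/r)$, where the pointwise inequality fails because $\chi_{<D}^{g_r}\leq\chi_{<D}^{g_1}(r\cdot,r\cdot)$, carries a Gaussian factor $e^{-D^2/(4t)}$ that fits into the $J$-term of $\EE'_{B,D}(t)$ in $g_r$; an analogous argument handles the original $J$-piece of the $\EE'$ decomposition of $K$.

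Combining the homomorphism property with the norm bound yields the approximate semigroup property of $K_r$ with constants independent of $r$:
\[
\norm{K_r(t_1)*_{g_r}K_r(t_2)-K_r(t)}_\tnm{t}=\norm{\Phi_r\sbrace{K_1(r^2 t_1)*_{g_1}K_1(r^2 t_2)-K_1(r^2 t)}}_\tnm{t}\leq C(r^2 t)^{3/2}\leq Ct^{3/2},
\]
and similarly $\norm{K_r(t)}_\tnm{t}\leq 1$. For the last claim, a partition $P=(t_1,\ldots,t_k)$ of $t$ corresponds to the partition $r^2 P=(r^2 t_1,\ldots,r^2 t_k)$ of $r^2 t$ with $\abs{r^2 P}=r^2\abs{P}$, and iterating the homomorphism gives $K_r^{*P}(t)=\Phi_r[K_1^{*(r^2 P)}(r^2 t)]$. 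Applying Thm.~\ref{th:kinf} in both metrics together with the norm-at-most-one property of $\Phi_r$ shows that fine-partition limits commute with $\Phi_r$, giving $K_r^\infty=\Phi_r K_1^\infty$.

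The principal obstacle is the bookkeeping in the third step: verifying that with the $r$-dependent fiber norm every operator-/kernel-norm comparison underlying the $t$-norm remains valid uniformly in $r$, and handling the support mismatch between $\chi_{<D}^{g_r}$ and $\chi_{<D}^{g_1}(r\cdot,r\cdot)$ cleanly via the close/far-diagonal split. Once these are in place, every remaining step reduces to essentially routine manipulation of the scaling identities listed before Eq.~\eqref{eq:rescale}.
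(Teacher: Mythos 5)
Your proof takes essentially the same approach as the paper's: verify the homomorphism by direct change of variables, note that the sup bounds on $g_r$ and its derivatives are dominated uniformly in $r\in(0,1]$ by those of $g_1$, decompose $K_{B_1,D_1,g_1}$ under $\Phi_r$ into a near-diagonal piece landing in $\EE_{B_1,D_1,g_r}$ and an annular piece absorbed into the $J$-term, and conclude via the uniform bound on $\Phi_r$ together with the homomorphism identity $K_r^{*P}=\Phi_r[K_1^{*(r^2P)}]$. Your explicit introduction of the $r$-weighted fiber inner product making $\psi_r$ an isometry is a worthwhile clarification of a point the paper leaves implicit, namely why conjugation by $\psi_r$ does not inflate the matrix norm appearing in the $t$-norm.
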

%%%%%%%%%
%%%%%%%%%

%%%%%%%%%
%%%%%%%%%
\begin{proof}
	For the first claim
	\begin{align*}
          & \Phi_r[K](t_1)*\Phi_r[J](t_2)(\vec{x},\vec{z}) =
		\int \Phi_r[K](\vec{x},\vec{y};t_1)\Phi_r[J](\vec{y},\vec{z};t_2) 
                d_{g_r}\vec{y}  \\
                & \qquad = \int \psi_r^{-1} r^{2m}
                K(r\vec{x},r\vec{y},r^2 t_1) J(r\vec{y}, r\vec{z};r^2
                t_2) \psi_r d_{g_r}\vec{y}\\ 
		& \qquad = r^{m}  \psi_r^{-1} \int
                K(r\vec{x},\vec{u},r^2 t_1) J(\vec{u}, r\vec{z};r^2
                t_2) d_{g_1} \vec{u} \psi_r\\ 
		& \qquad = r^{m}  \psi_r^{-1} K*J(r\vec{x},r\vec{z};r^2 t) \psi_r=
                \Phi_r[K*J](\vec{x},\vec{z};t).
	\end{align*} 

	For the second, write $B_1,D_1$ for the corresponding
        constants in 
        Prop.~\ref{pr:t-def} and
        Cor.~\ref{cr:t-norm} as determined by the bounds for
        $g_1$. Since the supremum norm on $g_r$ and all its 
        derivatives are bounded by the corresponding quantities for
        $g_1,$ these constants work for any $g_r.$ In particular,
        there is a $t$-norm satisfying Cor.~\ref{cr:t-norm} independent of $r$.
        Write $H_{D,g}$ for the kernel $H_D$
        defined by Eq.~\eqref{eq:h-def} and $K_{B,D,g}$ for the kernel
        $K_{B,D}$ defined by Eq.~\eqref{eq:kbd-def} to emphasize their
        dependence on a given metric. Defining $J$ by
        \[K_{B_1,D_1,g_1}= \chi_{<rD_1}K_{B_1,D_1,g_1} + J,\]
        conclude that 
        \[\Phi_r\sbrace{K_{B_1,D_1,g_1}}=K_{B_1,D_1,g_r} + \Phi_r\sbrace{J}\]
        with $\norm{\Phi_r\sbrace{J}}_{\text{ker}}\leq
        te^{-D_1^2/(20t)}$ for small enough $t$.   
        Suppose $\norm{K}_{\tnm{t}}=  1$, so
       $K = \widetilde{K} + \widetilde{J}$ where
        $\abs{\widetilde{K}} \leq
        e^{B_1 \sqrt{t}} \int K_{B_1,D_1,g_1}(\alpha t) d\mu_\alpha$
        and $\norm{\widetilde{J}}_{\tnm{t}}\leq te^{-D_1^2/(20t)}$. Then  
        \begin{align*}
          \abs{\Phi_r(\widetilde{K})}&\leq e^{B_1 r\sqrt{t}} \int
        \Phi_r\parens{K_{B_1,D_1,g_1}(\alpha t)} d\mu_\alpha\\
        &\leq e^{B_1 \sqrt{t}} \int
        K_{B_1,D_1,g_r}(\alpha t) + \Phi_r\sbrace{J(\alpha t)}  d\mu_\alpha,
        \end{align*}
        so $\Phi_r\parens{\widetilde{K}}$ is an element of
          $\EE_{B_1,D_1,g_r}(t)$ plus a  kernel $J'$ with
        $\norm{J'}_{\text{ker}}\leq 2t e^{B_1 \sqrt{2t}}
          e^{-D_1^2/(40t)}$. 
        Meanwhile
        $\norm{\Phi_r(\widetilde{J})}_{\text{ker}}\leq \norm{\widetilde{J}}_{\text{ker}}
        \leq te^{-D_1^2/(20t)}$.  So replacing $D_1$ with a smaller 
        $D$ makes $\Phi_r(K) \in \EE'_{B_1,D, g_r}(t).$  With
        these choices of constants, which still depend only the bounds
        on $g_1,$ 
        $\Phi_r$ is norm at most $1$ as a map
        between the corresponding $t$-norms.

        Since  $\norm{K_1}_{\tnm{t}}\leq 1$, the preceding argument
        implies   $\norm{K_r}_{\tnm{t}}\leq 
        1$.  Notice also that
        \begin{align*}
        &\norm{K_r(t_1)*K_r(t_2) -
          K_r(t)}_{\tnm{t}}=\norm{\Phi_r\parens{K_1(t_1) *
              K_1(t_2)-K_1(t)}}_{\tnm{t}}\\
          &\qquad \leq \norm{ K_1(t_1) *
            K_1(t_2)-K_1(t)}_{\tnm{t}} \leq Ct^{3/2}
          \end{align*}
        so $K_r$ is an approximate semigroup with constants
        independent of $r$.
       
\end{proof}
%%%%%%%%%
%%%%%%%%%

 As $r \to 0,$ the rescaled kernel $K_r$ will approach a kernel $K_0$ defined as follows:
 First, define $\Rlimit \in \Lambda T^*_{x_0}M \tensor \End(T_{x_0}M)$ by
 \[\Rlimit_k^l= \frac{1}{2}R_{ijk}^{l}  dx^i \wedge dx^j,\]
 where $R$ is evaluated at $x_0,$ 
 and then define $\Flimit \in \Lambda T^*_{x_0}M \tensor  \End\parens{\tspinor}$ by
 \[\Flimit=\frac{1}{2}F_{ij}^\tspinor
 dx^i \wedge dx^j,\]
 where $F$ is likewise evaluated at $x_0$.
Finally, define
\be
H_{\flat}(\vec{x},\vec{y};t)= (2\pi t)^{-m/2}
e^{-\abs{\vec{y}-\vec{x}}^2/(2t)}
\ee{hlfat-def}
and 
  the kernel on $\Lambda T^*_{x_0} M$ 
 \be
 K_0(\vec{x},\vec{y};t) = H_{\flat}(\vec{x},\vec{y};t)
 e^{\parens{\Rlimit \vec{x}, \vec{y}-\vec{x}}/4 -t\Flimit/2}
 \ee{k0-def}
   where the elements of $\Lambda T^*_{x_0}M$ on the right-hand side 
act by multiplication.  

 %%%%%%%%%
 \begin{remark}
 These analytic functions of $\Rlimit$ and $\Flimit$ are defined via power series, and  are well-defined for all $t$ because $\Rlimit$ and $\Flimit$ are nilpotent.
 \end{remark}
 %%%%%%%%%

%%%%%%%%%
%%%%%%%%%
\begin{lemma}\label{lm:krlim}
\be
\lim_{r\to 0} K_r=K_0
\ee{krlim}
pointwise.	   
\end{lemma}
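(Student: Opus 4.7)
My plan is a direct computation, factor by factor, after substituting Eq.~\eqref{eq:kgen-def} into the rescaling formula $K_r(\vec{x},\vec{y};t) = r^m \psi_r^{-1} K_1(r\vec{x},r\vec{y};r^2 t)\psi_r$. This breaks the problem into four pieces: the Gaussian prefactor $r^m H_D(r\vec{x},r\vec{y};r^2 t)$, the Ricci and scalar exponents, the potential $V = c_\VV(F^\tspinor) - \scalar/4$, and the parallel transport $\pt^{r\vec{y}}_{r\vec{x}}$. I would compute the pointwise $r \to 0$ limit of each and multiply the results.

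The easy pieces use only the metric rescaling identities recorded in Prop.~\ref{pr:kr}. The prefactor rewrites as $\chi_{<D}(r\vec{x},r\vec{y})(2\pi t)^{-m/2} e^{-\abs{\vec{y}_\vec{x}}^2_{g_r}/(2t)}$; since $g_r \to g_0$ is the Euclidean metric at $\vec{0}$ and $\chi_{<D}(r\vec{x},r\vec{y}) \to 1$ for any fixed $\vec{x},\vec{y}$, this tends to $H_\flat(\vec{x},\vec{y};t)$. The Ricci exponent is a quadratic form in a tangent vector of norm $\OO(r)$, and both scalar-curvature contributions are of size $\OO(r^2)$, so all three exponents vanish. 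The first place where the Getzler-type rescaling really enters is the Clifford portion of $V$: from $\psi_r^{-1} c_\VV(dx^i)\psi_r = r^{-1}dx^i\wedge - r\, i_{\partial_i}$ one finds $r^2 \psi_r^{-1} c_\VV(dx^i) c_\VV(dx^j) \psi_r = dx^i \wedge dx^j + \OO(r^2)$, and by antisymmetry of $F^\tspinor_{ij}$ the conjugated potential exponent converges to $-t\Flimit/2$.

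The main obstacle is the parallel transport. The plan is a Getzler-style expansion: write $\pt^{r\vec{y}}_{r\vec{x}}$ as the path-ordered exponential of $-A_i \dot{\sigma}^i$ along the geodesic $\sigma$ from $r\vec{y}$ to $r\vec{x}$, then use Eq.~\eqref{eq:nabla-rnest} to expand $A_i = \tfrac{1}{2} x^j F^1_{ij} + \OO(\abs{\vec{x}}^2)$, with $F^1_{ij} = R^\spinor_{ij} + F^\tspinor_{ij}$ and $R^\spinor_{ij} = \tfrac{1}{4} R_{ijkl}\, c_\VV(dx^k) c_\VV(dx^l)$. Approximating the geodesic by the straight line and integrating, antisymmetry of $F^1_{ij}$ collapses the result to the leading exponent $\tfrac{1}{2}(r\vec{x})^i (r\vec{y})^j F^1_{ij}$. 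The $F^\tspinor$ portion carries an unabsorbed $r^2$ and vanishes; the $R^\spinor$ portion, after $\psi_r$ conjugation, picks up the Getzler factor $r^{-2}$ that precisely cancels the $r^2$, and combining with the symmetry $R_{ijkl} = R_{klij}$ this reduces to $(\Rlimit \vec{x},\vec{y}-\vec{x})/4$ (the $x^k x^l$ piece dies by antisymmetry of $R$ in its last two indices).

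The delicate point, which makes everything work, is that after $\psi_r$ rescaling the leading-order operators become wedge products of $2$-forms on $T^*_{x_0}M$, which commute in $\Lambda T^*_{x_0}M$; hence the path-ordering collapses to an ordinary exponential in the $r \to 0$ limit. Higher-order terms in the expansion of $A_i$, together with the iterated-integral corrections to the path-ordered exponential, each acquire extra powers of $r$ from the additional $\int_0^1 ds$ integrations, and a straightforward count shows these outweigh any additional Getzler $r^{-2}$ and are thus $\OO(r)$. Multiplying the four limits then yields $K_r(\vec{x},\vec{y};t) \to H_\flat(\vec{x},\vec{y};t)\, e^{(\Rlimit\vec{x},\vec{y}-\vec{x})/4 - t\Flimit/2} = K_0(\vec{x},\vec{y};t)$, as asserted.
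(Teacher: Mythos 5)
Your proposal is correct, and for most of the lemma it tracks the paper closely: both reduce the claim to factor-by-factor limits, both handle the Gaussian prefactor, the Ricci/scalar exponents, and the $F^\tspinor$-dependent potential in the same way (using $g_1 = g_0 + \OO(\abs{\vec{x}}^2)$ and $\psi_r^{-1} r\, c(dx^i)\psi_r \to dx^i\wedge$). Where you genuinely diverge is the parallel transport factor. The paper notes that in the radial trivialization $\pt_{r\vec{x}}^{r\vec{y}}$ is the holonomy around the geodesic triangle $0 \to r\vec{y} \to r\vec{x} \to 0$, splits it into the $\tspinor$ piece (order $\OO(r^2)$, killed) and the $\Lambda T^*_{x_0}M$ piece (an element of the spin group, hence $\exp$ of a degree-two Clifford element), and cites Ambrose--Singer~\cite{AS53} for the leading form of that degree-two element. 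You instead expand $\pt_{r\vec{x}}^{r\vec{y}}$ directly as a path-ordered exponential of $-A_i\dot\sigma^i$, use Eq.~\eqref{eq:nabla-rnest} for $A_i$, and do the Getzler-style power count. This buys you a more self-contained argument (no appeal to the holonomy identity or to~\cite{AS53}), at the cost of having to track three sources of error: (i) the geodesic-versus-straight-line discrepancy (which you use implicitly but never quantify; it is $\OO(r^3)$ in position, hence $\OO(r^4)$ in the exponent, hence $\OO(r^2)$ after the worst-case $r^{-2}$ boost), (ii) the $\OO(\abs{\vec{x}}^2)$ remainder in $A_i$, and (iii) the path-ordering corrections coming from the non-commuting subleading parts of $\psi_r^{-1}c(dx^k)c(dx^l)\psi_r$. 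Your count is right, but the crucial fact that makes it close is worth stating explicitly: the Clifford filtration degree of the connection coefficient $A_i$ is at most $2$ (the spinor part lies in $\mathfrak{spin}(m)\subset\Clifford_2$, the $\tspinor$ part in $\Clifford_0$), so the Getzler conjugation supplies at most one $r^{-2}$ per factor of $A$, which it must fully spend against the $r^2$ carried by the leading line integral of $\frac{1}{2}x^jR^\spinor_{ij}\dot\sigma^i$; everything else is then strictly positive in $r$. With that made explicit, your argument is airtight and a legitimate alternative to the paper's.
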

%%%%%%%%%
%%%%%%%%%

%%%%%%%%%
%%%%%%%%%
\begin{proof}
Using $K_r = \Phi_r(K_1),$ the definition of $K_1$ as $K_{(\dirac_1)^2}$ of
Eq.~\eqref{eq:kgen-def} for an elliptic operator $\Delta_1=(\dirac_1)^2$ over
$\VV_1$ 
and the definition of $K_0$ above, the
statement expands to
\begin{align*}
&\lim_{r \to 0} r^m (2\pi t)^{-m/2} e^{-\sbrace{d_{g_r}(\vec{x},
      \vec{y})}^2/(2t)} \\ 
&\qquad \times e^{-\Ricci_r(\vec{y}_\vec{x},\vec{y}_\vec{x})/12  + t
          \scalar_r / 24 - \frac{t}{4} F_{ij}^\tspinor(r\vec{x})\psi_r^{-1}
          r^2 c(dx_i)c(dx_j)\psi_r} \psi_r^{-1}
        \pt_{r\vec{x}}^{r\vec{y}} \psi_r \\
&\qquad \qquad \qquad = H_{\flat} e^{\parens{\Rlimit \vec{x}, \vec{y}-\vec{x}}/4
  -t\Flimit/2} .
\end{align*}
  Thus the lemma 
reduces to the the following assertions:
  \begin{align*}
    \lim_{r\to 0}
    d_{g_r}(\vec{x},\vec{y})&=\abs{\vec{x}-\vec{y}} \\
    \lim_{r\to 0} \Ricci_r(\vec{y}_\vec{x},\vec{y}_\vec{x})&=0,\\
    \lim_{r\to 0} \scalar_r&=0,\\
    \lim_{r\to 0} F_{ij}^\tspinor(r\vec{x})\psi_r^{-1}
          r^2 c(dx_i)c(dx_j)\psi_r/2&=\Flimit \mbox{ and }\\
    \lim_{r\to 0} \psi_r^{-1}
        \pt_{r\vec{x}}^{r\vec{y}} \psi_r&=e^{\parens{\Rlimit \vec{x}, \vec{y}-\vec{x}}/4}.
  \end{align*}
  The first three are immediate from the fact that $g_1= g_0 +
  \OO\parens{\abs{\vec{x}}^2}$.  The fourth follows from the fact that
  $\lim_{r\to 0}\psi_r^{-1}rc(dx)\psi_r=dx$.

  For the fifth limit, having
  trivialized the bundle radially at the origin, the parallel
  transport from $r\vec{x}$ to $r\vec{y}$ is the holonomy of the
  geodesic triangle from $0$ to $r\vec{x}$ to $r\vec{y}$ to $0$.
  Treat 
 this separately on $\tspinor$ and on
  $\Lambda(T^*_{x_0}M)$.  On $\tspinor$ the holonomy differs from $1$
  by a quantity proportional to the area enclosed, which is
  $\OO(r^2)$. 
  For the  $\Lambda(T^*_{x_0}M)$ piece, the holonomy is an element
  of the spin group and therefore an exponential of a degree-two
  element of $\Clifford$. 
  This exponent
  in turn is the image under $c$ of
  the two-form generating
  the holonomy about the same geodesic triangle with respect to the
  Levi-Civita connection. It is standard \cite{AS53}
  that this is $\parens{\Rlimit \cdot r\vec{x}, r\vec{y}-r\vec{x}}/4
  + \OO\parens{\abs{r\vec{x}} \abs{r\vec{y}-r\vec{x}}
    \abs{r\vec{y}+r\vec{x}}}$. Thus, this piece 
  is the exponential of
  the image under $c$ of $\parens{\Rlimit\cdot  r\vec{x},r\vec{y}}/4+
  \OO\parens{r^3}$. Conjugation by $\psi_r$ will
  reduce the power of $r$ by two, giving the result. 
\end{proof}
%%%%%%%%%
%%%%%%%%%

%%%%%%%%%
%%%%%%%%%
\begin{proposition}\label{pr:rto0}
	Given any partition $P$ of any $t>0$ 
	\[ \lim_{r \to 0} K_r^{*P}(0,0;t)=K^{*P}_0(0,0;t).\]
\end{proposition}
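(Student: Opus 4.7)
The approach is to expand $K_r^{*P}(0,0;t)$ as an iterated integral over $(\RR^m)^{k-1}$ using the definition of the $*$-product from Eq.~\eqref{eq:*-product}, and then pass the $r\to 0$ limit under each integral by dominated convergence. Writing $P=(t_1,\ldots,t_k)$ and setting $y_0=y_k=0$, this gives
\[
K_r^{*P}(0,0;t) = \int_{(\RR^m)^{k-1}} \prod_{i=1}^k K_r(y_{i-1},y_i;t_i) \prod_{i=1}^{k-1}\dmeas_{g_r} y_i.
\]
By Lemma~\ref{lm:krlim} each factor $K_r(y_{i-1},y_i;t_i)$ converges pointwise to $K_0(y_{i-1},y_i;t_i)$, and since $\det^{1/2} g_r(y_i) = \det^{1/2} g_1(r y_i)\to 1$ pointwise, the integrand-times-volume-element converges pointwise to the integrand for $K_0^{*P}(0,0;t)$ paired with Lebesgue measure.

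The second step is to produce an $r$-uniform integrable majorant. Prop.~\ref{pr:kr} ensures that $K_r$ is an approximate semigroup with constants $(B,C,D,T)$ independent of $r\in(0,1]$, so Cor.~\ref{cr:t-norm} applies uniformly in $r$ and yields, for each $t_i<T$, the pointwise bound
\[
|K_r(y_{i-1},y_i;t_i)| \leq 2(2\pi t_i)^{-m/2} e^{-d_{g_r}(y_{i-1},y_i)^2/(4t_i)} + t_i e^{-D^2/(20t_i)}.
\]
For intervals with $t_i\geq T$, one recovers an analogous Gaussian estimate either from the explicit formula for $K_r$ (whose support forced by $\chi_{<D}$ keeps the Ricci, scalar, and $V$ prefactors bounded) or by regrouping adjacent subintervals and invoking Eq.~\eqref{eq:t*-bd}. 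Because $g_1$ and $g_1^{-1}$ have entries uniformly bounded in sup norm, there exist positive constants $c_1,c_2,c_3$ independent of $r\in[0,1]$ with $c_1|v|^2 \leq g_r(v,v)\leq c_2|v|^2$ and $\dmeas_{g_r}y \leq c_3\,dy$; hence $d_{g_r}(x,y)\geq \sqrt{c_1}|x-y|$, and the product of these $r$-independent Gaussian bounds is an absolutely integrable majorant on $(\RR^m)^{k-1}$.

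Applying dominated convergence iteratively in $y_1,\ldots,y_{k-1}$ then gives
\[
\lim_{r\to 0} K_r^{*P}(0,0;t) = \int_{(\RR^m)^{k-1}} \prod_{i=1}^k K_0(y_{i-1},y_i;t_i) \prod_{i=1}^{k-1}dy_i = K_0^{*P}(0,0;t),
\]
where on the right the measure is Lebesgue because $g_0$ is the flat Euclidean metric. The main technical obstacle is the second paragraph: constructing the $r$-uniform Gaussian majorant, which rests on the global uniform equivalence between each rescaled metric $g_r$ and the Euclidean metric, ultimately inherited from the tameness of $g_1$.
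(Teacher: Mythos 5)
Your overall strategy — expand the $*$-product as an iterated integral, use Lemma~\ref{lm:krlim} for pointwise convergence, and close with dominated convergence under an $r$-uniform Gaussian majorant — is exactly the paper's argument. However, there is a gap in the second paragraph, precisely in the step you yourself single out as the ``main technical obstacle.''

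The bound you take from Cor.~\ref{cr:t-norm},
\[
|K_r(y_{i-1},y_i;t_i)| \leq 2(2\pi t_i)^{-m/2}\, e^{-d_{g_r}(y_{i-1},y_i)^2/(4t_i)} + t_i\, e^{-D^2/(20t_i)},
\]
is not a Gaussian bound: the additive term $t_i e^{-D^2/(20t_i)}$ is constant in the spatial variables and is not integrable over $\RR^m$. When you take the product of $k$ such bounds over the partition and expand, any term in which at least one middle variable $y_j$ appears only through constant factors already makes the integral over $(\RR^m)^{k-1}$ diverge (the pure-constant term alone does so). One cannot rescue this by noting that $K_r$ is compactly supported: the support radius, inherited from $\chi_{<D}$ via the rescaling $\Phi_r$, is of order $D/r$ and blows up as $r\to 0$, so truncating the constant tail does not yield an $r$-uniform integrable majorant. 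Thus ``the product of these $r$-independent Gaussian bounds'' is not, as written, a valid majorant, because the bounds you actually invoked for $t_i<T$ are not Gaussian.

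The fix is already half in your write-up: the explicit formula Eq.~\eqref{eq:kgen-def} for $K_1$ gives, for all $t>0$ and not just $t\geq T$, a pure Gaussian bound $|K_1(x,y;t)| \leq C_1 (2\pi t)^{-m/2} e^{-d_{g_1}(x,y)^2/(2t)}$ (the $\Ricci$, $\scalar$, and $V$ prefactors are uniformly bounded on the support of $\chi_{<D}$), and the rescaling $\Phi_r$ carries this to $|K_r(x,y;t)| \leq C_1 (2\pi t)^{-m/2} e^{-d_{g_r}(x,y)^2/(2t)}$. Combined with your correct observation that $d_{g_r}(x,y)\geq \sqrt{c_1}\,|x-y|$ uniformly in $r$ (from tameness), one gets $|K_r|\leq C_1 H_\flat(\cdot\,,\cdot\,;C_2 t)$ with $C_1,C_2$ independent of $r$; the product of these genuinely Gaussian bounds is integrable over $(\RR^m)^{k-1}$, and dominated convergence goes through. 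This is the bound the paper uses. In short: use the explicit-formula bound for all $t_i$, not only $t_i\geq T$, and drop the appeal to Cor.~\ref{cr:t-norm} for the majorant.
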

%%%%%%%%%
%%%%%%%%%

%%%%%%%%%
%%%%%%%%%
\begin{proof}
This follows from Lemma~\ref{lm:krlim}, Lebesgue Dominated
Convergence and the fact that $K_0$ and $K_r$ are bounded by $C_1
H(\vec{x},\vec{y};C_2t)$ for some $C_1,C_2$ where
$H(\vec{x},\vec{y};t)=(2\pi t)^{-m/2}
e^{-d_{g_0}^2(\vec{x},\vec{y})/(2t)}$, which in turn follows from the
same bound on $K_1$. 
\end{proof}

%%%%%%%%%
%%%%%%%%%

%%%%%%%%%
%%%%%%%%%
\begin{proposition}\label{pr:k0}%%%%% 
\be
	\lim_{\abs{P} \to 0} K_0^{*P}=K^\infty_0
\ee{k0-inf}
 %%%%% 
	converges pointwise, and is the heat kernel for the operator
 %%%%% 
\be
	\Delta= \frac{\partial^2}{\partial x_i \partial x_i} +
        \frac{1}{2}\Rlimit_i^j x_j \frac{\partial}{\partial x_i
        }- \Flimit + \abs{\Rlimit \cdot \vec{x}}^2/16.
\ee{k0-delta}
 %%%%% 
	 Therefore, 
	\be
	K_0^\infty(0,0;t)= (2\pi t)^{-m/2}
{\det}^{1/2}\parens{\frac{t\Rlimit/4}{\sinh(t\Rlimit/4)}} e^{-t\Flimit/2}.
	\ee{k0-form}
\end{proposition}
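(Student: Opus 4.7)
The plan is first to identify $K_0$ as the approximate heat kernel of Eq.~\eqref{eq:kgen-def} for a specific generalized Laplacian $\Delta$ on flat $\RR^m$, next to apply the fine-partition-limit machinery of Sect.~\ref{sc:limit} to identify $K_0^\infty$ with the heat kernel of $\Delta$, and finally to compute that heat kernel at the origin by an explicit Mehler-type formula.

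First I would verify that on flat $\RR^m$ with the trivial bundle $\Lambda T^*_{x_0}M$, the connection $\Nabla^0_i = \partial_i + \tfrac{1}{4}\Rlimit_{ij}x^j$, and the potential $V = \Flimit$, the generalized Laplacian of Eq.~\eqref{eq:delta-def} is exactly $\Delta$ of Eq.~\eqref{eq:k0-delta}: the skew-symmetry $\Rlimit_{ij} = -\Rlimit_{ji}$ makes the divergence $\partial_i(\Rlimit_{ij}x^j)=0$ vanish and gives $\sum_i A_i A_i = |\Rlimit\vec{x}|^2/16$. Since $\Rlimit$ takes values in the commutative subalgebra $\Lambda^{\mathrm{even}} T^*_{x_0}M$, the parallel transport for $\Nabla^0$ along the straight-line geodesic from $\vec{y}$ to $\vec{x}$ is $\exp\sbrace{-\tfrac{1}{4}\int_0^1 \Rlimit_{ij}(y^j + s(x-y)^j)(x-y)^i\, ds} = \exp\sbrace{\tfrac{1}{4}(\Rlimit\vec{x}, \vec{y}-\vec{x})}$, where the quadratic cross-term vanishes by antisymmetry. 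Since flat Ricci and scalar curvatures vanish, Eq.~\eqref{eq:kgen-def} applied to this data yields exactly $K_0$, modulo the cutoff $\chi_{<D}$ whose effect is exponentially small in $1/t$.

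The technical obstruction is that this setup is not tame in the sense of Def.~\ref{def:tame}, because $\Nabla^0$ has coefficients growing linearly in $x$. I would circumvent this using the rescaling machinery already established: Prop.~\ref{pr:kr} shows that $K_r = \Phi_r(K_1)$ is an approximate semigroup with constants $(B,C,D,T)$ uniform in $r \in (0,1]$, while Lemma~\ref{lm:krlim} gives the pointwise limit $K_r \to K_0$ as $r \to 0$. Passing these bounds to the limit---using the locally uniform convergence $g_r \to g_0$ together with dominated convergence in the integrals defining $\EE'_{B,D}$ and the $*$-product---transfers the approximate-semigroup property to $K_0$ itself. Hence Thm.~\ref{th:kinf} gives existence of the pointwise limit $K_0^\infty$. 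A parallel limit-transfer, exploiting the fact that each $K_r$ is an approximate heat kernel for a rescaled Dirac Laplacian which tends to $\Delta$ as $r\to 0$, promotes $K_0$ to an approximate heat kernel for $\Delta$ in the sense of Def.~\ref{def:ahk}, and then Thm.~\ref{th:heat-kernel} identifies $K_0^\infty$ as the heat kernel of $\Delta$. The main obstacle in this step is the careful transfer of the $t$-norm bounds and the conditions of Def.~\ref{def:ahk} through the $r\to 0$ limit, since both the underlying metric and the operator vary with $r$.

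For the explicit diagonal formula I would apply Mehler's formula. Since $\Flimit$ is $x$-independent and commutes with the $\Rlimit$-part of $\Delta$ (as $\Lambda^{\mathrm{even}}T^*_{x_0}M$ is commutative), the factor $e^{-t\Flimit/2}$ pulls out. The remaining operator $\partial_i^2 + \tfrac{1}{2}\Rlimit_{ij}x^j\partial_i + |\Rlimit\vec{x}|^2/16$ is a generalized magnetic harmonic oscillator. Viewing $\Rlimit$ as a skew-symmetric endomorphism of $T_{x_0}M$ with entries in the commutative ring $\Lambda^{\mathrm{even}}T^*_{x_0}M$, one puts it in block-diagonal form with $2\times 2$ blocks $\bigl(\begin{smallmatrix}0 & -\omega_\alpha\\ \omega_\alpha & 0\end{smallmatrix}\bigr)$ for mutually commuting 2-forms $\omega_\alpha$. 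Each such block is a standard Landau/harmonic oscillator with explicit Mehler diagonal kernel, and multiplying across all blocks reassembles to $(2\pi t)^{-m/2}\det^{1/2}\parens{(t\Rlimit/4)/\sinh(t\Rlimit/4)}$; combining with $e^{-t\Flimit/2}$ gives Eq.~\eqref{eq:k0-form}.
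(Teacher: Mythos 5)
Your opening identification of $K_0$ with the kernel $K_\Delta$ of Eq.~\eqref{eq:kgen-def} for the flat metric, the connection $\nabla^0_i = \partial_i + \tfrac{1}{4}\Rlimit_{ij}x^j$, and the potential $V=\Flimit$ is a correct and illuminating observation, and your computation of the generalized Laplacian and of the straight-line parallel transport checks out. The Mehler-formula block-diagonalization for the explicit diagonal value is also a reasonable way to produce Eq.~\eqref{eq:k0-form}, and is essentially what the paper treats as standard.

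The gap is in the middle step, and it is a genuine one. You correctly flag that the flat-$\RR^m$ data is not tame, and you propose to repair this by transferring the approximate-semigroup and approximate-heat-kernel properties from $K_r$ to $K_0$ as $r\to 0$. The first transfer (approximate semigroup, hence existence of the fine-partition limit via Thm.~\ref{th:kinf}) is plausible, since Def.~\ref{def:approx-semigroup} is purely about kernel estimates on a tame $(M,g,\VV)$, and flat $\RR^m$ with the trivial bundle \emph{is} tame. But the second transfer fails as stated: Def.~\ref{def:ahk} requires the first- and zeroth-order coefficients $A^i$ and $B$ of $\Delta$ to be bounded in sup norm, and the operator $\Delta$ of Eq.~\eqref{eq:k0-delta} has $A^i=\tfrac{1}{2}\Rlimit^j_i x_j$ growing linearly and $B=-\Flimit + \abs{\Rlimit\vec{x}}^2/16$ growing quadratically. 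Consequently the $t$-norm bound in Eq.~\eqref{eq:ahk}, which must hold uniformly over all $(x,y)$, simply does not hold for $K_0$: terms like $\abs{\Rlimit\vec{x}}^2 K_0(x,y;t)$ are unbounded in the $t$-norm even though the Gaussian damps $\abs{x-y}$. No limit of $K_r$ can change this, since it is a statement about the limiting operator $\Delta$ itself, not about the sequence. So Thm.~\ref{th:heat-kernel} is unavailable, and your route has no way to identify $K_0^\infty$ with the heat kernel of $\Delta$.

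The paper avoids this entirely by a direct computation: it evaluates the Gaussian integral $K_0(t_1)*H_\flat(t_2)$ explicitly, organizes $K_0^{*P_n}$ as a telescoping sum over insertions of $K_0-H_\flat$, and observes that because $K_0-H_\flat$ has positive degree in $\Lambda T^*_{x_0}M$, at most $m$ insertions contribute --- the nilpotence of $\Rlimit$ and $\Flimit$ truncates the Duhamel series. Each of the finitely many surviving terms is controlled term-by-term, and the $n\to\infty$ limit reproduces McKean--Singer's iterated-$\#$-product expression for the heat kernel of $\Delta$. Nilpotence is thus doing real work that your approach never invokes for the convergence/identification step; without it there is no finite-term structure to lean on, and with the tameness obstruction blocking the general machinery, you would in effect be forced back to an argument of the paper's type. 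If you want to repair your proposal along its own lines, the missing ingredient is precisely such a direct Duhamel-type estimate showing that the fine-partition limit of $K_0^{*P}$ solves the heat equation for $\Delta$; the observation that $\Delta-\Delta_\flat$ raises form degree, so only finitely many Duhamel terms survive, is the key to making those estimates close.
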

%%%%%%%%%
%%%%%%%%%

%%%%%%%%%
%%%%%%%%%
\begin{proof}

A slight modification of Roger's proof of Theorem 8.2 in \cite{Rogers03} would give this result. 
However, that argument refers to expectations in a
variant of Wiener measure. The following proof uses the
language of products of approximate kernels. 
  First check that for small enough $t=t_1+t_2>0$, 
  \[K_0(t_1)*H_\flat(t_2)= \sbrace{1 + \frac{t_1}{2} \parens{\Delta-\Delta_\flat}}H_\flat(t) +
 \OO\sbrace{\parens{\frac{t_1^2}{t}}\parens{ 1 + \abs{\vec{x}}^m}} H_\flat(2t).
\] 
To see this, begin by directly computing the Gaussian integral, and
use the skew-symmetry of $\Rlimit$ to obtain
  \begin{align*}
    &\sbrace{K_0(t_1)*H_\flat(t_2)}(\vec{x},\vec{z};t)
=(4 \pi^2 t_1 t_2)^{-m/2} \int
    e^{\parens{\vec{x},\Rlimit (\vec{y}-\vec{x})}/4 - t\Flimit/2-
    \abs{\vec{y}-\vec{x}}^2/(2t_1) -
    \abs{\vec{z}-\vec{y}}^2/(2t_2)} d\vec{y} \\
  &\qquad =H_\flat(\vec{x},\vec{z};t)e^{\frac{t_1}{4t}\parens{\vec{x},\Rlimit
      (\vec{z}-\vec{x})} -\frac{t_1}{2} \Flimit + \frac{t_1 t_2}{32 t} \abs{\Rlimit
    \vec{x}}^2}.
\end{align*}
  Notice that since $\Flimit$ and $\Rlimit$ take values in the algebra
  $\Lambda T^*_{x_0}M \tensor \End(\tspinor)$ and are therefore
  nilpotent, the exponential truncates to
  multinomials. 
Expanding the exponential, and comparing  the definition of
$\Delta$ with the Laplacian $\Delta_\flat$ corresponding to the
Euclidean metric on $R^m,$ gives
 \begin{align*}
    &\sbrace{K_0(t_1)*H_\flat(t_2)}(\vec{x},\vec{z};t)
= \sbrace{1 +
     \frac{t_1}{2}\parens{\Delta-\Delta_\flat}}H_\flat(\vec{x},\vec{z};t)
+ \frac{t_1}{32}\abs{\Rlimit \vec{x}}^2\parens{\frac{t_2}{t} -
1}H_\flat(\vec{x},\vec{z};t) \\
& \qquad + \OO\parens{\abs{\frac{t_1}{4t}\parens{\vec{x},\Rlimit
      (\vec{z}-\vec{x})} -\frac{t_1}{2} \Flimit + \frac{t_1 t_2}{32 t}
    \abs{\Rlimit \vec{x}}^2}^2}H_\flat(\vec{x},\vec{z};t).
\end{align*}
Noting $1 - \frac{t_2}{t} = \frac{t_1}{t},$ $\frac{t_1 t_2}{t} \leq
t_1$ and, as in the proof of Lemma~\ref{lm:time-space}, $\abs{\vec{z}-\vec{x}}^k H_\flat(\vec{x},\vec{z};t)$ is bounded by a multiple of
  $H_\flat(\vec{x},\vec{z};2t)$, it is easy to bound the error term by
  $\OO\sbrace{\parens{\frac{t_1^2}{t}}\parens{ 1 + \abs{\vec{x}}^m}}
 H_\flat(\vec{x},\vec{z};2t)$ as claimed. 

  If $P_n$ is the partition
  $(t/n,t/n, \ldots, t/n)$ then
 \begin{align*}
    K_0^{*P_n}&=\sbrace{H_\flat + (K_0-H_\flat)}^{*P_n}
    \\
& =\sum_{k=0}^m  \sum_{\stackrel{i_0+i_1+ \cdots +
    i_k=n-k}{\scriptscriptstyle i_j \geq 0}} H_\flat(i_0 t/n) *
  \sbrace{K_0(t/n)-H_\flat(t/n)} * H_\flat(i_1 t/n) * \\
& \qquad \cdots *
\sbrace{K_0(t/n)-H_\flat(t/n)} *  H_\flat(i_{k-1} t/n) * \sbrace{K_0(t/n)-H_\flat(t/n)} * H_\flat(i_k t/n),
\end{align*}
where the first sum is only to $m$ because $K_0-H_\flat$ is of degree
at least one in $\Lambda T^*_{x_0}M$.

  Replace each of the  $ \sbrace{K_0(t/n)-H_\flat(t/n)}*
  H_\flat(i_j t/n)$ with $\frac{t}{2n}(\Delta-\Delta_\flat)
  H_\flat((i_j + 1) t/n) +
  \OO\sbrace{\frac{t}{n(i_j+1)}\parens{1+\abs{\vec{x}}^m}}H_\flat(2[i_j +
  1]t/n)$. 
The second term introduces  into the
sum a finite number, independent of $n$,  of ``error terms''. Each
contributes a summand 
\begin{align*}
&\OO\sbrace{\frac{t}{n(i_j+1)}\parens{1+\abs{\vec{x}}^m}}
 \sum_{i_0+i_1+ \cdots + i_{k}=n-k} H_\flat(i_0 t/n) *
  \frac{t}{2n}(\Delta-\Delta_\flat)
  H_\flat([i_1 + 1] t/n)* \\
& \qquad \cdots * H_\flat(2 i_jt/n) * \cdots *
\frac{t}{2n}(\Delta-\Delta_\flat) H_\flat([i_k + 1]
t/n)\\
&= \OO\sbrace{\frac{t}{n(i_j+1)}\parens{P[\abs{\vec{x}}]}} H_\flat(2t)
\end{align*} 
                       
where $P$ is some polynomial.  
 As $n$ goes to
infinity, the contribution of each of the finitely-many error terms goes to zero, leaving
\begin{align*}
&K_0^{*P_n}= \sum_{k=0}^m \frac{t^k}{n^k}\sum_{\stackrel{i_0+ i_1 + \cdots + i_k
= n-k}{\scriptscriptstyle i_j \geq 0}}  H_\flat(i_0 t/n) *
  \frac{1}{2}(\Delta- \Delta_\flat) H_\flat([i_1 + 1] t/n)
    *\\
& \qquad \cdots  *\frac{1}{2} (\Delta- \Delta_\flat)
  H_\flat([i_k + 1] t/n)\\
& =  \sum_{k=0}^m \frac{t^k}{n^k}\sum_{\stackrel{i_0+ i'_1 + \cdots + i'_k
= n}{\scriptscriptstyle i_0 \geq 0; i'_j \geq 1}}  H_\flat(i_0 t/n) *
  \frac{1}{2}(\Delta- \Delta_\flat) H_\flat(i'_1 t/n) * \cdots  *\frac{1}{2} (\Delta- \Delta_\flat)
  H_\flat(i'_k t/n)\\
& \stackrel{n \to \infty}{\to} H_\flat(t) + \sum_{k=1}^m
\int \cdots \int_{\stackrel{t_0+t_1+\cdots +t_k=t}{\scriptscriptstyle t_j \geq 0}} H_\flat(t_0) *\frac{1}{2}(\Delta-\Delta_\flat)
H_\flat(t_1) *\\
\qquad  &\cdots  * \frac{1}{2}(\Delta-\Delta_\flat)
H_\flat(t_k)  d t_k \cdots dt_1. 
\end{align*}
This sum agrees with McKean and Singer's expression~\cite{MS67} for the heat kernel
 for $\Delta$, as a sum of $k$-fold $\#$ products, which they derive
from Duhamel's formula.  Their expression has the sum taken over all
nonnegative integers $k$, but, again,
terms with
more than $m$ factors of $\Delta - \Delta_\flat$ vanish due to their
form degrees.

\end{proof}
%%%%%%%%%
%%%%%%%%%

%%%%%%%%%
%%%%%%%%%
\begin{lemma} \label{lm:psi}
If $A\in \Clifford_{2k}$, the degree-$2k$ subset in the Clifford
filtration, then the map taking $A$ to $\lim_{r \to 0}
\psi_r^{-1}c_\Lambda(A) \psi_r r^{2k}$ is multiplication by
$\rho_k(A)\in \Lambda^{2k}T^*_{x_0} M$, where $\rho_k(A)$ denotes the
degree $2k$ component of $c_\Lambda(A)1$.  In particular $\rho_k(A)$
and hence the limit is
zero on $\Clifford_{2k-1},$ 
and gives the standard identification of
$\Clifford_{2k}/\Clifford_{2k-1}$ with forms of degree $2k$.
\end{lemma}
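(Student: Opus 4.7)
The plan is to reduce to Clifford monomials and perform a direct degree-counting computation. The starting point is the standard decomposition
\[
c_\Lambda(v^*) = \epsilon(v^*) - \iota(v),
\]
where $\epsilon(v^*) = v^* \wedge \cdot$ raises form degree by one and $\iota(v)$ lowers it by one. Since $\psi_r$ acts on a degree-$j$ form by the scalar $r^j$, conjugating each summand separately gives
\[
\psi_r^{-1}\, c_\Lambda(v^*)\, \psi_r = r^{-1}\epsilon(v^*) - r\,\iota(v).
\]

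Next I would extend multiplicatively, using that $c_\Lambda$ is an algebra homomorphism from $\Clifford$ to $\End(\Lambda T^*_{x_0}M)$. For a Clifford monomial $v_1^* \cdots v_\ell^*$ this yields
\[
r^{2k}\,\psi_r^{-1}\, c_\Lambda(v_1^* \cdots v_\ell^*)\,\psi_r
= r^{2k-\ell}\prod_{j=1}^\ell \bigl(\epsilon(v_j^*) - r^2\iota(v_j)\bigr).
\]
If $\ell < 2k$, the explicit factor $r^{2k-\ell}$ forces the limit to vanish; if $\ell = 2k$, the product tends to $\epsilon(v_1^*)\cdots\epsilon(v_{2k}^*)$, i.e.\ left exterior multiplication by the $2k$-form $v_1^* \wedge \cdots \wedge v_{2k}^*$. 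Since every $A \in \Clifford_{2k-1}$ is a linear combination of monomials of length at most $2k-1$, this already proves the vanishing on $\Clifford_{2k-1}$; and since every $A \in \Clifford_{2k}$ is a combination of monomials of length at most $2k$, the limit on $\Clifford_{2k}$ is multiplication by some $2k$-form.

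The last step is to identify that $2k$-form with $\rho_k(A)$. Applying $c_\Lambda(v_1^* \cdots v_\ell^*) = \prod_j\bigl(\epsilon(v_j^*) - \iota(v_j)\bigr)$ to $1$ and expanding from the right, every summand containing an $\iota(v_j)$ that eventually hits $1$ either annihilates it or produces a form of degree strictly less than $\ell$; so the top-degree component of $c_\Lambda(v_1^* \cdots v_\ell^*)\cdot 1$ is exactly $v_1^* \wedge \cdots \wedge v_\ell^*$. Monomial by monomial this matches the limit computed above, and summing over any decomposition of $A \in \Clifford_{2k}$ into monomials of length $\leq 2k$ completes the identification; the resulting map $A + \Clifford_{2k-1} \mapsto \rho_k(A)$ is the standard symbol isomorphism $\Clifford_{2k}/\Clifford_{2k-1} \cong \Lambda^{2k} T^*_{x_0}M$.

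There is no serious obstacle: the whole argument is a finite-dimensional polynomial identity in $r$, and the only subtlety is that Clifford monomial presentations are not unique. This is harmless because both the rescaled limit and $\rho_k(A)$ are defined intrinsically in terms of $A$, so any convenient decomposition into monomials of length $\leq 2k$ (which exists by definition of the filtration) delivers the conclusion.
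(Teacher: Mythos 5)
Your proof is correct and takes essentially the same approach as the paper's: both rest on the observation that conjugation by $\psi_r$ rescales a degree-$d$ homogeneous component of $c_\Lambda(A)$ by $r^{-d}$, so after multiplying by $r^{2k}$ the small-$r$ limit projects onto the top degree-$2k$ part, which is multiplication by $\rho_k(A)$. Your version makes the degree counting explicit at the monomial level via $c_\Lambda(v^*)=\epsilon(v^*)-\iota(v)$ and $\psi_r^{-1}c_\Lambda(v^*)\psi_r=r^{-1}\epsilon(v^*)-r\iota(v)$, which is a slightly more granular (and, if anything, cleaner) presentation of the same degree-counting argument the paper gives for the operator $c_\Lambda(A)$ as a whole.
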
 
%%%%%%%%%
%%%%%%%%%

%%%%%%%%%
%%%%%%%%%
\begin{proof}
Conjugation by $\psi_r$ on $\End\parens{\Lambda T^*_{x_0} M}$
multiplies homogeneous operators of degree $k$ on $\Lambda T^*_{x_0}
M$ by $r^{-2k}$.  Since $c_\Lambda(A)$ is a sum of maps on forms of
homogeneous degrees
ranging from $0$ to $2k$, the small $r$ limit will project onto the
degree $2k$ component.  In fact, this projection acts as
multiplication by $\rho_k(A)$. In particular it is zero on
$\Clifford_{2k-1}$ and sends $v_1^* v_2^* \cdots v_{2k}^*$ to $v_1^*
\wedge v_2^* \wedge \cdots \wedge v_{2k}^*$, proving the second
sentence.
\end{proof}
%%%%%%%%%
%%%%%%%%%

The following is what Berline, Getzler and Vergne call the local
version of the Atiyah-Singer index theorem for a general Dirac
operator.  The index theorem follows directly from this as in
\cite{BGV04}. 

%%%%%%%%%
%%%%%%%%%
\begin{theorem} \label{th:asit}
If $\dirac$ is a Dirac operator on a Clifford bundle $\VV$ over a
smooth, compact,  
oriented Riemannian manifold $(M,g)$ of dimension $m,$ and $x_0 \in M,$ then
the diagonal of the heat kernel $K^\infty_{\dirac^2}(x_0,x_0;t)$ of $\dirac^2$
 is asymptotic to a Laurent
series in $t$ of the form \[P(t) = \sum_{k = 0}^{\infty} A_k t^{k
  -m/2},\]
with $A_k \in \Clifford_{2k}\tensor \End(\tspinor)$. Writing $\rho\parens{P(t)}=  \sum_{k = 0}^{m/2} \rho_k(A_k) t^{k
  -m/2}$ for $\rho_k$ as in Lemma~\ref{lm:psi},
\[
\rho\parens{P(t)} =
(2\pi t)^{-m/2} 
{\det}^{1/2}\parens{\frac{t\Rlimit/4}{\sinh(t\Rlimit/4)}} e^{-t\Flimit/2} \label{eq:asit}
\]
where $\Rlimit$ and $\Flimit$ are the curvature forms at $x_0$ as in
the definition of $K_0$ in
Eq.~\ref{eq:k0-def}.
\end{theorem}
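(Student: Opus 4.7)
The plan is to carry out the Getzler rescaling argument, made rigorous by the uniform convergence theorems of Sec.~\ref{sc:limit}. By Prop.~\ref{pr:local} and Eq.~\eqref{eq:dirac-local}, the difference $K^\infty_{\dirac^2}(x_0,x_0;t) - K_1^\infty(0,0;t)$ is $\OO(e^{-d_1^2/t})$ and therefore transparent to any Laurent asymptotic expansion, so it suffices to work with $K_1^\infty$. The existence of a short-time asymptotic expansion
\[
K_1^\infty(0,0;s) = \sum_{k=0}^{N} A_k s^{k-m/2} + R_N(s), \qquad R_N(s) = \OO\parens{s^{N+1-m/2}},
\]
with $A_k \in \End(\VV_{x_0}) \iso \Clifford \tensor \End(\tspinor)$, is a standard property of heat kernels of elliptic operators.

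The core of the argument is to apply the rescaling $\Phi_r$ to this expansion. By Prop.~\ref{pr:kr}, $K_r^\infty(0,0;t) = r^m \psi_r^{-1} K_1^\infty(0,0;r^2 t) \psi_r$, and substituting yields
\[
K_r^\infty(0,0;t) = \sum_{k=0}^{N} r^{2k} \psi_r^{-1} A_k \psi_r \, t^{k-m/2} + r^m \psi_r^{-1} R_N(r^2 t) \psi_r .
\]
Because conjugation by $\psi_r$ on the finite-dimensional algebra $\End\parens{\Lambda T^*_{x_0} M} \tensor \End(\tspinor)$ introduces at worst an $r^{-m}$ factor, the remainder is $\OO\parens{r^{2(N+1)-m} t^{N+1-m/2}}$ and vanishes as $r \to 0$ once $N$ is chosen sufficiently large.

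Independently, I would prove $\lim_{r \to 0} K_r^\infty(0,0;t) = K_0^\infty(0,0;t)$ pointwise by interchanging the $r\to 0$ limit with the fine-partition limit. For any partition $P$ of $t$,
\[
\abs{K_r^\infty(0,0;t) - K_0^\infty(0,0;t)} \leq \norm{K_r^\infty(t) - K_r^{*P}(t)}_\infty + \abs{K_r^{*P}(0,0;t) - K_0^{*P}(0,0;t)} + \norm{K_0^{*P}(t) - K_0^\infty(t)}_\infty.
\]
Since by Prop.~\ref{pr:kr} the approximate-semigroup constants $(B,C,D,T)$ for $K_r$ are independent of $r\in[0,1]$, Eq.~\eqref{eq:kinf-est} of Thm.~\ref{th:kinf} makes the first and third terms uniformly small in $r$ as $|P|\to 0$, while Prop.~\ref{pr:rto0} sends the middle term to zero for each fixed $P$. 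Choosing $|P|$ small first, then letting $r \to 0$, gives the claim.

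Combining the two viewpoints, the limit as $r\to 0$ of the rescaled expansion must exist and equal the explicit formula of Eq.~\eqref{eq:k0-form}. By Lemma~\ref{lm:psi}, $r^{2k} \psi_r^{-1} A_k \psi_r$ can have a finite limit only when the Clifford filtration of $A_k$ does not exceed $2k$—any excess would produce a divergence in $r$ that the finite total limit forbids—so $A_k \in \Clifford_{2k} \tensor \End(\tspinor)$, and in that case the limit equals $\rho_k(A_k)$. Term-by-term passage to the limit then identifies
\[
\rho\parens{P(t)} = \sum_{k=0}^{m/2} \rho_k(A_k)\, t^{k-m/2} = K_0^\infty(0,0;t) = (2\pi t)^{-m/2}\, {\det}^{1/2}\!\parens{\frac{t\Rlimit/4}{\sinh(t\Rlimit/4)}}\, e^{-t\Flimit/2},
\]
which is the claimed identity. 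The main obstacle is this double-limit interchange: the rescaling parameter, the fine-partition refinement, and the truncated asymptotic expansion must all be passed to the limit compatibly, and the whole argument hinges on the $r$-uniformity of the constants guaranteed by Prop.~\ref{pr:kr}.
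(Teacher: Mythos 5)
Your proposal is substantially correct and follows the same architecture as the paper's proof: reduce to $K_1^\infty$ via Eq.~\eqref{eq:dirac-local}, pass through the rescaling $\Phi_r$, identify the $r\to 0$ limit with the explicit $K_0^\infty$ of Prop.~\ref{pr:k0} via the same $\epsilon/3$ three-term interchange (whose feasibility rests precisely, as you note, on the $r$-uniform approximate-semigroup constants of Prop.~\ref{pr:kr}), and read off the degree-$2k$ components using Lemma~\ref{lm:psi}.

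The one place where you genuinely diverge is in establishing $A_k \in \Clifford_{2k} \tensor \End(\tspinor)$. The paper takes this as a structural fact about the heat kernel expansion of $\dirac^2$, citing the form of $\dirac$ and $\dirac^2$ in Eqs.~\eqref{eq:dirac-def} and~\eqref{eq:dirac-squared}. You instead derive it ``backwards'' from the finiteness of the $r\to 0$ limit. That is a legitimate and in some ways more self-contained alternative, but your stated reason---``any excess would produce a divergence in $r$ that the finite total limit forbids''---is too quick as written: at a single fixed $t$, a priori the divergent pieces coming from different $k$ could cancel. The argument does go through, but only after noting that the Laurent coefficient of each negative power of $r$ is a linear combination of $t^{k-m/2}$ for distinct $k$ with $t$-independent matrix coefficients; linear independence of these powers of $t$ (or letting $t$ vary) forces each degree-excess component of $c_\Lambda(A_k)$ to vanish separately. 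You should make that step explicit.

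Two smaller points. First, in the $\epsilon/3$ interchange you attribute the smallness of $\lVert K_0^{*P}(t)-K_0^\infty(t)\rVert_\infty$ to Eq.~\eqref{eq:kinf-est}, but Prop.~\ref{pr:kr} only guarantees uniform approximate-semigroup constants for $0<r<1$, not at $r=0$; the paper instead invokes Prop.~\ref{pr:k0} directly for this term (which costs nothing, since the term has no $r$-dependence). Second, the existence of the short-time Laurent expansion is not quite ``standard'' in the sense of being elementary; the paper cites \cite{BGV04} Thm.~2.30 for it, and you should do the same. With those repairs the proposal is a correct proof by essentially the paper's method.
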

%%%%%%%%%
%%%%%%%%%

%%%%%%%%%
%%%%%%%%%
\begin{proof}
  By Eq.~\eqref{eq:dirac-local}, it suffices to prove the result for
  $K_1=K_{(\dirac_1)^2}$.
	Using \cite{BGV04}[Thm. 2.30],
        $K^\infty_{1}(0,0;t)= \sum_{i=0}^{(m+2)/2} A_i  t^{i-m/2} + \OO\parens{t}$ 
        for some $A_i \in \Clifford \tensor \End(\tspinor)$.
        Eqs.~\eqref{eq:dirac-def}
        and~\eqref{eq:dirac-squared} imply each $A_i$ is of degree $2i$ in the
        Clifford filtration, so $c_\Lambda(A_i)$ is 
        of degree at most $2i$  as an
        element of $\End(\Lambda T^*_{x_0}M)
        \tensor \End(\tspinor)$.
        
        By Lemma~\ref{lm:psi}, continuing to write $K_1$ for both the  
        kernel and its image under $c_\Lambda$, 
	\[\lim_{r \to 0} \Phi_r\sbrace{K_1^\infty}(0,0;t)=  \lim_{r
          \to 0}  \sum_{i=0}^{m/2}			
        \psi_r^{-1} c_\Lambda(A_i) \psi_r r^{2i}t^{i-m/2} =
        \rho\parens{P(t)}, \]    
since the last term in the above sum for $K^\infty_{1}(0,0;t)$ and the error term get taken to  $\lim_{r \to 0}\psi_r^{-1} c_\Lambda(A) \psi_r r^{m+2}$, which is $0$ for $A \in \Clifford  \tensor \End(\tspinor)$.
       Thus the theorem is equivalent in light of Prop.~\ref{pr:k0} to 
	\[\lim_{r \to 0} \Phi_r[K_1^\infty](0,0;t)= \lim_{\abs{P} \to
          0} K_0^{*P}(0,0;t).\] 
	
	To prove this statement, fix $t>0$.  
        Given $\epsilon$, choose $P$ so that both
	\begin{align*}
		\abs{K_0^{*P}(0,0;t)- \lim_{\abs{P} \to 0} K_0^{*P}(0,0;t)}& <\epsilon/3,\\
		\abs{K_r^{*P}(0,0;t) - \Phi_r[K_1^\infty](0,0;t)} &< \epsilon/3
	\end{align*}
	for all $0<r\leq 1$, where the second estimate follows from Thm.~\ref{th:kinf} and the uniformity of the constants for the $K_r$.
        For that $P$ choose $r$ so small by
        Prop.~\ref{pr:rto0} that  
	\[\abs{K_r^{*P}(0,0;t)-K_0^{*P}(0,0;t)} < \epsilon/3\]
	by the pointwise convergence of $K_r$.  
\end{proof}
%%%%%%%%%
%%%%%%%%%

%%%%%%%%%
%%%%%%%%%
\begin{remark}
Roughly speaking, this proof implements the steepest descent
approximation (which is the imaginary-time version of stationary
phase) of the rigorous path integral, for 
the leading terms.
Recall that steepest descent approximates $\int
e^{\phi(x)/\epsilon} dx$ by expanding $\phi$ in a Taylor series about
a critical point, and rescaling $x$ by $\sqrt{\epsilon}$. Choosing to throw away all terms of positive power in $\epsilon$ replaces
$\phi$ by a quadratic approximation $\phi_q$, and the 
approximation to the integral is $\int e^{\phi_q(x)/\epsilon} dx$.
Applying this reasoning heuristically to the path integral, using
$\hbar$ as the parameter, results  in a Lagrangian  with, in general, harmonic
oscillator and linear magnetic terms. Various standard approaches,
including Wiener measure, apply to evaluate this path integral with a
purely quadratic exponent, giving what is
termed the ``semiclassical approximation''.

Lemma~\ref{lm:krlim}, and Prop.~\ref{pr:rto0} give a rigorous version
of this argument, except with $r$ as the small parameter instead of
$\hbar$. Moreover, the rescaling here involves both space and
the Clifford bundle, and the expansion is about the constant path.
Prop.~\ref{pr:k0} rigorously defines the path integral with the
quadratic action by time-slicing and the fine partition limit.  The
interchange of the small-$r$ and fine-partition limits 
concluding the proof of the theorem above thus provides, in
this sense,  a rigorous
proof  of the leading terms of the steepest descent approximation for this nontrivial
path integral.
\end{remark} 
%%%%%%%%%
%%%%%%%%%

\section{Conclusion}
The argument culminating in Thm.~\ref{th:asit} is a direct translation of the heuristic path
integral proof of the Atiyah-Singer index theorem for the twisted
Dirac operator into rigorous mathematics. Prop.~\ref{pr:pathint}
and Thm.~\ref{th:heat-kernel} provide a rigorous version of the relevant time-sliced path
integral for
each of
 a set of theories including twisted SUSYQM;
as expected, the path integral agrees with the heat kernel. In fact, this
gives a new construction of the heat kernel.
 That the steepest
descent approximation $K_0^\infty$ to the path integral indeed gives its
asymptotic behavior on  the diagonal is the crux of Thm.~\ref{th:asit}. 
The explicit calculation in Prop.~\ref{pr:k0} 
of
$K_0^\infty$ thus gives the asymptotic behavior of the heat kernel and
with it the index theorem.

With the appropriate choice of twisting bundle, namely $\tspinor$
 being the
dual spinor bundle, Thm.~\ref{th:asit} also
gives the Gauss-Bonnet-Chern theorem, which was the subject of the
authors' recent work~\cite{FS14}. Friedan and Windey~\cite{FW84}
prove this by explicitly reducing 
$K_0^\infty$ for this case to the Pfaffian expression of the
Chern form.

Of course the  Laplace-Beltrami operator on functions on $M$ is a generalized
Laplacian with trivial vector bundle $\VV$. In this special case,  the
limit of products of approximate 
kernels constructs the path
integral for ordinary (bosonic) quantum mechanics, and
Thm.~\ref{th:kgen}  shows this path integral computes the heat
kernel for the Laplace-Beltrami operator.
This implies results similar to those of Andersson and Driver
\cite{AD99}, though the convergence here is uniform rather than weak. 

The time-slicing approach to the path integral can
incorporate functions of paths. In fact, the given construction of the path
integral readily extends to
rigorously define the path integrals for $n$-point functions. The form
of the resulting expression
suggests the path integral of Prop.~\ref{pr:pathint} agrees with a
generalization of  Wiener
measure based on the heat kernel for generalized
Laplacians. 

It seems entirely plausible that Thm.~\ref{th:kgen} and
Prop.~\ref{pr:pathint} 
carry over to other sufficiently simple quantum theories. Indeed, this
was the authors' original motivation for constructing path integrals in
SUSYQM. For a first instance, the approach of one of the authors to
Yang-Mills on a Riemann surface~\cite{fine91} should easily combine
with the construction of the (bosonic) path integral given here to
provide a rigorous construction of the functional integral for the
expectation of certain classes of Wilson lines in that theory. 
In this and other cases,  the rigorous stationary phase argument
of Thm.~\ref{th:asit} may  apply, but with $\hbar$ as the
parameter. This would reproduce the semiclassical approximation or perhaps even the
full Feynman diagram expansion. This would be of particular interest
in cohomological field theories, where the stationary phase
approximation is exact. A rigorous
interpretation of the path integral, in which the semiclassical
approximation proves valid, would be the obvious starting point to
make rigorous several powerful path integral arguments in
cohomological field theories that yield interesting mathematical
results.

\bibliographystyle{alpha}
\def\cprime{$'$}

\end{document}